\def\mathclap#1{\text{\hbox to 0pt{\hss$\mathsurround=0pt#1$\hss}}}
\newcommand{\loc}{\mathrm{loc}}
\newcommand{\N}{\mathbb{N}}
\newcommand{\R}{\mathbb{R}}
\newcommand{\id}{\mathrm{id}}
\newcommand{\ux}{\underline{x}}
\newcommand{\uy}{\underline{y}}
\newcommand{\uz}{\underline{z}}
\newcommand{\Lg}{L_{\mathrm{gap}, e_\alpha}}
\newcommand{\gap}{\mathrm{gap}}
\newcommand{\rd}{\partial}
\begin{document}

\numberwithin{equation}{section}
\newtheorem{theorem}[equation]{Theorem}
\newtheorem{remark}[equation]{Remark}
\newtheorem{assumption}[equation]{Assumption}
\newtheorem{claim}[equation]{Claim}
\newtheorem{lemma}[equation]{Lemma}
\newtheorem{definition}[equation]{Definition}
\newtheorem{corollary}[equation]{Corollary}
\newtheorem{proposition}[equation]{Proposition}
\newtheorem*{theorem*}{Theorem}
\newtheorem{conjecture}[equation]{Conjecture}
\newtheorem{example}[equation]{Example}

\setcounter{tocdepth}{2}

\title{Uniqueness and non-uniqueness results for spacetime extensions}
\author{Jan Sbierski\thanks{School of Mathematics, 
University of Edinburgh,
James Clerk Maxwell Building,
Peter Guthrie Tait Road, 
Edinburgh, 
EH9 3FD,
United Kingdom; email: Jan.Sbierski@ed.ac.uk}}
\date{\today}

\maketitle

\begin{abstract}
Given a function $f : A \to \R^n$ of a certain regularity defined on some open subset $A \subseteq \R^m$, it is  a classical problem of analysis to investigate whether the function can be extended to all of $\R^m$ in a certain regularity class. If an extension exists and is continuous, then certainly it is uniquely determined on the closure of $A$. A similar problem arises in general relativity for Lorentzian manifolds instead of functions on $\R^m$. It is well-known, however, that even if the extension of a Lorentzian manifold $(M,g)$ is analytic, various choices are in general possible at the boundary. This paper establishes a uniqueness condition for extensions of globally hyperbolic Lorentzian manifolds $(M,g)$ with a focus on low regularities: any two extensions which are \emph{anchored} by an inextendible causal curve $\gamma : [-1,0) \to M$ in the sense that $\gamma$ has limit points in both extensions, must agree locally around those limit points on the boundary \emph{as long as the extensions are at least locally Lipschitz continuous}. We also show that this is sharp: anchored extensions which are only H\"older continuous do in general not enjoy this local uniqueness result.
\end{abstract}

\tableofcontents

\section{Introduction}

A classical problem in analysis is the extension problem: given a subset $A \subseteq \R^m$ and a function $f : A \to \R^n$, can one find a function $F : \R^m \to \R^n$ with $F|_A = f$? The answer of course depends on the regularity of $f$, the desired regularity of the extension $F$, and the structure of the set $A \subseteq \R^m$.\footnote{The literature on this subject is of course vast. As an illustration of the type of results available let us refer the reader to Chapter VI in \cite{Stein70}.} However, it is immediate that if the extension $F$ is continuous, then it is uniquely determined on the closure $\overline{A}$ of $A$. In other words, the extension of $f$ to the boundary $\rd A$ of $A$ is the same among all continuous extensions.

In Einstein's theory of general relativity the following fundamental question arises: given a Lorentzian manifold $(M,g)$ which solves the, say, vacuum Einstein equations $\mathrm{Ric}(g) = 0$, can it be extended as a (weak) solution of the vacuum Einstein equations? As a first step in approaching this question one can drop the requirement of $(M,g)$ satisfying  the (weak) vacuum  Einstein equations and just ask whether one can extend $(M,g)$ in a certain regularity class. The definition here is the following: Let $\Gamma$ be a regularity class, for example $\Gamma = C^\infty, C^{0,1}_{\loc}$, etc. The roughest regularity we consider is $\Gamma = C^0$. A \textbf{$\Gamma$-extension} of a Lorentzian manifold $(M,g)$ consists of an isometric embedding $\iota : M \hookrightarrow \tilde{M}$ of $(M,g)$ into a Lorentzian manifold $(\tilde{M}, \tilde{g})$ of the same dimension as $M$ where $\tilde{g}$ is $\Gamma$-regular and such that $\rd \iota(M) \subseteq \tilde{M}$ is non-empty.\footnote{All manifolds considered in this paper are assumed to be smooth. See also Definition \ref{DefExtension} and Footnote \ref{FootnoteSmooth}.} One crucial difference between the extension problem for Lorentzian manifolds and that for functions on $\R^m$ is of course that in the former case no fixed background space (analogue of $\R^m$) is given and the manifold on which the metric is being extended has to be constructed at the same time as the metric itself. This makes the extension problem for Lorentzian manifolds much more subtle -- hardly any results are available. Another well-known manifestation of the absence of a fixed background is that extensions, even if they are analytic, are, in general, not uniquely determined on the boundary $\rd\iota(M) \subseteq \tilde{M}$ of $M$. This is in stark contrast to the extension problem for functions on $\R^m$ as discussed earlier. Before we present the results obtained in this paper, let us make precise in the next section the notion of `extensions of Lorentzian manifolds being the same on the boundary'.

\subsection{The problem of the uniqueness of extensions of spacetimes}

Let $(M,g)$ be a Lorentzian manifold and let $\tilde{\iota} : M \hookrightarrow \tilde{M}$ and $\hat{\iota} : M \hookrightarrow \hat{M}$ be two extensions of $(M,g)$. We define the identification map $\id$ by 
\begin{equation*}
\id := \hat{\iota} \circ \tilde{\iota}^{-1}|_{\tilde{\iota}(M)} : \tilde{\iota}(M) \to \hat{\iota}(M)
\end{equation*}
which identifies the image of $M$ in $ \tilde{M}$ with that of $M$ in $\hat{M}$. Clearly, $\id$ is an isometry. One might call the two extensions equivalent up to the boundary if $\id$ extends (sufficiently smoothly) to a diffeomorphism $\overline{\id} : \overline{\tilde{\iota}(M)} \to \overline{\hat{\iota}(M)}$.

This is a global definition. It does not only take into account how the geometry is extended through the boundary, but also how the boundary is globally attached to the manifold. This point is best illustrated by Figure \ref{FigGlobExt}.
\begin{figure}
\centering
\begin{minipage}{0.8\textwidth}
  \centering
  \def\svgwidth{8cm}
   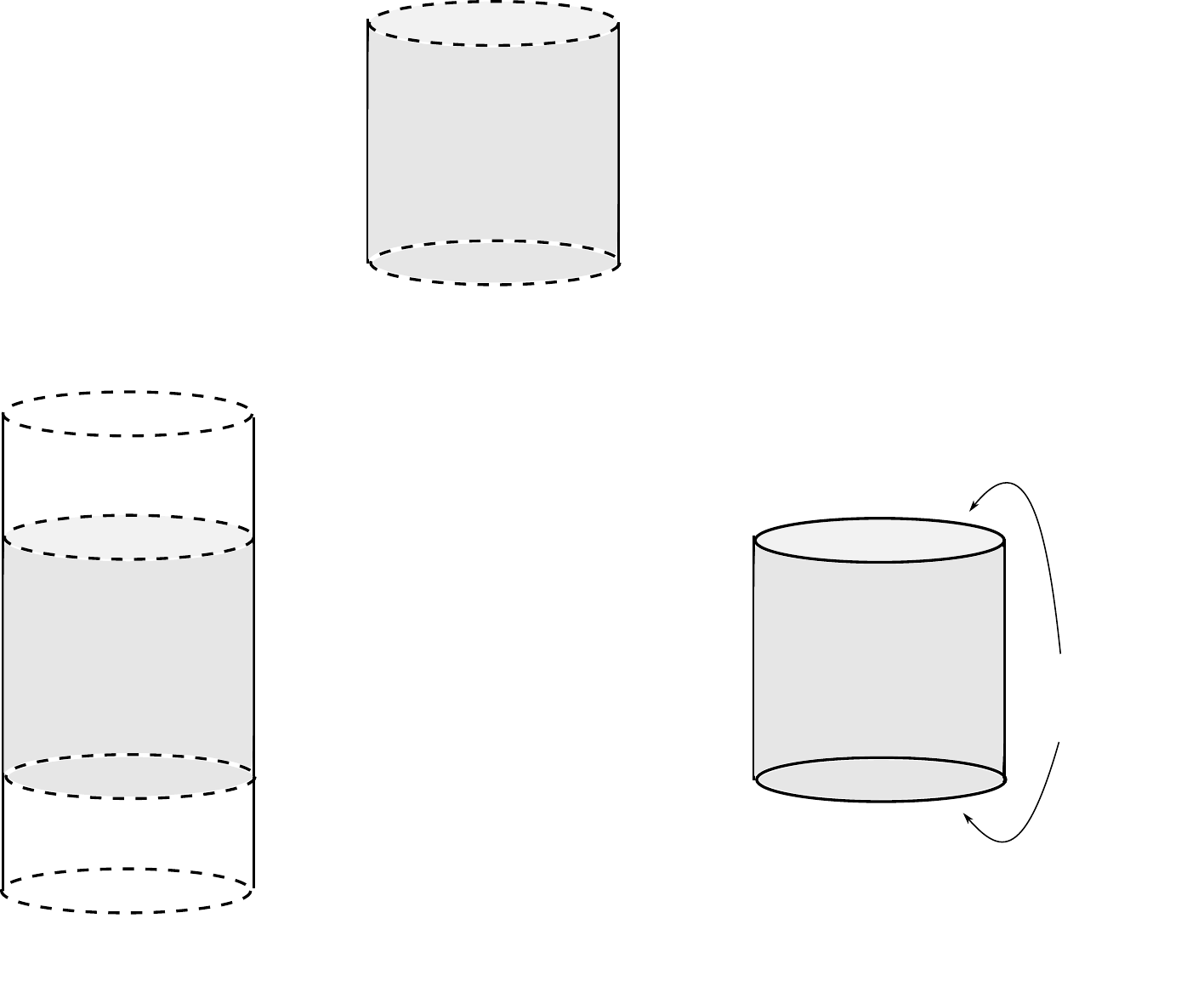 
      \caption{Here $M = (-1,1) \times \mathbb{S}^1$ with canonical $(t,\varphi)$-coordinates and metric $g = - dt^2 + d \varphi^2$. The first extension is $\tilde{M} = \R \times \mathbb{S}^1$ and the second is $\hat{M} = \big([-1,1] \times \mathbb{S}^1\big)/_\sim$ with $(-1,\varphi) \sim (1, \varphi)$ for all $\varphi \in \mathbb{S}^1$. The metrics $\tilde{g}$ and $\hat{g}$ take the same coordinate form as that of $g$ and the isometric embeddings $\tilde{\iota}$ and $\hat{\iota}$ are the obvious ones.} \label{FigGlobExt}
\end{minipage}
\end{figure}
Note that $\overline{\tilde{\iota}(M)}$ is a manifold with boundary, while $\overline{\hat{\iota}(M)}$ is a manifold without a boundary. Clearly, these two extensions are not equivalent according to the above definition. However, there is good reason to consider them to be locally equivalent: there is a neighbourhood $\tilde{V} \subseteq \tilde{M}$ of $\tilde{p}$ and a neighbourhood $\hat{V} \subseteq \hat{M}$ of $\hat{p}$ which are isometric. 

This paper focuses on the local aspects of spacetime extensions. In order to disentangle the local from the global issues we proceed as follows: as a first step we only consider extensions $\iota : M \hookrightarrow \tilde{M}$ of $d+1$-dimensional Lorentzian manifolds $(M,g)$ where $(M,g)$ is \emph{time-oriented and globally hyperbolic}. The extension itself is not assumed to be time-oriented or globally hyperbolic, but at least of regularity $C^0$. The \textbf{future boundary} \label{DefTextFB} $\rd^+\iota(M) \subseteq \tilde{M}$ of $M$ in $\tilde{M}$ consists of all those boundary points $\tilde{p}$ in $\rd \iota(M)$ for which there exists a future directed and future inextendible timelike curve $\gamma : [-1,0) \to M$ such that $\tilde{\gamma} := \iota \circ \gamma$ extends to $\tilde{p} = \tilde{\gamma}(0)$ as a \emph{smooth} timelike curve. One can now show that the future boundary of globally hyperbolic Lorentzian manifolds is locally a Lipschitz hypersurface in the following sense: around every $\tilde{p} \in \rd^+\iota(M)$, which is defined by some future directed and future inextendible timelike curve $\gamma$ in $M$, there exists a chart $\tilde{\varphi} : \tilde{U} \to (-\varepsilon_0, \varepsilon_0) \times (-\varepsilon_1, \varepsilon_1)^d$ with coordinates $(x_0, x_1, \ldots, x_d) = (x_0, \ux)$ such that 1) the metric $\tilde{g}_{\mu \nu}$ is $C^0$-close to the Minkowski metric in these coordinates, 2) $\tilde{p}$ lies at the coordinate origin, 3) there exists a Lipschitz continuous function $f : (-\varepsilon_1, \varepsilon_1)^d \to (-\varepsilon_0, \varepsilon_0)$ such that all points below the graph of $f$, i.e., $\tilde{U}_< := \{(x_0, \ux) \in \tilde{U} \; | \; x_0 < f(\ux)\}$, lie in $\iota(M)$ while all points on the graph of $f$ lie in $\rd^+\iota(M)$. Points above the graph of $f$ may or may not lie in $\iota(M)$ -- both cases are possible, cf.\ Figure \ref{FigGlobExt}.\footnote{The example in Figure 1 in \cite{Sbie18} shows that even another branch of $\rd^+\iota(M)$ can meet $\mathrm{graph} f$ from above, showing that  in general $\rd^+ \iota(M)$ is not a Lipschitz hypersurface.} And moreover, the timelike curve $\gamma$, which asymptotes to and defines $\tilde{p}$, is eventually contained in $\tilde{U}_<$.  We call such a chart $\tilde{\varphi} : \tilde{U} \to (-\varepsilon_0, \varepsilon_0) \times (-\varepsilon_1, \varepsilon_1)^d$ a \textbf{future boundary chart}.

Given now two extensions $\tilde{\iota} : M \hookrightarrow \tilde{M}$ and $\hat{\iota} : M \hookrightarrow \hat{M}$ of a time-oriented and globally hyperbolic Lorentzian manifold $(M,g)$ and future boundary points $\tilde{p} \in \rd \tilde{\iota}(M)$ and $\hat{p} \in \rd \hat{\iota}(M)$ with future boundary charts $\tilde{U}$ and $\hat{U}$, then the idea is to consider two extensions as locally equivalent up to the boundary around $\tilde{p}$ and $\hat{p}$ if there exist open neighbourhoods $\tilde{W} \subseteq \tilde{U}$ of $\tilde{p}$ and $\hat{W} \subseteq \hat{U}$ of $\hat{p}$ such that the identification map \underline{restricted} to $\tilde{W}_< := \tilde{W} \cap \tilde{U}_<$ maps $\tilde{W}_<$ bijectively onto $\hat{W}_< := \hat{W} \cap \hat{U}_<$ and extends as a sufficiently smooth diffeomorphism to $\id|_{\tilde{W}_\leq} : \tilde{W}_\leq \to \hat{W}_\leq$, where $\tilde{W}_\leq := \tilde{W} \cap ( \tilde{U}_< \cup \mathrm{graph} f)$ -- and similarly for $\hat{W}_\leq$.\footnote{Of course, without imposing further equations/conditions on the extensions one can only hope to obtain uniqueness up to the boundary, but not beyond. } 

It is important here to recall the classical example of the Taub-NUT spacetime -- or the simpler Misner spacetime \cite{Mis67} -- which shows that there are time-oriented and globally hyperbolic  Lorentzian manifolds $(M,g)$ which have analytic extensions $\tilde{\iota} : M \hookrightarrow \tilde{M}$ and $\hat{\iota} : M \hookrightarrow \hat{M}$, which cannot be enlarged, but at the same time are not locally equivalent around any pair of future boundary points. The underlying reason for this behaviour is clearly laid out in \cite{Chrus10}: using our terminology, let $\tilde{p}$ be any future boundary point in $\rd^+ \tilde{\iota}(M)$ which is the smooth endpoint of a future directed and future inextendible timelike curve $\gamma$ in $M$. Then the curve $\hat{\iota} \circ \gamma$ does not have an end point in $\hat{M}$, i.e., it does not define a future boundary point in $\rd^+ \hat{\iota}(M)$. But at the same time an end point cannot be added to the extension $\hat{M}$ without making the manifold non-Hausdorff. The same applies to future boundary points $\hat{p}$ in $\rd^+ \hat{\iota}(M)$. The two extensions `extend through two different future boundaries' in the sense that they terminate different (but maximal) sets of timelike curves.  We refer the reader to \cite{Chrus10} for more details.

This example highlights the fact that one can only hope to establish local equivalence of two extensions $\tilde{\iota} : M \hookrightarrow \tilde{M}$ and $\hat{\iota} : M \hookrightarrow \hat{M}$ around future boundary points $\tilde{p} \in \rd^+ \tilde{\iota}(M)$ and $\hat{p} \in \rd^+ \hat{\iota}(M)$ if there is a future directed  and future inextendible timelike curve $\gamma : [-1,0) \to M$  which terminates at $\tilde{p}$ in $\tilde{M}$ and at $\hat{p}$ in $\hat{M}$. We also say that in this case the two extensions are \textbf{anchored} by the curve $\gamma$. Our main result in this paper shows that this condition is also sufficient as long as the two extensions are locally Lipschitz continuous.

\begin{figure}[h]
\centering
\begin{minipage}{0.9\textwidth}
  \centering
  \def\svgwidth{8cm}
   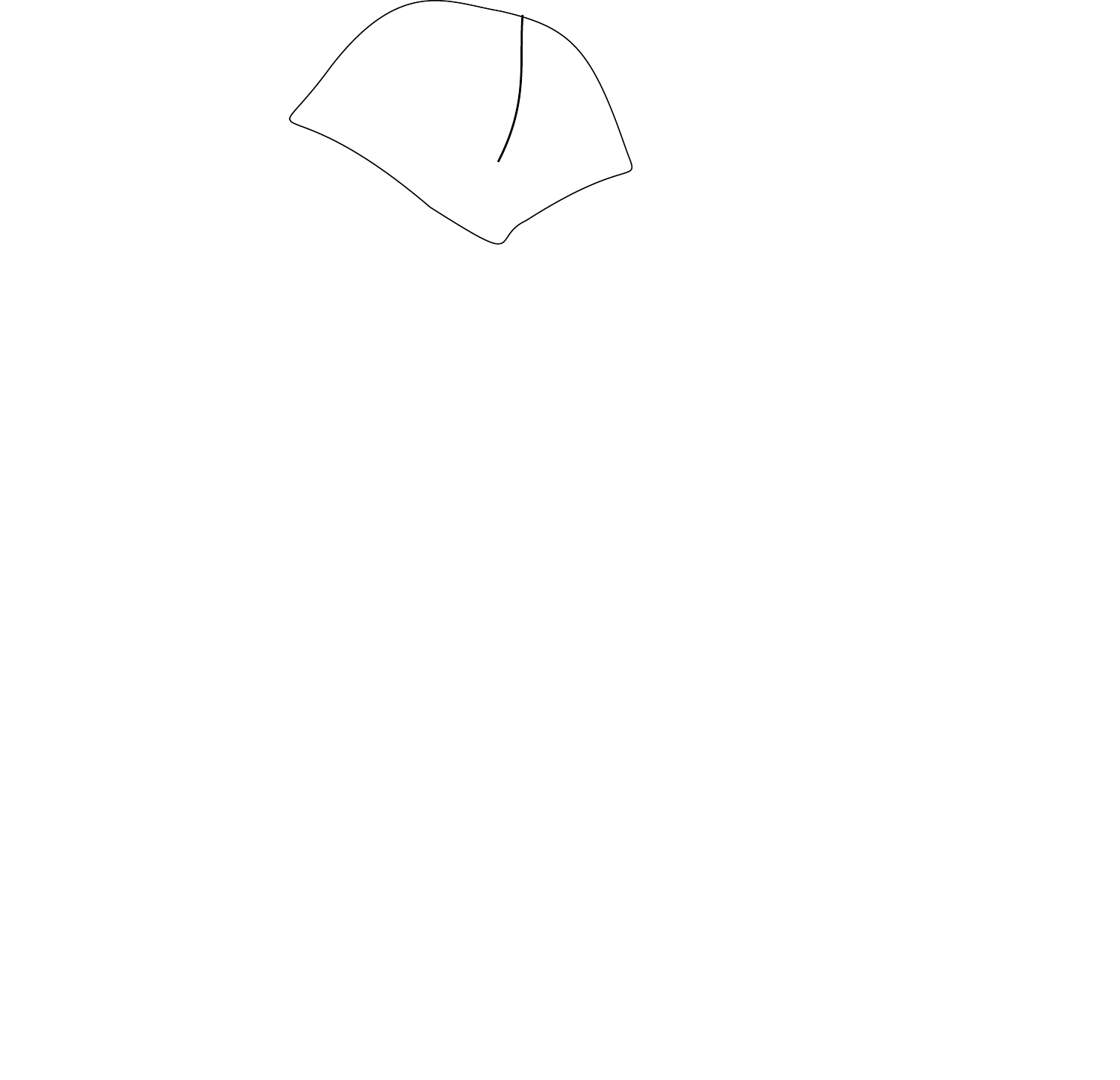 
      \caption{Recall from Figure \ref{FigGlobExt} that $\tilde{\iota}(M)$ may also lie above the graph of $f$ in the future boundary chart $\tilde{U}$ -- and similarly for the hatted extension. To keep the diagram simple this possibility is not depicted here. } \label{FigIdExt}
\end{minipage}
\end{figure}

\subsection{Main results and comparison to a result by Chru\'sciel}

We establish the following result, see Theorem \ref{ThmAnchUni} in Section \ref{SecC1}.
\begin{theorem}\label{ThmIntro}
Let $(M,g)$ be a time-oriented and globally hyperbolic Lorentzian manifold with $g \in C^1$ and let $\tilde{\iota} : M \hookrightarrow \tilde{M}$ and $\hat{\iota} : M \hookrightarrow \hat{M}$ be two $C^{0,1}_{\loc}$-extensions. Let $\gamma : [-1,0) \to M$ be a future directed and future inextendible causal $C^1$-curve in $M$ such that $\lim_{s \to 0} ( \tilde{\iota} \circ \gamma)(s) =: \tilde{p}$ and $\lim_{s \to 0}(\hat{\iota} \circ \gamma)(s) =: \hat{p}$ exist in $\tilde{M}$, $\hat{M}$, respectively. 

Then there exist future boundary charts $\tilde{\varphi} : \tilde{U} \to (-\tilde{\varepsilon}_0, \tilde{\varepsilon}_0) \times (-\tilde{\varepsilon}_1, \tilde{\varepsilon}_1)^d$ around $\tilde{p}$ and $\hat{\varphi} : \hat{U} \to (-\hat{\varepsilon}_0, \hat{\varepsilon}_0) \times (-\hat{\varepsilon}_1, \hat{\varepsilon}_1)^d$ around $\hat{p}$, with $\tilde{\iota} \circ \gamma$ being ultimately contained in $\tilde{U}_<$ and $\hat{\iota} \circ \gamma$ being ultimately contained in $\hat{U}_<$, and neighbourhoods $\tilde{W} \subseteq \tilde{U}$ of $\tilde{p}$ and $\hat{W} \subseteq \hat{U}$ of $\hat{p}$ such that $$\id|_{\tilde{W}_<} : \tilde{W}_< \to \hat{W}_<$$ is a diffeomorphism and extends as a $C^{1,1}_{\loc}$-regular (isometric) diffeomorphism to $$\id|_{\tilde{W}_{\leq}} : \tilde{W}_\leq \to \hat{W}_\leq \;.$$
\end{theorem}
Part of the statement of the theorem is that it suffices for the anchoring curve $\gamma$ to be causal and to have a mere limit point in the two extensions -- i.e., it is not needed that it extends as  a smooth curve to both extensions. This is a subtle point, see also the discussion in Section \ref{SecFR}. 

Let us also remark that by `$\id$ extending as a $C^{1,1}_{\loc}$-regular diffeomorphism to $\tilde{W}_\leq$'  we mean, as is standard, that there is a $C^{1,1}_{\loc}$-regular diffeomorphism on $\tilde{W}$ which agrees with $\id$ on $\tilde{W}_<$. By continuity, it is then also an isometry on $\tilde{W}_\leq$. Higher regularity of the extension of $\id$ is obtained if the extensions $\tilde{\iota} : M \hookrightarrow \tilde{M}$ and $\hat{\iota} : M \hookrightarrow \hat{M}$ have higher regularity, see Lemma \ref{LemBoostReg}.

We also show in Section \ref{SecCounter} that the above theorem is sharp: we give an example of a smooth time-oriented and globally hyperbolic Lorentzian manifold which admits two anchored  H\"older continuous extensions with H\"older exponent $0 < \alpha <1$ arbitrarily close to $1$, but such that the identification map $\id$ does not extend as a $C^1$-regular map. This adds to the list of qualitative changes which can occur in Lorentzian geometry at the threshold of Lipschitz continuity of the metric: the push-up property may fail to hold and light cones may fail to be hypersurfaces \cite{ChrusGra12}, maximising causal curves may fail to have definite causal character \cite{SaeSt18}, \cite{GrafLing18}, and the future, when defined via locally Lipschitz curves, may fail to be open \cite{GraKuSaSt19}. Moreover, the example given in Section \ref{SecCounter} exemplifies the  non-triviality of the question whether given $C^0$-extensions, for example at the Cauchy horizon in the interior of perturbed Kerr back holes \cite{DafLuk17}, are unique.
\newline

To the best of the author's knowledge, the literature result most pertinent to our Theorem \ref{ThmIntro} is found in the beautiful paper \cite{Chrus10} by Chru\'sciel\footnote{For more related references we refer the interested reader to the introduction of \cite{Chrus10}.}:
\begin{theorem}[Theorem 3.1 in \cite{Chrus10}] \label{ThmCh}
Let $M$ be a smooth $d+1$-manifold, $d \geq 2$, and $g_{ab}$ a smooth Lorentz-signature metric thereon. Let $(\hat{M}, \hat{g}_{ab})$ be a smooth $d+1$-manifold with boundary with smooth Lorentz-signature metric, and let $M \overset{\psi}{\rightarrow} \hat{M}$ be a smooth diffeomorphism into, such that $i)$ $\psi[M] = \hat{M} \setminus \rd \hat{M}$, and $ii)$ the $\psi$-image of $g_{ab}$ is conformal to $\hat{g}_{ab}$. Similarly for $\hat{M}'$, $\hat{g}'_{ab}$, and $\psi'$. Let $I \overset{\gamma}{\rightarrow} M$, where $I$ is an open interval in $\R$, be a directed null geodesic in $(M,g_{ab})$ such that $\hat{\gamma} = \psi \circ \gamma$ has future endpoint $p 
\in \hat{M}$ and $\hat{\gamma}' = \psi' \circ \gamma$ has future endpoint $p' \in \hat{M}'$. We also assume that $\hat{\gamma}$ meets $\rd \hat{M}$, and $\hat{\gamma}'$ meets $\rd[\hat{M}']$, transversally. Then there exist open neighbourhoods $U$ of $p$ in $\hat{M}$ and $U'$ of $p'$ in $\hat{M}'$, together with a smooth diffeomorphism $U \overset{\phi}{\rightarrow} U'$, such that $I)$ $\phi(p) = p'$, and $II)$ restricted to $U \cap \psi[M]$, $\psi' \circ \psi^{-1} = \phi$.
\end{theorem}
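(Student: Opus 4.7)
The plan is to build adapted double-null coordinate charts near $p$ in $\hat M$ and near $p'$ in $\hat M'$ that correspond under $\psi' \circ \psi^{-1}$ on the interior, and to take $\phi$ to be the coordinate identification between them.

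\emph{Chart near $p$.} Since $\hat\gamma'(0)$ is null and transverse to $T_p \partial \hat M$, the induced metric on $T_p \partial \hat M$ is Lorentzian, so one can pick a smooth spacelike $(d-1)$-submanifold $S \subset \partial \hat M$ through $p$ transverse to $\hat\gamma$. At each $q \in S$, the $\hat g$-orthogonal complement $(T_q S)^\perp$ in $T_q \hat M$ is a timelike $2$-plane with two null lines; smoothly select null vector fields $L, \underline L$ along $S$ spanning these lines, with $\hat g(L, \underline L) = -1$ and $L(p)$ a positive multiple of $-\hat\gamma'(0)$. After smoothly extending $\hat g$ across $\partial \hat M$ to a metric $\tilde g$ on an ambient open manifold $\tilde M$, shoot null $\tilde g$-geodesics from $S$ in direction $L$ (affine parameter $v$) to form a null hypersurface $\mathcal N^+$ containing $\hat\gamma$, then from each point of $\mathcal N^+$ shoot null $\tilde g$-geodesics in the direction of $\underline L$ parallel-transported along the $\mathcal N^+$-generators (affine parameter $u$). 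This produces a smooth double-null chart $(u, v, x^A)$ on a neighborhood $U \ni p$ in $\tilde M$, with $p$ at the origin, $\hat\gamma$ on the $v$-axis, and $\partial \hat M \cap U = \{F(u,v,x^A) = 0\}$ for a smooth $F$ with $dF(0) \neq 0$.

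\emph{Matched chart near $p'$ and definition of $\phi$.} The generators of $\mathcal N^+$ restricted to $\hat M^\circ$ are null $\hat g$-geodesics; their images under $\psi' \circ \psi^{-1}$ are null $\hat g'$-pregeodesics in $\hat M'^\circ$, which reparametrize as $\tilde g'$-affine geodesics in a smooth extension $\tilde M'$ of $\hat M'$. For the central generator, the resulting affine geodesic is $\hat\gamma'$ (up to reparametrization), which by hypothesis reaches $p' \in \partial \hat M'$ transversally. Smooth dependence of $\tilde g'$-geodesics on initial conditions together with the openness of transversal intersection (via the implicit function theorem applied to the defining function of $\partial \hat M'$) imply that nearby generators reach $\partial \hat M'$ transversally at points forming a smooth spacelike $(d-1)$-submanifold $S' \subset \partial \hat M'$ through $p'$. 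Running the analogous double-null construction from $S'$, with null frame $(L', \underline L')$ chosen so that the transported coordinate $v' \circ (\psi' \circ \psi^{-1})$ on $U \cap \hat M^\circ$ coincides with $v$ (and similarly for $u$, $x^A$), yields a smooth chart $(u', v', x'^A)$ on a neighborhood $U' \ni p'$ in $\tilde M'$. Define $\phi: U \to U'$ as the coordinate identity $(u, v, x^A) \mapsto (u', v', x'^A)$; then $\phi(p) = p'$, $\phi$ is smooth, and by the matching condition $\phi = \psi' \circ \psi^{-1}$ on $U \cap \psi[M]$, with $\phi^{-1}$ obtained by the symmetric construction.

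\emph{Main obstacle.} The essential technical point is the matching condition in the construction of $(u', v', x'^A)$: one must show that the null frame $(L', \underline L')$ on $S'$ can be rescaled so that the pulled-back coordinates extend to a smooth chart on a full neighborhood of $p'$ in $\tilde M'$, including across $\partial \hat M'$. The conformal factor relating $\psi^*\hat g$ and $(\psi')^*\hat g'$ on $M$ carries no a priori regularity up to the boundary, so the natural $\tilde g$-affine and $\tilde g'$-affine parametrizations of corresponding null generators are a priori related only by a smooth monotonic function on the open interior whose extension to the boundary is not obvious. To resolve this, I would use that both generator congruences are smooth in their respective smooth extensions $\tilde M$ and $\tilde M'$ up to and through the transversally intersected boundaries, and that the endpoint map $S \to S'$ (obtained above via the implicit function theorem) is a smooth diffeomorphism; a reference-point normalization on the central generator and propagation by the smooth ODE governing affine reparametrizations along nearby generators then pin down the correct smooth rescaling of $(L', \underline L')$. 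This smooth matching, combined with the uniqueness of null geodesics for given initial data, forces interior agreement of $\phi$ with $\psi' \circ \psi^{-1}$.
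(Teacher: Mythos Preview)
This theorem is quoted from \cite{Chrus10} and is not proved in the present paper; it appears only for comparison with Theorem~\ref{ThmIntro}. All the paper tells us about Chru\'sciel's argument is that it is ``null-geodesic-based'' and that the restriction $d \geq 2$ is sharp, since the conformal statement fails in $1{+}1$ dimensions (see the discussion following Theorem~\ref{ThmCh}). There is therefore no proof in this paper to compare your attempt against.

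On its own merits, your double-null outline is in the right spirit, but the resolution of your ``Main obstacle'' is a genuine gap rather than a detail. You correctly isolate the difficulty: the conformal factor relating the two interior metrics has no a priori boundary regularity, so the $\tilde g$-affine and $\tilde g'$-affine parameters along corresponding null generators are related by a reparametrisation that is smooth only on the interior. Your fix---normalise on the central generator and ``propagate by the smooth ODE governing affine reparametrisations''---begs the question, because that ODE has coefficients built from derivatives of the (interior-only) conformal factor along the generator; invoking smoothness of its solution up to $\partial\hat M'$ presupposes exactly the boundary regularity you are trying to establish. More decisively, nothing in your sketch makes essential use of the hypothesis $d \geq 2$: when $d=1$ your $S$ is a point and the $(u,v)$ construction still formally goes through, yet the theorem is false there. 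A correct argument must exploit the transverse null geometry available only for $d \geq 2$---for instance the $(d{-}1)$-parameter family of null geodesics on the light cone of an interior point, whose intersection with the boundary is conformally invariant as a point set---to manufacture a normalisation that is intrinsic to the conformal class rather than tied to a particular affine parametrisation. Your sketch does not supply such a mechanism, and the single scaling freedom in $(L',\underline L')$ on $S'$ is not enough to force both $u'$ and $v'$ to match their unprimed counterparts across the interior.
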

Although the above theorems are clearly related, there are the following important differences:
\begin{enumerate}
\item Theorem \ref{ThmCh} is stated in the smooth category. However, it generalises to Lorentzian manifolds for which the metric is at least $C^{1,1}_{\loc}$-regular (Remark 3.4  in \cite{Chrus10}). This is a sharp threshold for the null-geodesic-based proof developed in \cite{Chrus10}. One of the improvements of Theorem \ref{ThmIntro} is to lower the regularity requirements to $C^{0,1}_{\loc}$ -- which is the optimal threshold as shown by our example in Section \ref{SecCounter}. 
\item Theorem \ref{ThmCh} is more general than ours in the sense that it also shows the local uniqueness of the extension of the differentiable structure for \emph{conformal} extensions. Since the statement of Theorem \ref{ThmCh} for conformal extensions is false in dimension $1+1$ (see  Remark 3.2 in \cite{Chrus10}), Chru\'sciel's result is necessarily restricted to $d \geq 2$. Our result only considers isometric extensions and is also valid for $d=1$.
\item The set-up in \cite{Chrus10} is to investigate the uniqueness of the differentiable structure of (conformal) boundary extensions which are achieved by attaching a \underline{smooth} ($C^{k,1}_{\loc}$, $k \geq 2$) boundary to the original manifold\footnote{One of the motivations for this problem is for example the question whether the differentiable structure at future null infinity, defined via a conformal extension, is unique.}. However, note that this set-up is for example not applicable to the second extension in Figure \ref{FigGlobExt} and, more importantly, to extensions which have a rough boundary. Consider for example an open set in Minkowski spacetime which is bounded by a rough boundary. Is the obvious extension locally unique? This question is still open.

The set-up in this paper is motivated by the problem of understanding the possible extensions of the maximal globally hyperbolic development of initial data for the Einstein equations. Thus we only consider globally hyperbolic Lorentzian manifolds. As already discussed, the future boundary of a globally hyperbolic spacetime in an extension is, in a certain sense, a $C^{0,1}_{\loc}$ hypersurface. This improves on the $C^{2,1}_{\loc}$-differentiability assumption present in \cite{Chrus10}. However, let us emphasise that no differentiability assumption on the boundary has to be made in Theorem \ref{ThmIntro}, but it follows that the most general future boundary of a globally hyperbolic spacetime is locally Lipschitz regular.
\item The anchoring curve $\gamma$ in Theorem \ref{ThmIntro} does not have to be a null geodesic nor does it have to meet the boundary transversally. Indeed, it does not even have to be differentiable at the boundary.
\end{enumerate}

\subsection{Further results, outline of paper, and brief outline of proofs} \label{SecFR}

Section \ref{SecPrel} recalls the definitions of an extension and that of the future boundary as well as a basic structural result of the future boundary from \cite{Sbie18}. We continue in Section \ref{SecFuBound} with the investigation of the future boundary of globally hyperbolic and time-oriented Lorentzian manifolds $(M,g)$: we show in Proposition \ref{PropLeavingLipschitz} that one may relax the assumption on the timelikeness as well as on the smoothness of the extension of the curve $\gamma$, which defines a future boundary point (see page \pageref{DefTextFB} or Definition \ref{DefFutureBdry}), and only demand $\gamma$ to be causal and to have a continuous extension to the boundary without changing the defined set of future boundary points \emph{as long as the extension is at least locally Lipschitz continuous}. Example \ref{ExFuBound} shows in particular that this is no longer true if the extension is merely continuous. 

Proposition \ref{PropLeavingLipschitz} in particular enlarges the class of possible anchoring curves in Theorem \ref{ThmIntro}. At the same time it furnishes the first step in the proof of Theorem \ref{ThmIntro}: Let $(M,g)$ be a globally hyperbolic and time-oriented Lorentzian manifold and $\tilde{\iota} : M \hookrightarrow \tilde{M}$ and $\hat{\iota} : M \hookrightarrow \hat{M}$ be two locally Lipschitz extensions. Given a future boundary point $\tilde{p} \in \rd^+ \tilde{\iota}(M)$ by definition we can choose a future directed timelike curve $\gamma$ in $M$ which extends smoothly to $\tilde{p}$ in $\tilde{M}$. A priori it is conceivable that the curve $\gamma$ only has a continuous future end-point $\hat{p}$ in $\rd \hat{\iota}(M)$, but not a smooth one.\footnote{A posteriori Theorem \ref{ThmIntro} shows that it must also extend at least as a $C^{1,1}_{\loc}$-regular timelike curve.} Proposition \ref{PropLeavingLipschitz} now shows that even under this assumption $\hat{p}$ must be a future boundary point of $\hat{\iota}(M)$ and that we can choose future boundary charts $\tilde{U}$ around $\tilde{p}$ and $\hat{U}$ around $\hat{p}$. 

In the next step of the proof of Theorem \ref{ThmIntro} we want to go over from the arbitrary anchoring curve $\gamma$ to a specific, suitably chosen anchoring curve. The new curve can be chosen say in $\tilde{M}$, but does it still converge to $\hat{p}$ in $\hat{M}$?
More broadly this raises the question under what condition two future directed and future inextendible causal curves $\gamma_1$ and $\gamma_2$ in $M$ have the same limit points in $\tilde{M}$ (or $\hat{M}$). We answer this question completely in Section \ref{SecC0} by using a characterisation familiar from the b-boundary construction in terms of the existence of a sequence of connecting curves whose generalised affine length tends to zero (see Propositions \ref{PropBBound} and \ref{PropFindCurves} for the details). The concept of generalised affine length is recalled in Section \ref{SecCoordAffine} and some basic, yet fundamental, a priori estimates are obtained on parallel transport under Lipschitz bounds and on the relation of generalised affine length and coordinate length. 

Theorem \ref{ThmIntro} is proved in Section \ref{SecC1}. The important step is here the proof of Proposition \ref{PropC1Aux}, which is actually a stronger result than Theorem \ref{ThmIntro} but has a less concise formulation. It establishes the local equivalence of two anchored extensions $\tilde{\iota} : M \hookrightarrow \tilde{M}$ and $\hat{\iota} : M \hookrightarrow \hat{M}$ under the assumption that $(\hat{M}, \hat{g})$ is locally Lipschitz continuous, \emph{but only requires $(\tilde{M}, \tilde{g})$ to be continuous and in addition to satisfy the following property: the anchoring curve is part of a space-filling family of future directed and future inextendible causal curves in $M$ which, in $\tilde{M}$, limit to and fill a small neighbourhood of $\tilde{p}$ up to the boundary -- and, moreover, parallel transport along this space-filling family of causal curves remains continuous (and bounded) in the local coordinates.}
Thus, the parallel transport of a frame field along this space-filling family of curves encodes the $C^1$-structure of $\tilde{M}$ at the boundary. One now uses the Lipschitz condition on $\hat{g}$ to show that 1) the space-filling family of curves is also contained in the future boundary chart $\hat{U}$ and 2) that the parallel transport of the frame field is also comparable to (encodes) the differentiable structure of $\hat{M}$. Thus, the parallel transport of an auxiliary frame field (which is invariant under the isometry $\id$) allows us to bridge between the differentiable structures on $\tilde{M}$ and $\hat{M}$ and to show that they are equivalent\footnote{Proposition \ref{PropC1Aux} is a key step in future work on the $C^{0,1}_{\loc}$-inextendibility of weak null singularities without any symmetry assumption.}. 

We now come back to the case of two $C^{0,1}_{\loc}$-extensions $\tilde{\iota} : M \hookrightarrow \tilde{M}$ and $\hat{\iota} : M \hookrightarrow \hat{M}$ and the proof of Theorem \ref{ThmIntro}. Recall that we chose two future boundary charts $\tilde{U}$ around $\tilde{p}$ and $\hat{U}$ around $\hat{p}$, where $\tilde{p}$ and $\hat{p}$ are limit points of the anchoring curve $\gamma$. In the chart $\tilde{U}$ with coordinates $(x_0, x_1, \ldots, x_d) = (x_0, \ux)$ we now go over to the curve of constant $\ux = 0$ which is contained in $M$ and limits to $\tilde{p}$. By the results of Section \ref{SecC0} mentioned earlier we can take this curve as a new anchoring curve. The space-filling family of future directed and future inextendible causal curves is now taken to be the curves of constant $\ux$ in the future boundary chart $\tilde{U}$. Since $\tilde{g}$ satisfies a local Lipschitz condition one can show that parallel transport along this family of curves remains bounded and continuous with respect to the local coordinates $x^\mu$ and thus the assumptions of Proposition \ref{PropC1Aux} are met. Finally, it is shown in Lemma \ref{LemBoostReg} that if the two extensions have additional regularity, the extension of $\id$ will also enjoy additional regularity.

In Section \ref{SecCounter} we give an example of a $1+1$-dimensional cosmological spacetime which is globally hyperbolic and time-oriented and moreover admits two anchored, yet inequivalent, H\"older continuous extensions: the identification map $\id$ extends as a $C^0$-regular map but not as a $C^1$-regular map. This shows that the Lipschitz regularity assumed on the extensions in Theorem \ref{ThmIntro} is, in general, optimal.

Recall that Chru\'sciel's proof of the local uniqueness of (anchored) conformal boundary extensions in more than two spacetime dimensions, i.e., Theorem \ref{ThmCh}, is based on null geodesics. In  Appendix \ref{SecA1} we give a proof of the local uniqueness of anchored (isometric) extensions which are sufficiently smooth (at least $C^2$) which is based on timelike geodesics and valid in all spacetime dimensions. On the one hand this provides an easier method of proof than the one presented based on parallel transport of a frame field \emph{if the extensions are smooth enough}. On the other hand we show in Appendix \ref{SecA2} how this method can be employed in other settings: we answer a question by JB Manchak in the affirmative as to whether the only possible (smooth) extension of an inextendible Lorentzian manifold with one point removed is the restoration of this point.

\subsection*{Acknowledgements}

I would like to thank Mihalis Dafermos for stimulating discussions when I first started thinking about this problem several years ago, and also three anonymous referees for their valuable comments.  Moreover, I acknowledge support through the Royal Society University Research Fellowship URF\textbackslash R1\textbackslash 211216.

\section{Preliminaries} \label{SecPrel}

In this paper the regularity of timelike and causal curves will always be stated explicitly; that is, there is no implicit assumption that timelike curves have to be smooth, piecewise $C^1$, locally Lipschitz or similarly. For a smooth curve to be timelike the velocity has to be timelike everywhere; for a piecewise $C^1$-curve to be timelike the velocity has to be timelike everywhere and at points of discontinuity it has to lie in the same connected component of the timelike double cone; for a locally Lipschitz curve to be timelike the velocity has to be timelike almost everywhere -- and similarly for causal curves. The timelike future $I^+$ is defined with respect to smooth curves. In this paper all metrics are at least $C^0$, and thus this definition agrees with the ones for piecewise $C^1$ or `locally uniformly timelike' curves, see  \cite{ChrusGra12}, \cite{GraKuSaSt19}.\footnote{Note that in these references our convention  of the definition of the timelike future is denoted by $\check{I}^+$.} If the metric is at least locally Lipschitz continuous $I^+$ can also be defined with respect to locally Lipschitz timelike curves; otherwise the so defined set might be strictly larger.

Furthermore, the causal future $J^+$ is defined with respect to locally Lipschitz causal curves. Note that for smooth metrics this agrees with the definition using smooth causal curves \cite{Chrus11a}.

\begin{definition} \label{DefExtension}
Let $(M,g)$ be a Lorentzian manifold\footnote{\label{FootnoteSmooth}All manifolds are assumed to be smooth (cf.\ Remark 2.2 in \cite{Sbie22a}). For the definition of a $\Gamma$-extension it makes sense to assume that the metric $g$ is at least $\Gamma$-regular.} and let $\Gamma$ be a regularity class, for example $\Gamma = C^k$ with $k \in \N \cup \{\infty\}$ or $\Gamma = C^{0,1}_{\loc}$.  A \emph{$\Gamma$-extension of $(M,g)$} consists of a smooth isometric embedding $\iota : M \hookrightarrow \tilde{M}$ of $M$ into a Lorentzian manifold $(\tilde{M}, \tilde{g})$ of the same dimension as $M$ where $\tilde{g}$ is $\Gamma$-regular  and such that $\partial \iota(M) \subseteq \tilde{M}$ is non-empty.

If $(M,g)$ admits a $\Gamma$-extension, then we say that $(M,g)$ is \emph{$\Gamma$-extendible}, otherwise we say $(M,g)$ is \emph{$\Gamma$-inextendible}.
\end{definition}

Note that the boundary $\rd \iota(M) \subseteq \tilde{M}$ can in general be very rough and certainly $\rd \iota (M) \cup \iota(M)$ does not need to have the structure of a manifold with boundary. In the following we single out a certain subset of boundary points which, \emph{in the case of $(M,g)$ being globally hyperbolic}, can be shown to have more structure; see the proposition below. 

\begin{definition} \label{DefFutureBdry}
Let $(M,g)$ be a time-oriented Lorentzian manifold with $g \in C^0$ and let $\iota : M \hookrightarrow \tilde{M}$ be a $C^0$-extension of $M$. The \emph{future boundary of $M$} is the set $\partial^+\iota(M) $ consisting of all points $\tilde{p} \in \tilde{M}$ such that there exists a smooth timelike curve $\tilde{\gamma} : [-1,0] \to \tilde{M}$ such that $\mathrm{Im}(\tilde{\gamma}|_{[-1,0)}) \subseteq \iota(M)$, $\tilde{\gamma}(0) = \tilde{p} \in \partial \iota(M)$, and $\iota^{-1} \circ \tilde{\gamma}|_{[-1,0)}$ is future directed in $M$.
\end{definition}
Clearly we have $\partial^+\iota(M) \subseteq \partial \iota(M)$. The past boundary $\partial^- \iota(M)$ is defined analogously.

The next proposition is found in \cite{Sbie18}, Proposition 2.2.

\begin{proposition}\label{PropBoundaryChart}
Let $\iota : M \hookrightarrow \tilde{M}$ be a $C^0$-extension of a time-oriented globally hyperbolic Lorentzian manifold $(M,g)$ with $g \in C^0$ and with Cauchy hypersurface $\Sigma$ --  and let $\tilde{p} \in \partial^+ \iota(M)$. For every $\delta >0$ there exists a chart $\tilde{\varphi} : \tilde{U} \to(-\varepsilon_0, \varepsilon_0) \times  (-\varepsilon_1, \varepsilon_1)^{d} =: R_{\varepsilon_0, \varepsilon_1}$, $\varepsilon_0, \varepsilon_1 >0$ with the following properties
\begin{enumerate}[i)]
\item $\tilde{p} \in \tilde{U}$ and $\tilde{\varphi}(p) = (0, \ldots, 0)$
\item $|\tilde{g}_{\mu \nu} - m_{\mu \nu}| < \delta$, where $m_{\mu \nu} = \mathrm{diag}(-1, 1, \ldots , 1)$
\item There exists a Lipschitz continuous function $f : (-\varepsilon_1, \varepsilon_1)^d \to (-\varepsilon_0, \varepsilon_0)$ with the following property: 
\begin{equation}\label{PropF1}
\{(x_0,\underline{x}) \in (-\varepsilon_0, \varepsilon_0) \times (-\varepsilon_1, \varepsilon_1)^{d} \; | \: x_0 < f(\underline{x})\} \subseteq \tilde{\varphi} \big( \iota\big(I^+(\Sigma,M)\big)\cap \tilde{U}\big)
\end{equation} and 
\begin{equation}\label{PropF2}
\{(x_0,\underline{x}) \in (-\varepsilon_0, \varepsilon_0) \times (-\varepsilon_1, \varepsilon_1)^{d}  \; | \: x_0 = f(\underline{x})\} \subseteq \tilde{\varphi}\big(\partial^+\iota(M)\cap \tilde{U}\big) \;.
\end{equation}
Moreover, the set on the left hand side of \eqref{PropF2}, i.e. the graph of $f$, is achronal\footnote{With respect to \emph{smooth} timelike curves.} in $(-\varepsilon_0, \varepsilon_0) \times  (-\varepsilon_1, \varepsilon_1)^{d}$.
\end{enumerate}
\end{proposition}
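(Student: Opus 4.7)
The plan is to construct a chart adapted to the anchoring timelike curve $\tilde{\gamma}$ and then read off $f$ as the upper profile of the image of $I^+(\Sigma, M)$ along vertical fibres.

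For the chart construction, using $\tilde{\gamma}'(0)$ and a $\tilde{g}(\tilde{p})$-orthonormal spacelike complement at $\tilde{p}$, I pick coordinates $\tilde{\varphi}$ centred at $\tilde{p}$ normalizing $\tilde{g}_{\mu\nu}(\tilde{p}) = m_{\mu\nu}$. Continuity of $\tilde{g}$ and restriction to a sufficiently small rectangle $R_{\varepsilon_0,\varepsilon_1}$ yield (i) and (ii). For $\delta$ small, vertical chart segments $t \mapsto (x_0 + t, \ux)$ are $\tilde{g}$-timelike and every $\tilde{g}$-causal vector has chart-slope at most $1 + C\delta$ for a dimensional constant $C$. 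Since $\iota^{-1} \circ \tilde{\gamma}|_{[-1,0)}$ is a future-inextendible, future-directed timelike curve in the globally hyperbolic $M$, it crosses $\Sigma$ exactly once and eventually lies in $I^+(\Sigma, M)$; by further shrinking I may assume $\iota(\Sigma) \cap \tilde{U} = \emptyset$, which is possible because $\tilde{p}$ is approached strictly from the future of $\iota(\Sigma)$.

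Setting $A := \tilde{\varphi}\bigl(\iota(I^+(\Sigma, M)) \cap \tilde{U}\bigr)$, an open subset of $R_{\varepsilon_0,\varepsilon_1}$ whose closure contains the origin (approached along the negative $x_0$-axis by $\tilde{\varphi} \circ \tilde{\gamma}$), I define
\[
f(\ux) := \sup\{x_0 \in (-\varepsilon_0, \varepsilon_0) \; | \; (x_0, \ux) \in A\},
\]
with the convention $\sup \emptyset := -\varepsilon_0$. The key identity to establish, which gives \eqref{PropF1} directly, is $A = \{(x_0, \ux) \; | \; x_0 < f(\ux)\}$. Given $(y_0, \ux)$ with $y_0 < f(\ux)$, I pick $x_0 \in (y_0, f(\ux))$ with $(x_0, \ux) \in A$ and consider the downward vertical segment $\sigma(t) = (x_0 - t, \ux)$, which is $\tilde{g}$-timelike. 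If the first exit time $T$ satisfying $\sigma(T) \in \rd A$ were strictly less than $x_0 - y_0$, then, using $\iota(\Sigma) \cap \tilde{U} = \emptyset$, the pullback $\iota^{-1} \circ \sigma|_{[0, T)}$ would be a past-directed, past-inextendible timelike curve in $M$ confined to $I^+(\Sigma, M)$, contradicting the Cauchy property of $\Sigma$ (completing this curve by any future extension from $\sigma(0)$ yields an inextendible timelike curve in $M$ missing $\Sigma$). The reverse inclusion $A \subseteq \{x_0 < f(\ux)\}$ is immediate from openness of $A$ and the definition of the supremum.

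For the Lipschitz property and the achronality of the graph, the subgraph identity identifies the graph of $f$ with the portion of $\rd A$ in the interior of the chart, and I apply the standard argument that an achronal subset of a Lorentzian manifold whose metric is $\delta$-close to Minkowski must be a Lipschitz graph with Lipschitz constant bounded by $1 + C\delta$; achronality itself follows from a push-up exploiting openness of the chronological past and future, together with the subgraph identity (any future-timelike curve from a point of $A$ to a graph point, extended vertically, would produce a point above the graph that the first-exit analysis forces into $A$, contradicting the supremum definition of $f$). The inclusion \eqref{PropF2} follows because each graph point $(f(\ux), \ux)$ is the smooth future endpoint of the $\tilde{g}$-timelike vertical ray $x_0 \mapsto (x_0, \ux)$ on $(-\varepsilon_0, f(\ux))$, whose image lies in $A \subseteq \iota(M)$. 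The principal obstacle is the subgraph identity: ruling out the a priori scenario of a past-vertical segment exiting $\iota(M)$ before reaching the chart floor requires combining the near-Minkowski cone bound, the separation of $\tilde{p}$ from $\iota(\Sigma)$ via shrinking $\tilde{U}$, and the Cauchy property of $\Sigma$.
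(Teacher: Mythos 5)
The critical step in your argument is the claim that, after shrinking, one may assume $\iota(\Sigma) \cap \tilde{U} = \emptyset$ "because $\tilde{p}$ is approached strictly from the future of $\iota(\Sigma)$". This is a genuine gap, and it is precisely the difficulty the paper itself flags in Remark \ref{RemLemRougher}(2): \emph{there might not be a neighbourhood of $\tilde{p}$ that is disjoint from any Cauchy hypersurface} (cf.\ Remark 2.7 in \cite{Sbie18}). The embedding $\iota$ is not proper, so although $\gamma$ eventually lies in $I^+(\Sigma,M)$ and leaves every compact subset of $M$, the image $\iota(\Sigma)$ may nevertheless accumulate at $\tilde{p}$ in $\tilde{M}$ (a posteriori it can only do so from above the graph of $f$, but that cannot be invoked while proving the proposition). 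Without this disjointness your first-exit analysis collapses: the downward vertical segment can leave $A = \tilde{\varphi}(\iota(I^+(\Sigma,M)) \cap \tilde{U})$ by crossing $\iota(\Sigma)$ while remaining inside $\iota(M)$, in which case its pullback has a past endpoint on $\Sigma$ and there is no contradiction with global hyperbolicity. Consequently the fibre $\{(x_0,\ux) : x_0 \in (-\varepsilon_0,\varepsilon_0)\} \cap A$ need not be an interval, your $f(\ux) = \sup\{x_0 : (x_0,\ux)\in A\}$ picks out the top of the wrong component, and the subgraph identity $A = \{x_0 < f(\ux)\}$ — on which \eqref{PropF1}, the Lipschitz bound and the achronality all rest — fails.

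Note that the paper does not reprove Proposition \ref{PropBoundaryChart} but imports it from \cite{Sbie18}; however, the analogous arguments that \emph{are} carried out here (Lemma \ref{LemLeavingCurveSmooth} and Proposition \ref{PropLeavingLipschitz}) show how the difficulty is circumvented: the relevant region is swept out by \emph{future-directed} timelike curves emanating from a single fixed basepoint chosen deep inside $\iota(I^+(\Sigma,M))$ (geodesics from $\tilde{\gamma}(-1)$, resp.\ the curves $\tilde{\Gamma}_\varepsilon$ together with upward vertical segments). Since $I^+(\Sigma,M)$ is a future set, such curves automatically remain in $\iota(I^+(\Sigma,M))$ for as long as they remain in $\iota(M)$, so the question of where $\iota(\Sigma)$ sits in the chart never arises; the contradiction with global hyperbolicity is then extracted from a past-inextendible curve built inside the swept region. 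Your downward-probing strategy would need an argument of this type to replace the unjustified shrinking. As a secondary point, your achronality and Lipschitz claims are also only sketched and implicitly rely on the same subgraph identity, and you do not address why $f$ takes values in the open interval $(-\varepsilon_0,\varepsilon_0)$; but these are repairable once the main issue is fixed.
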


\begin{remark} \label{RemPropBoundaryChart}
\begin{enumerate}
\item Given a future boundary point $\tilde{p}$, we call a chart as above a \emph{future boundary chart}. A chart around some point $\tilde{p}$ in a Lorentzian manifold with a continuous metric which satisfies only $i)$ and $ii)$ is called a \emph{near-Minkowskian chart}.
\item The proof of Proposition \ref{PropBoundaryChart}, found in \cite{Sbie18}, shows that if $\tilde{\gamma} : [-1,0] \to \tilde{M}$ is a smooth timelike curve such that $\tilde{\gamma}(0) = \tilde{p} \in \rd^+\iota(M)$ and with  $\tilde{\gamma}|_{[-1,0)}$ being contained in $\iota(M)$ and being future directed there, then the boundary chart can be chosen such that, after a possible restriction and reparametrisation of $\tilde{\gamma}$, we have $(\tilde{\varphi} \circ \tilde{\gamma})(s) = (s, 0, \ldots, 0)$ for $s \in (-\varepsilon_0, 0]$. In particular $\tilde{\gamma}$ is ultimately contained below the graph of $f$.
\end{enumerate}
\end{remark}

In particular the future boundary is open in the sense that around every future boundary point one can find a future boundary chart as in the proposition such that all points on the graph of $f$ are also future boundary points. 
Let us also remark that for $C^0$-extensions the assumption that the timelike curve is smooth and timelike at the boundary point is crucial. If one defined the set of future boundary points just as \emph{limit points} of smooth future directed timelike curves, then the set would be strictly larger in general and Proposition \ref{PropBoundaryChart} does not hold, see Example \ref{ExFuBound}.
However, for $C^{0,1}_{\loc}$-extensions, it is shown in Proposition \ref{PropLeavingLipschitz} that one can relax the assumption on the smoothness and timelikeness of the timelike curve $\tilde{\gamma} : [-1,0] \to \tilde{M}$ at $s=0$ and just require continuity, i.e., $\lim_{s \to 0} \tilde{\gamma}(s) = \tilde{p}$, without enlarging the so-defined set.

\section{On the definition of the future boundary} \label{SecFuBound}

Let $(M,g)$ be a spacetime and  $\iota_i : M \hookrightarrow \tilde{M}_i$, $i = 1,2$, two extensions thereof. Assume there is a future boundary point $\tilde{p}_1 \in \rd^+\iota_1(M) \subseteq \tilde{M}_1$ together with a smooth timelike curve $\tilde{\sigma}_1 : [-1,0] \to \tilde{M}_1$ such that $\tilde{\sigma}_1|_{[-1,0)}$ maps as a future directed timelike curve into $\iota_1(M)$ and $\tilde{\sigma}_1(0) = \tilde{p}_1$. Then $\sigma := \iota^{-1}_1 \circ \tilde{\sigma}_1|_{[-1,0)}$ is a smooth and future directed timelike curve in $M$. In general, it does not need to have a future limit point in the second extension $\tilde{M}_2$. But even if it \emph{has} a future limit point $\tilde{M}_2 \ni \tilde{p}_2 := \lim_{s \to 0} (\iota_2 \circ \sigma)(s)$, it is not clear whether  $\tilde{p}_2 \in \rd^+ \iota_2 (M)$ -- and if this should be the case as well, whether $\sigma$  extends then as a smooth timelike curve to $\tilde{p}_2$.

In this section we address the first question: We show more generally, that as long as the extension $\iota : M \hookrightarrow \tilde{M}$  is locally Lipschitz, a future limit point in $\tilde{M}$ of a future inextendible timelike curve $\sigma : [-1,0) \to M$ is automatically a future boundary point. We also show that this property fails for merely continuous extensions. Moreover, we further prove that for locally Lipschitz extensions one can also relax the condition of the curve being timelike to it being causal.

We start with the case of smooth extensions which will be used as an auxiliary result.

\begin{lemma} \label{LemLeavingCurveSmooth}
Let $(M,g)$ be a smooth time-oriented and globally hyperbolic Lorentzian manifold and let $\iota : M \hookrightarrow \tilde{M}$ be a smooth extension. Let $\gamma : [-1,0) \to M$ be a future directed and future inextendible causal $C^1$-curve such that $\lim_{s \to 0} (\iota \circ \gamma)(s) =: \tilde{p} \in \rd \iota(M)$ exists. Then there is a smooth timelike geodesic $\tilde{\sigma} : [-1,0] \to \tilde{M}$ with $\tilde{\sigma}|_{[-1,0)}$ mapping into  $\iota(M)$ as a future directed timelike geodesic and $\tilde{\sigma}(0) = \tilde{p}$. In particular $\tilde{p} \in \rd^+ \iota(M)$ and there exists a boundary chart as in Proposition \ref{PropBoundaryChart} such that $\iota \circ \gamma$ is ultimately contained\footnote{I.e., there exists $s_0 \in [-1,0)$ such that $(\iota \circ \gamma)\big((s_0,0)\big)$ is contained in the following set. Moreover, for the sake of brevity, we have omitted here the application of $\tilde{\varphi}^{-1}$ to the set. This identification is understood tacitly here as it is also in other parts of the paper.} in $\{(x_0, \ux) \in (-\varepsilon_0, \varepsilon_0) \times (-\varepsilon_1, \varepsilon_1)^d \; | \; x_0 < f(\ux) \}$.
\end{lemma}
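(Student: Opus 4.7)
The plan is to first construct the timelike geodesic $\tilde{\sigma}$ via a convex-normal-neighborhood plus push-up argument, and then to apply Proposition \ref{PropBoundaryChart} together with Remark \ref{RemPropBoundaryChart} with $\tilde{\sigma}$ as the defining smooth timelike curve; the ``ultimately below the graph'' conclusion for $\iota \circ \gamma$ then follows by a causal-structure argument in the adapted boundary chart.

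For the construction of $\tilde{\sigma}$, I would fix a causally convex, geodesically convex normal neighborhood $\tilde{V}$ of $\tilde{p}$, available since $\tilde{g}$ is smooth. Choose $s_0 \in [-1,0)$ close enough to $0$ so that $\iota \circ \gamma([s_0,0)) \subseteq \tilde{V}$; the curve $\iota \circ \gamma|_{[s_0,0]}$, completed by the endpoint $\tilde{p}$, is a locally Lipschitz future-directed causal curve, so $\tilde{q} := \iota \circ \gamma(s_0) \leq_{\tilde{M}} \tilde{p}$. Pick $\tilde{r} \in I^-(\tilde{q}, \tilde{V}) \cap \iota(M)$ in the same connected component $C$ of $\iota(M) \cap \tilde{V}$ containing the end of $\iota \circ \gamma$ --- this is non-empty by openness of $\iota(M)$ and of $I^-(\tilde{q},\tilde{V})$. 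By the push-up property (valid in the smooth category), $\tilde{r} \ll_{\tilde{M}} \tilde{p}$, and by causal convexity of $\tilde{V}$ also $\tilde{r} \ll_{\tilde{V}} \tilde{p}$. In normal coordinates based at $\tilde{p}$ the unique geodesic from $\tilde{r}$ to $\tilde{p}$ in $\tilde{V}$ is a past-directed radial timelike line, and after reparametrization produces a candidate $\tilde{\sigma} : [-1,0] \to \tilde{V}$ with $\tilde{\sigma}(0) = \tilde{p}$.

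The delicate step --- the main obstacle in the proof --- is to ensure $\tilde{\sigma}|_{[-1,0)} \subseteq \iota(M)$. This requires choosing $\tilde{r}$ not merely close to $\tilde{p}$ but along a past-timelike vector $V \in T_{\tilde{p}}\tilde{M}$ whose entire radial segment $\{tV : t \in (0,1]\}$ lies in $\exp_{\tilde{p}}^{-1}(C)$. Such a $V$ is found by exploiting that $\iota \circ \gamma$ accumulates at $\tilde{p}$ from within the open component $C$: for $s_0$ sufficiently close to $0$ one selects $V$ close to the normal-coordinate ``approach direction'' of $\iota \circ \gamma$ near $\tilde{p}$, perturbed slightly into the past-timelike cone so that the corresponding radial ray remains inside $C$ by openness. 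Setting $\tilde{r} := \exp_{\tilde{p}}(V)$, the geodesic $\tilde{\sigma}|_{[-1,0)}$ then lies in $C \subseteq \iota(M)$, which establishes $\tilde{p} \in \rd^+ \iota(M)$ through $\tilde{\sigma}$.

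Proposition \ref{PropBoundaryChart} together with Remark \ref{RemPropBoundaryChart} applied to $\tilde{\sigma}$ then yields a near-Minkowski boundary chart $\tilde{\varphi} : \tilde{U} \to (-\varepsilon_0, \varepsilon_0) \times (-\varepsilon_1, \varepsilon_1)^d$ in which $\tilde{\sigma}(s) = (s, 0, \ldots, 0)$ is ultimately contained in $\{x_0 < f(\ux)\}$; we may shrink $\tilde{U}$ so that $\tilde{U} \subseteq \tilde{V}$. For $s$ close to $0$, the curve $\iota \circ \gamma$ lies in $\iota(M) \cap \tilde{U}$ and avoids the graph $\{x_0 = f(\ux)\} \subseteq \rd \iota(M)$; by connectedness it lies entirely in one of the two open regions $\{x_0 < f(\ux)\}$ or $\{x_0 > f(\ux)\}$. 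The ``above'' alternative is excluded by combining the near-Minkowski bound $|\underline{x}(s)| \leq (1 + O(\delta)) |x_0(s)|$ along the causal curve $\iota \circ \gamma$, the Lipschitz bound $|f(\ux)| \leq (1 + O(\delta))|\ux|$ coming from achronality of the graph, and the fact (from the construction in the third paragraph) that $\tilde{\sigma}$ and $\iota \circ \gamma$ both approach $\tilde{p}$ within the same connected component $C$ of $\iota(M) \cap \tilde{V}$, while $\tilde{\sigma}$ is already in $\{x_0 < f(\ux)\}$; shrinking $\tilde{U}$ if necessary so that $\delta$ is small enough to force compatibility, one concludes that $\iota \circ \gamma$ is ultimately contained in $\{x_0 < f(\ux)\}$, as required.
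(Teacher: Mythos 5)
There is a genuine gap at exactly the step you yourself flag as delicate: showing that the radial timelike segment ending at $\tilde{p}$ lies entirely in $\iota(M)$. Your argument for this is that one can take $V$ close to the ``approach direction'' of $\iota \circ \gamma$ at $\tilde{p}$, perturb it slightly into the past-timelike cone, and conclude that the ray $\{tV : t \in (0,1]\}$ stays in the connected component $C$ ``by openness.'' Openness of $C$ does not give this: $C$ is only guaranteed to contain some neighbourhood of each point of $\iota\circ\gamma$, and that neighbourhood may pinch to zero width as one approaches $\tilde{p}$, so a straight radial ray near the approach direction can perfectly well exit $\iota(M)$. The paper makes this point explicitly in Remark \ref{RemLemRougher}: the statement $I^-(\tilde{p},\tilde{V}) \subseteq \iota(M)$ --- which is essentially what you are assuming when you assert the ray stays in $C$ --- is \emph{not} available a priori, and if it were, the lemma would be immediate. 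Worse, your construction of $\tilde{\sigma}$ never uses the global hyperbolicity of $(M,g)$, yet the example of Figure \ref{FigGHNeeded} (a region of Minkowski space bounded by a timelike curve that becomes null at $\tilde{p}$) shows the conclusion is false without it: there a causal curve in $M$ accumulates at $\tilde{p}$ from inside the open set $\iota(M)$, but every past-directed timelike ray from $\tilde{p}$ leaves $\iota(M)$. Any argument that would go through in that example must be wrong.

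The missing idea is the paper's mechanism for converting ``the geodesic leaves $\iota(M)$'' into a contradiction with global hyperbolicity. The paper bases the family of geodesics at the \emph{interior} point $\tilde{\gamma}(-1)$ rather than at $\tilde{p}$, sweeps the parameter $s$ from $-1$ towards $0$, and shows via a connectedness argument that the set $A$ of parameters for which $\tilde{\sigma}_{s}([0,1)) \subseteq \iota(M)$ is all of $(-1,0]$: if it terminated at some $s_0$ with $\tilde{\sigma}_{s_0}(t_0) \notin \iota(M)$, one builds (in hyperboloidal normal coordinates at $\tilde{\gamma}(-1)$) an explicit past-inextendible timelike curve $\tilde{\beta}_\varepsilon$ lying in the region ruled by the geodesics $\{\tilde{\sigma}_s\}_{s < s_0}$, hence in $\iota(I^+(\Sigma,M))$, which contradicts the global hyperbolicity of $(M,g)$. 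This quantitative construction (the timelikeness estimate for $\tilde{\beta}_\varepsilon'$ using the causality of $\tilde{\gamma}$ and the warped-product form of the metric) is the substance of the proof and has no counterpart in your proposal. Your first two paragraphs and the final reduction to Proposition \ref{PropBoundaryChart} and a connectedness argument below the graph are in the right spirit, but without the global-hyperbolicity step the proof does not close.
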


\begin{proof}
Let $\tilde{V} \subseteq \tilde{M}$ be a convex neighbourhood of $\tilde{p}$. After making $\gamma$ shorter if needed we can assume that $\tilde{\gamma} := \iota \circ \gamma : [-1,0) \to \tilde{M}$ is completely contained in $\tilde{V}$ -- and by adding a timelike piece to the past we can assume that in a small neighbourhood of $\tilde{\gamma}(-1)$ the curve is timelike. We pick a Cauchy hypersurface $\Sigma$ of $M$ with $\gamma(-1) \in I^+(\Sigma, M)$. First note that by the closedness of $J^+\big(\tilde{\gamma}(-\frac{1}{2}), \tilde{V}\big)$ relative to $\tilde{V}$ we have $\tilde{p} \in J^+\big(\tilde{\gamma}(-\frac{1}{2}), \tilde{V}\big)$. By the push-up property we then have $\tilde{p} \in I^+\big(\tilde{\gamma}(-1), \tilde{V}\big)$.

We extend $\tilde{\gamma}$ to $[-1,0]$ by setting $\tilde{\gamma}(0) := \tilde{p}$. Note that $\tilde{\gamma}$ is in general not differentiable at $0$ (and if it is, it might be null there). Consider the family of timelike geodesics $\tilde{\sigma}_s(t) := \exp_{\tilde{\gamma}(-1)} \big(t \exp_{\tilde{\gamma}(-1)}^{-1} \big[ \tilde{\gamma}(s)\big] \big)$ for $t \in [0,1]$ and $s \in (-1,0]$.

\textbf{\underline{Claim:}} For all $s \in (-1,0]$ the geodesics $\tilde{\sigma}_s|_{[0,1)}$ map into $\iota(M)$.

To prove the claim we consider $$A:= \{ s \in (-1,0] \; | \; \tilde{\sigma}_{s'}\big( [ 0,1)\big) \subseteq \iota(M) \; \textnormal{ for all } s'\leq s \}\;.$$ By the openness of $\iota(M)$ we have for all $s'$ close enough to $-1$ that $\tilde{\sigma}_{s'}\big([0,1)\big) \subseteq \iota(M)$ -- and thus in particular $A \neq \emptyset$. Moreover, for $s' \in (-1,0)$ with $\tilde{\sigma}_{s'}\big([0,1)\big) \subseteq \iota(M)$ we first note that indeed $\tilde{\sigma}_{s'}\big([0,1]\big) \subseteq \iota(M)$ since $\tilde{\sigma}_{s'}(1) = \tilde{\gamma}(s')$ -- and by the compactness of $\tilde{\sigma}_{s'}\big([0,1]\big) \subseteq \iota(M)$ and the openness of $\iota(M)$ we obtain that also $\tilde{\sigma}_{s''}\big([0,1]\big) \subseteq \iota(M)$ for slightly larger $s' < s''$. It thus follows that we must have either $A = (-1,0]$ or $A = (-1,s_0)$ with $-1< s_0 \leq 0$. The first case would prove the claim; thus we now rule out the second case. 

Assuming the second case, there is then $t_0 \in (0,1)$ with $\tilde{\sigma}_{s_0}(t_0) \notin \iota(M)$. We lead this to a contradiction to the global hyperbolicity of $(M,g)$ by constructing a past inextendible timelike curve $\beta$ in $M$ which is completely contained in $I^+(\Sigma, M)$, cf.\ Figure \ref{FigTransition} below.
\begin{figure}[h] 
  \centering
  \def\svgwidth{4.5cm}
   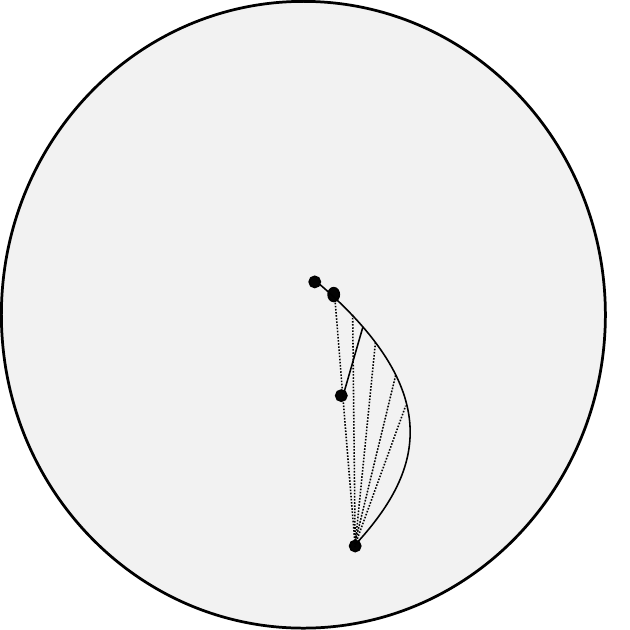 
      \caption{Construction of $\beta$.} \label{FigTransition}
\end{figure}
Although the idea is simple, the implementation is comparatively lengthy. We first note that for all $s' \in A$ we actually have $\tilde{\sigma}_{s'} \big((0,1]\big) \subseteq \iota\big( I^+(\Sigma, M)\big)$ since $\tilde{\gamma}(-1) \in \iota\big(I^+(\Sigma, M)\big)$. We now choose hyperboloidal normal coordinates at $\tilde{\gamma}(-1)$: we choose an orthonormal basis $\{e_0, e_1, \ldots, e_d\}$ of $T_{\tilde{\gamma}(-1)}\tilde{V}$ which gives us coordinates $\R^{d+1} \ni x^\mu \mapsto x^\mu e_\mu \in T_{\tilde{\gamma}(-1)} \tilde{V}$ for $T_{\tilde{\gamma}(-1)} \tilde{V}$. Setting $-\tau^2 := m_{\mu \nu} x^\mu x^\nu$ on $I^+_m(0, \R^{d+1})$, where $m_{\mu \nu} = \mathrm{diag}(-1,1, \ldots, 1)$, we obtain $-\tau d\tau = -x^0 dx^0 + x^i dx^i$ and $\mathrm{grad}_m \tau = - \frac{1}{\tau} (x^0 \rd_0 + x^i \rd_i)$. Note that $\mathrm{grad}_m \tau$ is radial. Consider the hypersurfaces $\{ \tau = \mathrm{const}, \; x^0 >0\}$, choose smooth global coordinates $y^i$ on one of these level sets and extend them to all of $I^+_m(0, \R^{d+1})$ by requiring $0 = \mathrm{grad}_{m} \tau (y^i)$. This gives us coordinates $(\tau, y^i)$ on $I^+_m(0, \R^{d+1}) \simeq I^+_{\tilde{g}|_{\tilde{\gamma}(-1)}}(0, T_{\tilde{\gamma}(-1)}\tilde{V})$.
Note that by construction $\mathrm{grad}_m \tau \sim \rd_\tau$. Moreover, we have $\mathrm{grad}_m \tau (\tau) = m^{-1}(d\tau, d\tau) = -1$ which gives $\mathrm{grad}_m \tau = - \rd_\tau$.  A suitable restriction of the $(\tau, y^i)$-coordinate domain ensures that it is contained in the domain of $\exp_{\tilde{\gamma}(-1)}$ which then gives us the hyperboloidal normal coordinates $(\tau, y^i)$ on $I^+(\tilde{\gamma}(-1), \tilde{V})$.
Moreover, it follows from the Gauss lemma that
\begin{equation*}
\tilde{g}= - d \tau^2 + \tilde{g}_{ij}(\tau, y^k) \, dy^i \otimes dy^j \;.
\end{equation*}
Since $\tilde{\gamma}$ is causal and $\tau$ is a time function on its domain of definition, we can reparametrise $\tilde{\gamma}$ by $\tau$, i.e.,
$$\tilde{\gamma}_{\mathrm{rp}}(\tau) = \big(\tau, \tilde{\gamma}_{\mathrm{rp}}^i(\tau) \big) = \tilde{\gamma}\big(s(\tau)\big) $$
with $\tau = \tilde{\gamma}^\tau(s)$ and $s(\tau) = (\tilde{\gamma}^\tau)^{-1}(\tau)$ being $C^1$.\footnote{The reason for the reparametrisation is to ensure a uniform bound on the derivative of $\tilde{\gamma}$ near its future endpoint $\tilde{p}$, cf.\ \eqref{EqUniBoundGammaTilde} below.} In $(\tau, y^i)$-coordinates $\tilde{\sigma}_s$ is given by $\tilde{\sigma}_s(t) = \big(t \cdot \tau(s), \tilde{\gamma}_{\mathrm{rp}}^i(\tau(s))\big)$. For $t \in (t_0, 1]$ and $\varepsilon >0$ sufficiently small we define in these coordinates $$\tilde{\beta}_\varepsilon(t) := \Big( t \cdot  \big[ \tau(s_0) - \varepsilon(t-t_0)\big], \tilde{\gamma}_{\mathrm{rp}}^i \big[ \tau(s_0) - \varepsilon(t-t_0)\big]\Big) \;.$$
Clearly, this is a family of $C^1$ curves depending on $\varepsilon>0$. We compute
\begin{equation}
\label{EqDerBeta}
\tilde{\beta}_\varepsilon'(t) = \big(\tau(s_0) - 2 \varepsilon t + \varepsilon t_0\big)\, \rd_\tau - \varepsilon \Big(\frac{d}{d \tau} \tilde{\gamma}_{\mathrm{rp}}^i \Big) \big[ \tau(s_0) - \varepsilon(t-t_0)\big] \, \rd_{y^i} \;.
\end{equation}
First note that this family of curves is contained in the set ruled by the timelike geodesics $\{\tilde{\sigma}_s \; | \; -1 < s<s_0\}$ -- and thus in particular in $\iota\big(I^+(\Sigma, M)\big)$ by definition of $s_0$.

Secondly, note that for all $\varepsilon >0$ sufficiently small, and since $t_0 \in (0,1)$, the family of curves $\tilde{\beta}_\varepsilon$ is contained in a compact $(\tau, y^i)$-coordinate set. Hence, on this set we have $|\tilde{g}_{ij}| \leq C$ by the continuity of the metric in these coordinates.

And finally note that the causality of $\tilde{\gamma}$ and the fact that $\tilde{\gamma}_{\mathrm{rp}}|_{\big[\tau(s_0) - \varepsilon(1- t_0), \tau(s_0)\big]}$ is contained in a compact coordinate neighbourhood gives the uniform bound
\begin{equation}\label{EqUniBoundGammaTilde}
1 \geq \tilde{g}_{ij}\big(\tau, \tilde{\gamma}_{\mathrm{rp}}^k(\tau) \big) \, \frac{d}{d\tau} \tilde{\gamma}_{\mathrm{rp}}^i(\tau) \cdot \frac{d}{d \tau} \tilde{\gamma}_{\mathrm{rp}}^j(\tau) \geq c || \frac{d}{d\tau} \tilde{\gamma}_{\mathrm{rp}}^i (\tau) ||^2_{\R^d}
\end{equation}
for all $\tau \in \big[\tau(s_0) - \varepsilon(1- t_0), \tau(s_0) \big)$ with $c>0$. Taking all this into account we obtain from \eqref{EqDerBeta}
$$\tilde{g}\big(\tilde{\beta}_\varepsilon' (t), \tilde{\beta}_\varepsilon'(t) \big) \leq -\big(\tau(s_0)\big)^2 + C\varepsilon + C \varepsilon^2 \;. $$ Since $\tau(s_0) >0$ it follows that for $\varepsilon >0$ sufficiently small $\tilde{\beta}_\varepsilon$ is a timelike curve in $\iota\big(I^+(\Sigma,M)\big)$. Pulling it back to $M$, it is also immediate that it is past inextendible. This gives the contradiction to the global hyperbolicity of $(M,g)$ and thus proves the claim.

The smooth timelike curve $\tilde{\sigma}_0$ now shows that $\tilde{p}$ is indeed in $\rd^+\iota(M)$. Thus by Proposition \ref{PropBoundaryChart} and Remark \ref{RemPropBoundaryChart} there exists a boundary chart such that $\tilde{\sigma}_0|_{[0,1)}$ is ultimately contained below the graph of $f$. It remains to show that $\tilde{\gamma}$ is also ultimately contained below the graph of $f$. Let $\tilde{W} \subseteq \tilde{M}$ be a neighbourhood of $\tilde{p}$ which is contained in $\tilde{V}$ as well as in the boundary chart. Clearly, $\tilde{\gamma}$ is ultimately contained in $\tilde{W}$ and our construction shows that each point on $\tilde{\gamma} \cap \tilde{W}$ can be connected to a point on $\tilde{\sigma}_0 \cap \tilde{W}$ by a curve in $\tilde{W}$ which is entirely contained in $\iota(M)$. This shows that $\tilde{\sigma}_0$ and $\tilde{\gamma}$ lie ultimately in the same connected component of $\iota(M)$ in the boundary chart, that is, below the graph of $f$.
\end{proof}

\begin{remark} \label{RemLemRougher}
\begin{enumerate}
\item The smoothness of the extension in Lemma \ref{LemLeavingCurveSmooth} is assumed in order to obtain that the curve $\tilde{\sigma}_{0}$ is smooth at $\tilde{p}$. Recall from Definition \ref{DefFutureBdry} that this is required for $\tilde{p}$ to be a future boundary point. 

If one only assumes $g, \tilde{g} \in C^2$, then the proof goes through to yield a timelike geodesic $\tilde{\sigma}_0$ which leaves $\iota(M)$ and is $C^1$ (indeed $C^3$) at $\tilde{p}$:  the exponential map is in $C^1$ and thus the hyperboloidal normal coordinates are also $C^1$. Thus, the metric components $\tilde{g}_{\mu \nu}$ in these coordinates are $C^0$, which is what is chiefly needed for the proof. 
One could then go on showing that if there is a timelike curve leaving $\iota(M)$ at $\tilde{p}$ and which is moreover $C^1$ at $\tilde{p}$, then $\tilde{p}$ is indeed a future boundary point according to Definition \ref{DefFutureBdry}. This being said however, note that Lemma \ref{LemLeavingCurveSmooth} is only used as an auxiliary lemma for the more general Proposition \ref{PropLeavingLipschitz}.

\item A desirable statement to have would be that, under the assumptions of Lemma \ref{LemLeavingCurveSmooth}, there is a small neighbourhood $\tilde{V} \subseteq \tilde{M}$ of $\tilde{p}$ such that $I^-(\tilde{p}, \tilde{V}) \subseteq \iota(M)$. If this statement were known a priori, the proof of Lemma \ref{LemLeavingCurveSmooth} would be immediate. The statement follows a posteriori from Proposition \ref{PropBoundaryChart} once it is established that $\tilde{p} \in \rd^+ \iota(M)$.

Note that if one assumes the extension $\iota : M \hookrightarrow \tilde{M}$ to be time-oriented as well and such that for every $\tilde{q} \in \tilde{M} \setminus \iota(M)$ we have $I^+(\tilde{q}, \tilde{M}) \cap \iota(M) = \emptyset$, then the desired statement is immediate. The complication that has to be circumvented in the general case is that there might not be a neighbourhood of $\tilde{p}$ that is disjoint from any Cauchy hypersurface,  cf.\ Remark 2.7 in \cite{Sbie18}.

\item Let us also remark that the assumption of global hyperbolicity of $(M,g)$ in Lemma \ref{LemLeavingCurveSmooth} cannot be dropped, which is clear from the example in Figure \ref{FigGHNeeded} in which the boundary point $\tilde{p}$ can only be reached by timelike curves which become null at $\tilde{p}$.
\begin{figure}[h] 
\centering
\begin{minipage}{0.8\textwidth}
  \centering
  \def\svgwidth{4.5cm}
\begingroup%
  \makeatletter%
  \providecommand\color[2][]{%
    \errmessage{(Inkscape) Color is used for the text in Inkscape, but the package 'color.sty' is not loaded}%
    \renewcommand\color[2][]{}%
  }%
  \providecommand\transparent[1]{%
    \errmessage{(Inkscape) Transparency is used (non-zero) for the text in Inkscape, but the package 'transparent.sty' is not loaded}%
    \renewcommand\transparent[1]{}%
  }%
  \providecommand\rotatebox[2]{#2}%
  \newcommand*\fsize{\dimexpr\f@size pt\relax}%
  \newcommand*\lineheight[1]{\fontsize{\fsize}{#1\fsize}\selectfont}%
  \ifx\svgwidth\undefined%
    \setlength{\unitlength}{399.95182548bp}%
    \ifx\svgscale\undefined%
      \relax%
    \else%
      \setlength{\unitlength}{\unitlength * \real{\svgscale}}%
    \fi%
  \else%
    \setlength{\unitlength}{\svgwidth}%
  \fi%
  \global\let\svgwidth\undefined%
  \global\let\svgscale\undefined%
  \makeatother%
  \begin{picture}(1,0.42190137)%
    \lineheight{1}%
    \setlength\tabcolsep{0pt}%
    \put(0,0){\includegraphics[width=\unitlength,page=1]{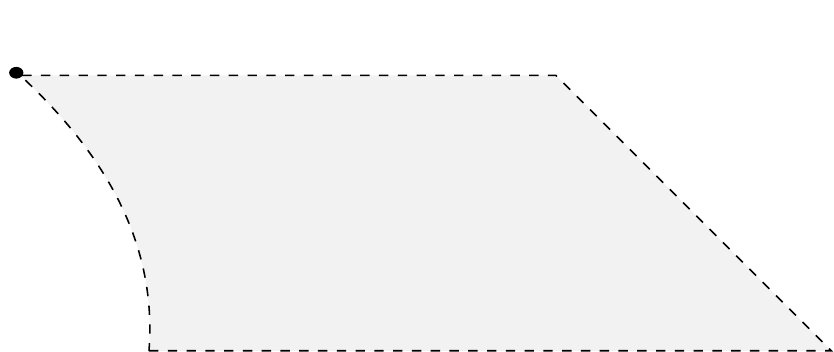}}%
    \put(-0.00622631,0.36491227){\color[rgb]{0,0,0}\makebox(0,0)[lt]{\lineheight{1.25}\smash{\begin{tabular}[t]{l}$\tilde{p}$\end{tabular}}}}%
    \put(0.66409085,0.04643584){\color[rgb]{0,0,0}\makebox(0,0)[lt]{\lineheight{1.25}\smash{\begin{tabular}[t]{l}$M$\end{tabular}}}}%
    \put(0,0){\includegraphics[width=\unitlength,page=2]{GHneeded.pdf}}%
    \put(0.19699203,0.14956155){\color[rgb]{0,0,0}\makebox(0,0)[lt]{\lineheight{1.25}\smash{\begin{tabular}[t]{l}$\gamma$\end{tabular}}}}%
  \end{picture}%
\endgroup%
 
      \caption{Consider the shaded part of $1+1$-dimensional Minkowski spacetime, where the left boundary is given by a timelike curve which becomes null at $\tilde{p}$.} \label{FigGHNeeded}
\end{minipage}
\end{figure}
\end{enumerate}
\end{remark}

The following proposition, which is based on Lemma \ref{LemLeavingCurveSmooth}, shows that Lemma \ref{LemLeavingCurveSmooth} remains true for local Lipschitz extensions.

\begin{proposition}\label{PropLeavingLipschitz}
Let $(M,g)$ be a  time-oriented and globally hyperbolic Lorentzian manifold with $g \in C^{0,1}_{\loc}$ and let $\iota : M \hookrightarrow \tilde{M}$ be a $C^{0,1}_{\loc}$-extension. Let $\gamma : [-1,0) \to M$ be a future directed and future inextendible causal $C^1$-curve such that $\lim_{s \to 0} (\iota \circ \gamma)(s) =: \tilde{p} \in \rd \iota(M)$ exists. Then $\tilde{p} \in \rd^+ \iota(M)$ and there exists a boundary chart as in Proposition \ref{PropBoundaryChart} such that $\iota \circ \gamma$ is ultimately contained in $\{(x_0, \ux) \in (-\varepsilon_0, \varepsilon_0) \times (-\varepsilon_1, \varepsilon_1)^d \; | \; x_0 < f(\ux) \}$.
\end{proposition}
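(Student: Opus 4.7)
The plan is to approximate the locally Lipschitz metric $\tilde g$ by smooth metrics with strictly wider light cones, apply Lemma \ref{LemLeavingCurveSmooth} to each smooth approximation, and then pass to a uniform limit inside a single near-Minkowskian chart centered at $\tilde p$.

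First, I would construct smooth Lorentzian metrics $\tilde g_n$ on $\tilde M$ converging locally uniformly to $\tilde g$ and whose light cones at every point strictly contain those of $\tilde g$ (standard construction via convolution in coordinate charts together with a small cone-widening, as in Chru\'sciel--Grant). The pulled-back metrics $g_n := \iota^* \tilde g_n$ can be arranged so that $(M, g_n)$ remains globally hyperbolic with the same Cauchy hypersurface $\Sigma$. Since the cones of $g_n$ strictly contain those of $g$, the $g$-causal $C^1$ curve $\gamma$ is $g_n$-timelike (and still $C^1$), and Lemma \ref{LemLeavingCurveSmooth} applied to $(M, g_n)$, $(\tilde M, \tilde g_n)$, $\gamma$ yields $\tilde p \in \rd^+_{\tilde g_n}\iota(M)$ for every $n$.

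Next, fix once and for all a chart $\tilde\varphi \colon \tilde U \to R_{\varepsilon_0,\varepsilon_1}$ centered at $\tilde p$ which is $\delta/2$-near-Minkowskian for $\tilde g$; by uniform convergence it is $\delta$-near-Minkowskian for $\tilde g_n$ for all large $n$. Applying Proposition \ref{PropBoundaryChart} to $(\tilde M, \tilde g_n)$ in this fixed chart produces a Lipschitz function $f_n \colon (-\varepsilon_1,\varepsilon_1)^d \to (-\varepsilon_0,\varepsilon_0)$ whose graph is the relevant piece of $\rd^+_{\tilde g_n}\iota(M)$, with Lipschitz constant bounded by a function of $\delta$ alone (coming from $\tilde g_n$-achronality in a near-Minkowskian chart). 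Arzel\`a--Ascoli then extracts a subsequence converging uniformly to a Lipschitz function $f$. I now verify that $f$ realises Proposition \ref{PropBoundaryChart} for $\tilde g$: (i) points strictly below the graph of $f$ lie in $\iota(M)$, since they lie strictly below $f_n$ for $n$ large; (ii) the graph of $f$ lies in the closed set $\rd\iota(M)$ as a uniform limit of points there; (iii) the graph is $\tilde g$-achronal, because wider cones make $\tilde g_n$-achronality stronger than $\tilde g$-achronality and achronality with respect to smooth timelike curves is preserved under $C^0$-convergence of the metric together with uniform convergence of the graph; (iv) each point on the graph is in $\rd^+\iota(M)$ for $\tilde g$, witnessed by the smooth coordinate line $s \mapsto (s,\ux_0)$ which is $\tilde g$-timelike in the chart, lies in $\iota(M)$ for $s<f(\ux_0)$ by (i), and terminates on the graph at $s = f(\ux_0)$. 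In particular $\tilde p \in \rd^+\iota(M)$ and $\tilde U$ is a future boundary chart for $\tilde g$.

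Finally, for the ultimate containment of $\iota\circ\gamma$ below the graph of $f$: Lemma \ref{LemLeavingCurveSmooth} applied to each $\tilde g_n$ gives that $\iota\circ\gamma$ is ultimately contained in the connected component of $\iota(M)\cap\tilde U$ strictly below the graph of $f_n$; since $(\iota\circ\gamma)(s)\in\iota(M)$ never meets $\rd\iota(M)$, which contains the graph of $f$, the uniform Lipschitz control of the $f_n$ makes the ``below'' connected component of $\iota(M)\cap\tilde U$ stable under $f_n \to f$, and ultimate containment below the graph of $f$ follows. The main obstacle is the approximation step: producing smooth $\tilde g_n$ with strictly wider cones while preserving global hyperbolicity of $(M, g_n)$ requires a careful (but standard) Chru\'sciel--Grant-type smoothing. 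A secondary subtlety is the final limit argument, which relies crucially on the uniform Lipschitz control afforded by working in a single, $\tilde g$-near-Minkowskian chart rather than the $\tilde g_n$-dependent charts that Proposition \ref{PropBoundaryChart} delivers directly.
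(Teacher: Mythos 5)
Your strategy — smooth the metric with \emph{wider} cones, apply Lemma \ref{LemLeavingCurveSmooth} to each approximation, and pass to a limit of the boundary graphs — has a structural problem that shows up before the technical details: nowhere does your argument actually use that $\tilde g$ is locally Lipschitz. Every step you describe (convolution smoothing with cone widening, near-Minkowskian charts, achronality giving Lipschitz graphs, Arzel\`a--Ascoli, the vertical coordinate line as witness) is available verbatim for a merely continuous extension. But Example \ref{ExFuBound} shows the proposition is \emph{false} for $C^0$-extensions, so the argument cannot close as written. The concrete gap is the step where you ``apply Proposition \ref{PropBoundaryChart} to $(\tilde M,\tilde g_n)$ in this fixed chart'': that proposition constructs its own chart, with sizes $\varepsilon_0,\varepsilon_1$ depending on the metric and the witnessing curve, and there is no justification that these can be taken uniform in $n$, nor that $\rd^+_{\tilde g_n}\iota(M)\cap\tilde U$ is a graph over the full $\ux$-domain of your prescribed chart. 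In the causal-bubble example the relevant chart sizes necessarily degenerate as $n\to\infty$ (the region $\iota(M)$ pinches to a cusp at the boundary point), so Arzel\`a--Ascoli on a fixed domain has nothing to act on. Establishing this uniformity is exactly where the Lipschitz condition would have to enter, and you supply no mechanism for it. A secondary, also genuine, issue is the assertion that $(M,g_n)$ ``remains globally hyperbolic with the same Cauchy hypersurface $\Sigma$'': global hyperbolicity is not preserved under cone \emph{widening}, a Cauchy hypersurface for $g$ need not be one for $g_n$, and the stability results you would need (Chru\'sciel--Grant/S\"amann type) are nontrivial and would have to be invoked with care on $\tilde M$, not just on $M$.

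For contrast, the paper goes in the opposite direction: it produces a smooth auxiliary metric $\check g\prec\tilde g$ with \emph{narrower} cones, so that $(M,\iota^*\check g)$ is automatically globally hyperbolic and any $\check g$-timelike curve is $\tilde g$-timelike — the two points on which your wider-cone scheme struggles. The price is that the merely $g$-causal curve $\gamma$ may fail to be $\check g$-causal, and this is precisely where the Lipschitz bound is spent: following Chru\'sciel--Grant, the curve is pushed down in the time coordinate by $\varepsilon f(s)$ with $f$ solving $\dot f A+fB=-1$, and the Lipschitz estimate $|\tilde g_{\mu\nu}(\tilde\Gamma_\varepsilon(s))-\tilde g_{\mu\nu}(\tilde\gamma(s))|\le\Lambda\,\varepsilon f(s)$ is what makes the shifted curve \emph{uniformly} timelike, hence timelike for some smooth $\check g\prec\tilde g$. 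A global-hyperbolicity argument then shows the shifted region lies in $\iota(M)$, and Lemma \ref{LemLeavingCurveSmooth} is applied once, to $(\tilde M,\check g)$. If you want to salvage your approach you would need to prove a uniform-in-$n$ version of Proposition \ref{PropBoundaryChart} in a fixed chart, and you should expect that proof to consume the Lipschitz hypothesis in an essential way.
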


\begin{proof}
\textbf{\underline{Step 0:}} Let $\tilde{\varphi} : \tilde{U} \to (- \tilde{\varepsilon}, \tilde{\varepsilon})^{d+1}$ be a near-Minkowskian neighbourhood centred at $\tilde{p}$ with $|\tilde{g}_{\mu \nu} (\tilde{x}) - m_{\mu \nu} | < \delta$, where $0 < \delta < \frac{1}{4}$, and with $\mathrm{Lip}(\tilde{g}_{\mu \nu}) \leq \Lambda$.\footnote{That is $|\tilde{g}_{\mu \nu}(\tilde{x}) - \tilde{g}_{\mu \nu}(\tilde{y})| \leq \Lambda \cdot ||\tilde{x}- \tilde{y}||_{\R^{d+1}}$ for all $\mu, \nu \in \{0, \ldots, d\}$ and $\tilde{x}, \tilde{y} \in (-\tilde{\varepsilon}, \tilde{\varepsilon})^{d+1}$.} Without loss of generality we can assume that $\tilde{\gamma}:= \iota \circ \gamma$ is completely contained in $\tilde{U}$. Moreover, since $\tilde{x}^0$ is a time-function on $\tilde{U}$, we can reparameterise $\tilde{\gamma}$ by $\tilde{x}^0$ to obtain another $C^1$-parameterisation of $\tilde{\gamma}$ in which the curve is Lipschitz with respect to the Euclidean metric associated to the coordinates $\tilde{x}^\mu$. Thus, without loss of generality, we can assume that $\tilde{\gamma} : [-\mu_0, 0) \to \tilde{U}$ is $C^1$ and causal with $\dot{\tilde{\gamma}}^0(s) = 1$ and $||\dot{\tilde{\gamma}}(s)||_{\R^{d+1}} \lesssim 1$ for all $s \in [-\mu_0, 0)$. Also recall that $\lim_{s \to 0} \tilde{\gamma}(s) = \tilde{p}$.

\textbf{\underline{Step 1:}} We now follow the construction in Lemma 1.15 in \cite{ChrusGra12} to construct a `locally uniformly timelike curve' in $\iota(M)$ with limit point $\tilde{p}$. We set
\begin{equation*}
\tilde{\Gamma}_\varepsilon(s) = \begin{pmatrix}
\tilde{\Gamma}^0_\varepsilon(s) \\ \tilde{\Gamma}_\varepsilon^i(s)
\end{pmatrix} := \begin{pmatrix}
\tilde{\gamma}^0(s) - \varepsilon f(s) \\ \tilde{\gamma}^i(s) 
\end{pmatrix} \;,
\end{equation*}
where $\varepsilon >0$ and $f$ will be chosen such that $f(0) = 0$ and $f(s) \geq 0$, cf.\ Figure \ref{FigGammaEps}. We estimate
\begin{equation*}
\begin{split}
\tilde{g}_{\tilde{\Gamma}_\varepsilon(s)} \big(\dot{\tilde{\Gamma}}_\varepsilon(s), \dot{\tilde{\Gamma}}_\varepsilon(s) \big) &= \Big[ \big(\tilde{g}_{\tilde{\Gamma}_\varepsilon(s)}\big)_{\mu \nu} - \big(\tilde{g}_{\tilde{\gamma}(s)}\big)_{\mu \nu} \Big] \dot{\tilde{\Gamma}}^\mu_\varepsilon(s) \dot{\tilde{\Gamma}}^\nu_\varepsilon(s) + \big(\tilde{g}_{\tilde{\gamma}(s)}\big) \big( \dot{\tilde{\gamma}}(s) - \varepsilon \dot{f}(s) \rd_0,  \dot{\tilde{\gamma}}(s) - \varepsilon \dot{f}(s) \rd_0\big) \\[3pt]
&\leq (d+1)^2 \Lambda || \tilde{\Gamma}_\varepsilon(s) - \tilde{\gamma}(s)||_{\R^{d+1}} \cdot || \dot{\tilde{\Gamma}}_\varepsilon(s) ||^2_{\R^{d+1}} \\[3pt]
&\qquad \quad + \tilde{g}_{\tilde{\gamma}(s)} \big( \dot{\tilde{\gamma}}(s), \dot{\tilde{\gamma}}(s) \big) - 2 \varepsilon \dot{f}(s) \tilde{g}_{\tilde{\gamma}(s)} \big( \dot{\tilde{\gamma}}(s), \rd_0\big) + \varepsilon^2 \big(\dot{f}(s)\big)^2 \tilde{g}_{\tilde{\gamma}(s)} (\rd_0, \rd_0) \\[3pt]
&\leq (d+1)^2 \Lambda \cdot \varepsilon f(s) \cdot 2 \Big( || \dot{\tilde{\gamma}}(s)||^2_{\R^{d+1}} + \varepsilon^2 \big( \dot{f}(s)\big)^2 \Big) - 2 \varepsilon\dot{f}(s) \tilde{g}_{\tilde{\gamma}(s)}\big(\dot{\tilde{\gamma}}(s), \rd_0\big) \\[3pt]
&\leq  \varepsilon \dot{f}(s) A(s) + \varepsilon f(s) B(s) + \varepsilon^3 f(s) \big(\dot{f}(s)\big)^2 C
\end{split}
\end{equation*}
with 
\begin{align*}
A(s) &= 2 \big| \tilde{g}_{\tilde{\gamma}(s)} \big( \dot{\tilde{\gamma}}(s), \rd_0 \big) \big| \\[3pt]
B(s) &= 2 (d+1)^2  \Lambda || \dot{\tilde{\gamma}}(s) ||^2_{\R^{d+1}} \\[3pt]
C&= 2(d+1)^2 \Lambda \;.
\end{align*}
Note that $A(s)$ and $B(s)$ are continuous and uniformly bounded away from $0$ and $+\infty$ on $[-\mu_0, 0)$.\footnote{To see that $A(s)$ is uniformly bounded away from $0$, let $\tilde{e}_\mu(s)$ be an ONB along $\tilde{\gamma}$ with $e_0(s) \sim \rd_0$. Thus, $|\tilde{g}\big(\dot{\tilde{\gamma}}(s), \tilde{e}_0 \big) |^2 \geq \sum_i \tilde{g}\big(\dot{\tilde{\gamma}}(s), \tilde{e}_i\big)^2 $.
Note that the matrices relating $\tilde{e}_\mu$ with $\rd_\mu$ are uniformly bounded. Thus, assuming to the contrary $|\tilde{g}\big(\dot{\tilde{\gamma}}(s), \tilde{e}_0\big)| \to 0$, then $\dot{\tilde{\gamma}}^\mu(s) \to 0$, contradicting our parametrisation $\dot{\tilde{\gamma}}^0(s) = 1$.}
Setting
$$f(s) := \int\limits_0^s - \frac{1}{A(t)} \exp \Big( \int\limits_t^s - \frac{B(r)}{A(r)} \,dr \Big) \, dt = \int\limits^0_s  \frac{1}{A(t)} \exp \Big( \int\limits_s^t  \frac{B(r)}{A(r)} \,dr \Big) \, dt \;,$$
which is a solution of  $\dot{f}(s) A(s) + f(s) B(s) = -1$ and $f(0) = 0$, we have that $f \geq 0$, $f \in C^0\big([-\mu_0, 0]\big) \cap C^1 \big([-\mu_0, 0) \big)$, and $\dot{f}$ is uniformly bounded on $[-\mu_0, 0)$. In particular $\tilde{\Gamma}_\varepsilon$ is $C^1$ on $[- \mu_0, 0)$.

It follows that $$\tilde{g}_{\tilde{\Gamma}_\varepsilon(s)} \big( \dot{\tilde{\Gamma}}_\varepsilon(s), \dot{\tilde{\Gamma}}_\varepsilon(s) \big) \leq - \varepsilon + \varepsilon^3 f(s) \big[\dot{f}(s) \big]^2 C \;.$$
Choosing $\varepsilon_0>0$ small enough we obtain that $\tilde{\Gamma}_\varepsilon(s)$ is contained in $\tilde{U}$ and
\begin{equation}
\label{EqUniTime}
\tilde{g}_{\tilde{\Gamma}_\varepsilon(s)} \big( \dot{\tilde{\Gamma}}_\varepsilon(s), \dot{\tilde{\Gamma}}_\varepsilon(s)\big) \leq - \frac{1}{2} \varepsilon \quad \textnormal{ for all } 0 \leq \varepsilon \leq \varepsilon_0\;.
\end{equation}
After making $\varepsilon_0$ even smaller if necessary we can moreover guarantee that
\begin{equation}
\label{EqContM}
\big[ \tilde{\gamma}^0(-\mu_0) - \varepsilon_0 f(-\mu_0) , \tilde{\gamma}^0(-\mu_0) \big] \times \{ \tilde{\gamma}^i(- \mu_0)\} \subseteq (-\tilde{\varepsilon}, \tilde{\varepsilon})^{d+1} \quad \textnormal{ is contained in } \iota(M).
\end{equation}
Note that $\dot{\tilde{\Gamma}}_{\varepsilon_0}(s) = \dot{\tilde{\gamma}}(s) - \varepsilon_0 \dot{f}(s) \rd_0$ is uniformly bounded in $|| \cdot ||_{\R^{d+1}}$ on $[-\mu_0, 0)$ and thus it follows from \eqref{EqUniTime} that the tangent is uniformly bounded away from the light cone. In the same way as in Lemma 1.15 and Proposition 1.2 from \cite{ChrusGra12} it follows that there is a smooth Lorentzian metric\footnote{Here, $\check{g} \prec \tilde{g}$ means that the causal cones of $\check{g}$ are contained in those of $\tilde{g}$.} $\check{g} \prec \tilde{g}$ on $\tilde{M}$ such that $\tilde{\Gamma}_{\varepsilon_0} : [- \mu_0, 0) \to \tilde{U}$ is timelike with respect to $\check{g}$.

\textbf{\underline{Step 2:}} We now show that $\{ \big( \tilde{\gamma}^0(s) - \varepsilon f(s), \tilde{\gamma}^i(s) \big) \in (-\tilde{\varepsilon}, \tilde{\varepsilon})^{d+1} \; | \; s \in [-\mu_0, 0), \; 0 \leq \varepsilon \leq \varepsilon_0 \}$ is contained in $\iota(M)$ using a similar method as in the proof of Lemma \ref{LemLeavingCurveSmooth}.

Let $\Sigma$ be a Cauchy hypersurface of $M$ with $\iota^{-1} \big( \tilde{\Gamma}_{\varepsilon_0}(-\mu_0) \big) \in I^+(\Sigma, M)$.  Let
$$A:= \Big{\{} s \in [-\mu_0, 0) \; | \; \big[\tilde{\gamma}^0(s') - \varepsilon_0 f(s'), \tilde{\gamma}^0(s') \big] \times \{\tilde{\gamma}^i(s') \} \subseteq \iota(M) \textnormal{ for all } -\mu_0 \leq s' \leq s \Big{\}} \;.$$
By \eqref{EqContM}, the compactness of $\big[\tilde{\gamma}^0(s') - \varepsilon_0 f(s'), \tilde{\gamma}^0(s') \big] \times \{\tilde{\gamma}^i(s') \}$, and the openness of $\iota(M)$,  it follows firstly that $A$ is non-empty, and, moreover, that if $s' \in A$, then we also have $s'' \in A$ for all $s''$ a little larger than $s'$. Thus, $A$ must be of the form $[-\mu_0, s_0)$ with $-\mu_0 < s_0 \leq 0$. Assuming that $s_0$ is strictly smaller than $0$, there exists then $0 < \varepsilon_1 \leq \varepsilon_0$ such that 
\begin{equation}
\label{EqNotInM}
\big(\tilde{\gamma}^0(s_0) - \varepsilon_1 f(s_0), \tilde{\gamma}^i(s_0) \big) \notin \iota(M) \;.
\end{equation}
First note that we have indeed
\begin{equation}
\label{EqIndeedinFuture}
\bigcup\limits_{-\mu_0 \leq s < s_0} \big[\tilde{\gamma}^0(s) - \varepsilon_0 f(s), \tilde{\gamma}^0(s) \big] \times \{\tilde{\gamma}^i(s) \} \subseteq \iota\big(I^+(\Sigma, M)\big) \;,
\end{equation}
since $\tilde{\Gamma}_{\varepsilon_0}$ is future directed timelike and $\rd_0$ is so, too. Now consider the family of curves $$\tilde{\beta}_\lambda(\varepsilon) := \Big( \tilde{\gamma}^0\big(s_0 - \lambda (\varepsilon_1 - \varepsilon)\big) - \varepsilon f(s_0), \tilde{\gamma}^i \big(s_0 - \lambda(\varepsilon_1 - \varepsilon)\big) \Big) \quad \textnormal{ with } \lambda >0 \textnormal{ and } \varepsilon \in [0, \varepsilon_1). $$
\begin{figure}[h] 
  \centering
  \def\svgwidth{4.5cm}
   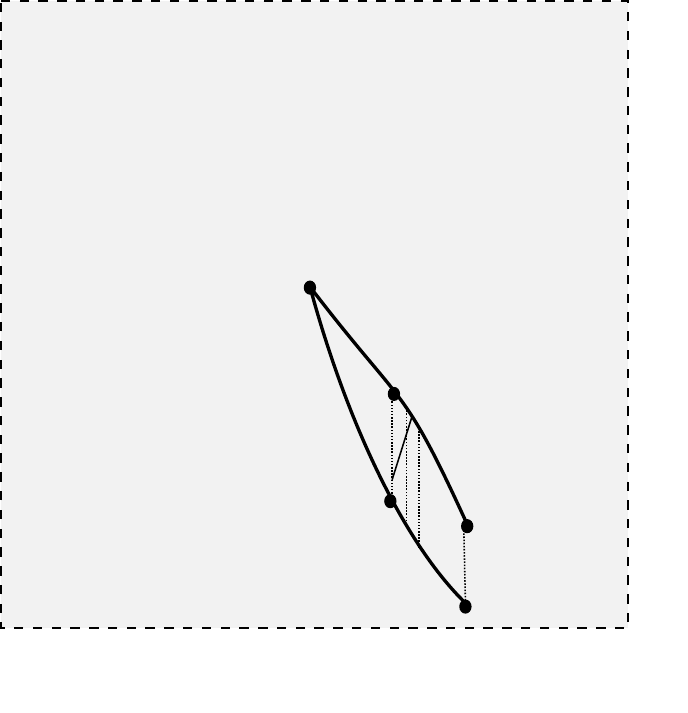 
      \caption{The curves $\tilde{\gamma}$, $\tilde{\Gamma}_{\varepsilon_0}$, and $\tilde{\beta}_{\lambda_0}$.} \label{FigGammaEps}
\end{figure}
We have  $$\dot{\tilde{\beta}}_\lambda(\varepsilon) = \Big[ \lambda \dot{\tilde{\gamma}}^0\big( s_0 - \lambda(\varepsilon_1 - \varepsilon) \big) - f(s_0) \Big] \, \rd_0 + \lambda \dot{\tilde{\gamma}}^i\big( s_0 - \lambda(\varepsilon_1 - \varepsilon)\big) \, \rd_i \;.$$
Since $||\dot{\tilde{\gamma}} ||_{\R^{d+1}} \lesssim 1$ and $f(s_0) >0$ we can fix $\lambda_0 >0$ small enough such that $\tilde{\beta}_{\lambda_0} : [0, \varepsilon_1) \to \tilde{U}$ is a past directed timelike curve. By \eqref{EqIndeedinFuture} its image under $\iota^{-1}$ is contained in $I^+(\Sigma, M)$ and clearly it is past inextendible in $M$ by \eqref{EqNotInM}. This contradicts the global hyperbolicity of $(M,g)$ and thus we must have $s_0 = 0$.

\textbf{\underline{Step 3:}} Note that it follows in particular from Step 2 that $\tilde{\Gamma}_{\varepsilon_0} : [- \mu_0, 0) \to \tilde{U}$ maps into $\iota(M)$ with $\lim_{s \to 0} \tilde{\Gamma}_{\varepsilon_0}(s) = \tilde{p}$ and, moreover, by the construction in Step 1, it is $C^1$. We now consider the smooth Lorentzian manifold $(M, \iota^* \check{g} )$ which is also globally hyperbolic since the causal cones of $\iota^* \check{g}$ are contained in those of $g$. Moreover, $(\tilde{M}, \check{g})$ is a smooth extension of $(M, \iota^*\check{g})$. We can now apply Lemma \ref{LemLeavingCurveSmooth} to infer that there exists a smooth $\check{g}$-timelike curve $\tilde{\sigma} : [-1,0] \to \tilde{M}$ with $\tilde{\sigma}|_{[-1,0)}$ being future directed in $\iota(M)$ and $\tilde{\sigma}(0) = \tilde{p}$. Since $\check{g} \prec \tilde{g}$, this curve is also $\tilde{g}$-timelike. We can now forget about the metric $\check{g}$ and apply 
Proposition \ref{PropBoundaryChart} to $(M,g)$ and $(\tilde{M}, \tilde{g})$ to obtain a boundary chart with $\tilde{\sigma}|_{[-1,0)}$ and $\tilde{\Gamma}_{\varepsilon_0}$ being ultimately contained below the graph of $f$. Finally, by Step 2, $\tilde{\Gamma}_{\varepsilon_0}$ and $\iota \circ \gamma$ are ultimately contained in the same connected component of $\iota(M)$ in the boundary chart which shows that $\iota \circ \gamma$ is ultimately contained below the graph of $f$.  
\end{proof}

The following example is instructive since it provides a counterexample to various statements about merely continuous extensions. In particular it shows that the analogue of Proposition \ref{PropLeavingLipschitz} for merely continuous extensions does not hold.
\begin{example} \label{ExFuBound}
We consider $\R^2$ with the canonical coordinates $(x,y)$ and the family of curves $[0,\frac{1}{2}] \ni s \mapsto (s,s^\alpha)$ with $\alpha>1$. The velocity vector is given by $\partial_x + \alpha s^{\alpha-1} \partial_y$. We want to turn those curves into null curves of a Lorentzian metric on a subset of $\R^2$. Moreover, on the curves the relation $\alpha = \frac{\log y}{\log x}$ holds so that  the vector field $ L:= \partial_x + \frac{y}{x} \frac{\log y}{\log x} \partial_y$, defined on $$M:= \{(x,y) \in \R^2 \; | \; x^3 < y < x^2 \; \textnormal{ and } \; 0 < x < \frac{1}{2}\}\;,$$ is tangent to the family of curves. We also set $\underline{L} := \rd_y$ and define the inverse of a smooth Lorentzian metric on $M$ by
\begin{equation*}
g^{-1} := -\big[L \otimes \underline{L} + \underline{L} \otimes L\big] = -\big[\partial_x \otimes \partial_y + \partial_y \otimes \partial_x\big] -2 \frac{y \log y}{x \log x} \,\partial_y \otimes \partial_y \;.
\end{equation*}
It thus follows that
\begin{equation*}
g = 2 \frac{y \log y}{x \log x} dx^2 - \big[dx \otimes dy + dy \otimes dx\big] 
\end{equation*}
and the vector fields $L$ and $\underline{L}$ are null by construction.

We now construct a $C^0$-extension $\iota : M \hookrightarrow \R^2_{x<\frac{2}{3}}$. We extend $L$ continuously to $\R^2_{x<\frac{2}{3}}$ by
\begin{equation*}
\tilde{L} := \begin{cases} \partial_x + \frac{y \log y}{x \log x} \partial_y \qquad &\textnormal{ for } x^3 \leq y \leq x^2 \; \textnormal{ and } \; 0< x < \frac{2}{3} \\
\partial_x + 3x^2\partial_y  &\textnormal{ for } -\infty < y \leq x^3 \; \textnormal{ and } \; 0< x < \frac{2}{3} \\
\partial_x + 2x \partial_y &\textnormal{ for } x^2 \leq y  < \infty \; \textnormal{ and } \; 0< x < \frac{2}{3} \\
\rd_x &\textnormal{ for } x \leq 0  \end{cases}
\end{equation*}
and set $\tilde{\underline{L}} := \rd_y$ on $\R^2_{x<\frac{2}{3}}$. Then $\tilde{g}^{-1} := -\big[\tilde{L} \otimes \tilde{\underline{L}} + \tilde{\underline{L}} \otimes \tilde{L}\big]$ defines the inverse of a continuous Lorentzian metric on $\R^2_{x<\frac{2}{3}}$ which, under the canonical inclusion map $\iota : M \hookrightarrow \R^2_{x<\frac{2}{3}}$, furnishes a $C^0$-extension of $(M,g)$. We fix the time orientation by stipulating that $\tilde{L}$ and $\tilde{\underline{L}}$ are future directed.

\begin{figure}[h]
  \centering
  \begin{minipage}{.47\textwidth}
  \centering
  \def\svgwidth{7cm}
   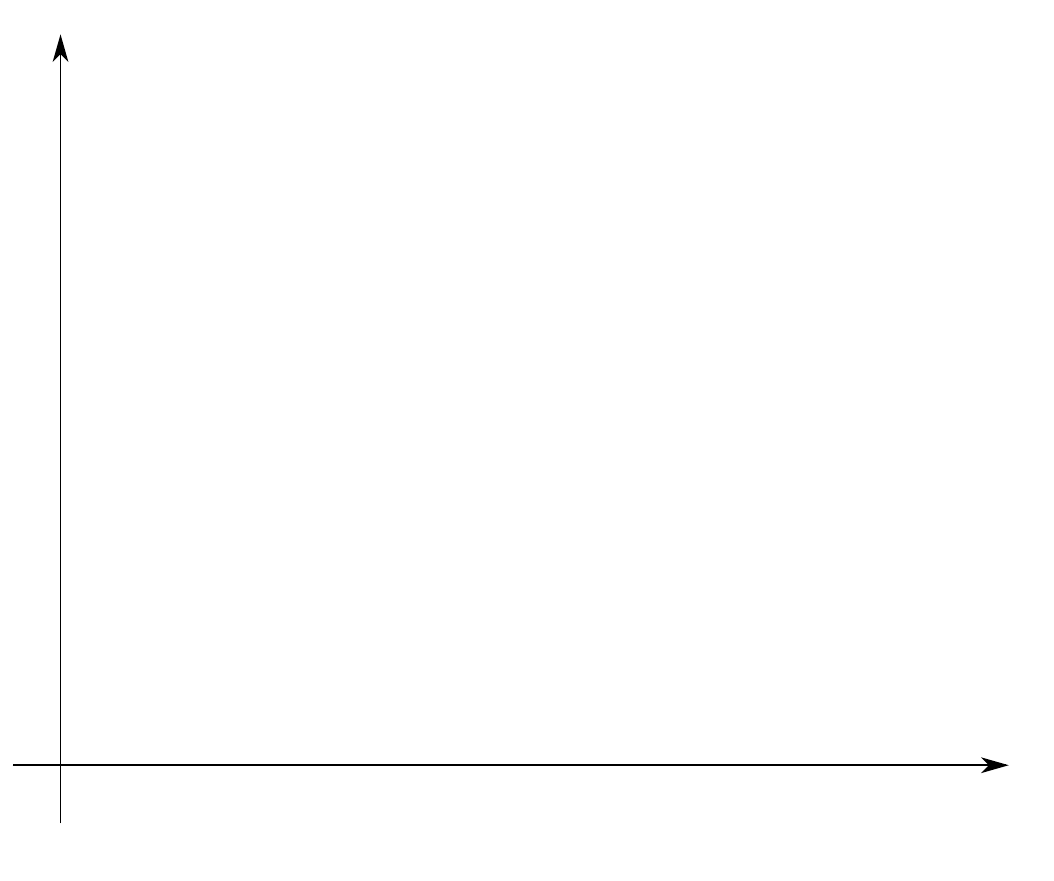
      \caption{The points $(0,0)$ and $(\frac{1}{2}, \frac{1}{8})$ lie in $\rd \iota(M)$ but not in $\rd^-\iota(M)$.}
      \end{minipage}\hspace*{5mm}
       \begin{minipage}{.47\textwidth}
  \centering
  \def\svgwidth{7cm}
   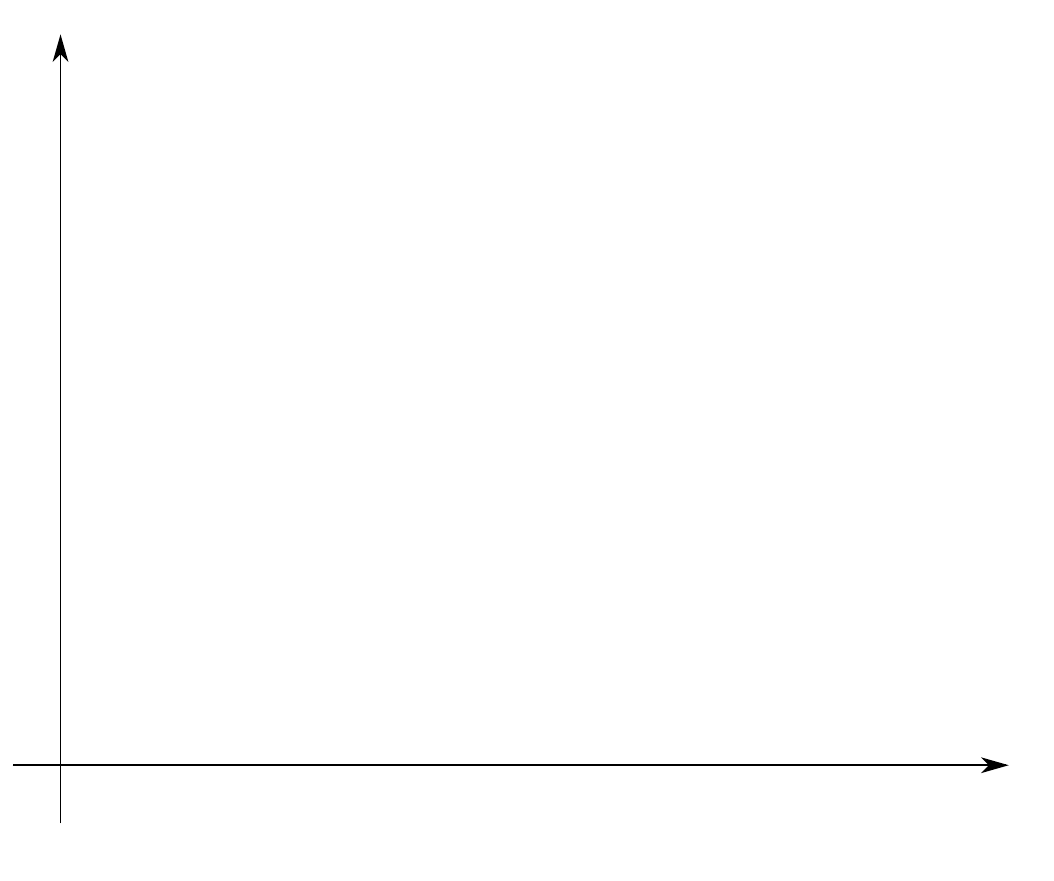
      \caption{The Cauchy hypersurface $\Sigma$ and timelike curve $\gamma$.}
      \end{minipage}
\end{figure}

Consider the point $(\frac{1}{2}, \frac{1}{4})$. Its past is delineated by the left going null curve which goes to $0$ and the right going null curve which goes straight down. Along the right going null curve, at the point $(\frac{1}{2}, \frac{1}{8})$, we switch to the left going past directed null curve which catches up with the other left and past going null curve at the origin! This set delineates $M$.

Firstly, let us note that $(M,g)$ is globally hyperbolic: since we know the exact form of the null geodesics it is easy to see that $J^+(p, M) \cap J^-(q,M)$ is compact for all $p,q \in M$. Furthermore, we do have that $y$ is strictly increasing along every future directed causal curve, which implies causality. These two conditions are sufficient for global hyperbolicity, see \cite{BerSa07}. 

One can also show that the curve $(0, \frac{1}{2}) \ni s \mapsto \Sigma(s) = (s, s^2 - s^3)$ parametrises a Cauchy hypersurface. A direct computation gives $$g\big(\dot{\Sigma}(s), \dot{\Sigma}(s) \big) =  2s^2 + 2s(1-s) \frac{\log(1-s)}{\log s} >0 \qquad \textnormal{ for } 0<s<\frac{1}{2}\;,$$ which shows that it is spacelike. It is also achronal. Let now $q \in M$ be a point above $\Sigma$. Then the left and past directed null curve starting at $q$ is of the form $s \mapsto (s, s^\alpha)$ with $2<\alpha <3$. Comparing asymptotics in $s$, this shows that the past of $q$ cuts out a compact portion of $\Sigma$, i.e., any past inextendible causal curve has to intersect $\Sigma$. Similarly, it is easily seen that if $q$ lies below $\Sigma$, the future of $q$ cuts out a compact portion of $\Sigma$ and thus any future inextendible curve has to intersect $\Sigma$. 

Secondly, consider the curve $(0, \frac{1}{2}) \ni s \mapsto \gamma(s) = (s, s^3 + s^4)$. A direct computation yields $$g\big(\dot{\gamma}(s), \dot{\gamma}(s)\big) = -2s^3 + 2s^2(1+s) \frac{\log(1+s)}{ \log s} <0 \qquad \textnormal{ for } 0 < s < \frac{1}{2}$$
which shows that $\gamma$ is a smooth timelike curve in $M$ with $\lim_{s \to 0} (\iota \circ \gamma)(s) = (0,0)$. And while it extends as a smooth curve to $\tilde{M}$, it does not extend as a smooth \emph{timelike} curve to $\tilde{M}$, since it is null at the origin. It can be easily shown that indeed any smooth and future directed timelike curve in $\tilde{M}$, which starts at $(0,0)$, cannot directly enter $\iota(M)$. Thus the origin is not a past boundary point of $\iota(M)$. 
\textbf{This shows that for globally hyperbolic \underline{$C^0$-extensions} it does make a difference whether one defines the future/past boundary with respect to smooth timelike curves in $\tilde{M}$ or just as limit points of smooth timelike curves in $M$. Thus, the analogue of Proposition \ref{PropLeavingLipschitz}  for $C^0$-extensions is false. Moreover, note that the point $(0,0)$ in this example shows that Proposition \ref{PropBoundaryChart} would not hold if one defined the future/past boundary just as limit points of smooth timelike curves in $M$.}

Finally, let us remark that in the language of \cite{ChrusGra12} the spacetime $(\R^2_{x<\frac{2}{3}}, \tilde{g})$ has a causal bubble at $(0,0)$. 
Using the terminology from \cite{ChrusGra12} it is an example of a spacetime in which the future of a point (here: the origin) with respect to `locally uniformly timelike curves' (smooth timelike curves) is a strict subset of the future with respect to locally Lipschitz timelike curves.\footnote{In the notation of \cite{ChrusGra12}, \emph{which is not used in this paper}, that is $\check{I}^+(0, \R^2_{x<\frac{2}{3}}) \subsetneq I^+(0, \R^2_{x<\frac{2}{3}})$.} Another causal bubble with this property was constructed in \cite{GraKuSaSt19}. The example here differs from that in \cite{GraKuSaSt19} qualitatively in the sense that a continuous family of left going null geodesics all intersect in one point -- while in \cite{GraKuSaSt19} the family of left going null geodesics meet, one by one, on a null curve.
\end{example}

\section{Bounding coordinate distance by generalised affine length} \label{SecCoordAffine}

Let $(M,g)$ be a Lorentzian manifold with $g \in C^1$. In particular, parallel transport is well-defined in this regularity class.
We briefly recall the \emph{generalised affine parameter} along an arbitrary  $C^1$-curve $\R \supseteq I \ni s \mapsto \gamma(s) \in M$. This concept was introduced to general relativity by Schmidt in \cite{Schm71}, but see also \cite{HawkEllis}. Let $p = \gamma(s_0)$ be a point on $\gamma$ and let $\{e_\alpha\}$, $\alpha  \in \{0, \ldots, d\}$, be an orthonormal basis of $T_pM$.\footnote{I.e., we have $g(e_\alpha, e_\beta) = \eta_{\alpha \beta}$ with $\eta = \mathrm{diag}(-1, 1, \ldots, 1)$.} By parallel transport along $\gamma$ we extend the orthonormal frame to a continuous orthonormal frame field $I \ni s \mapsto \{e_\alpha(s)\}$ along $\gamma$.  We can now expand $\dot{\gamma}(s) = \dot{\gamma}^\alpha(s) e_\alpha(s)$ and define the generalised affine parameter $\tau(s):= \int_{s_0}^s \sqrt{\sum_\alpha \dot{\gamma}^\alpha(s) \dot{\gamma}^\alpha(s)} \, ds$. It clearly depends on the base point $p = \gamma(s_0)$ and the choice of orthonormal basis $\{e_\alpha\}$ of $T_pM$. The \emph{generalised affine parameter length}, or for short \emph{gap-length}, of the curve $\gamma$ with respect to the orthonormal basis $\{e_\alpha\}$ of $T_pM$ is then given as
\begin{equation*}
\Lg(\gamma) :=  \int_I \underbrace{\sqrt{\sum_\alpha \dot{\gamma}^\alpha(s) \dot{\gamma}^\alpha(s)}}_{=:||\dot{\gamma}^\alpha(s)||_{\R^{d+1}}} \, ds \;.
\end{equation*}
Let now $q = \gamma(s_1)$ be another point on $\gamma$ and $\{f_\beta\}$, $\beta \in \{0, \ldots, d\}$ an orthonormal basis of $T_qM$ which is extended, by parallel transport, to an orthonormal frame field $\{f_\beta(s)\}$ along $\gamma$. It thus follows that there exists a Lorentz transformation $L$ with $e_\alpha (s)= L_\alpha^{\;\, \beta} f_\beta(s)$ for all $s \in I$. Writing $\dot{\gamma}(s) = \dot{\gamma}^\alpha_e(s) e_\alpha(s) = \dot{\gamma}^\alpha_f(s) f_\alpha(s)$ to distinguish the expansion of $\dot{\gamma}$ with respect to the two frame fields, we obtain $\dot{\gamma}^\alpha_e(s) L_\alpha^{\; \, \beta} = \dot{\gamma}^\beta_f(s)$ and thus
\begin{equation*}
||L^{-1}||_{\R^{d+1}}^{-1}  || \dot{\gamma}^\alpha_e(s)||_{\R^{d+1}}  \leq || \dot{\gamma}^\beta_f(s)||_{\R^{d+1}} \leq ||L||_{\R^{d+1}} || \dot{\gamma}^\alpha_e(s)||_{\R^{d+1}} 
\end{equation*}
where we have used that the matrix $L_\alpha^{\; \, \beta}$ is invertible. Hence, we also have 
\begin{equation}
\label{EqGapLorentzT}
||L^{-1}||_{\R^{d+1}}^{-1} \Lg(\gamma) \leq L_{\mathrm{gap}, f_\beta}(\gamma) \leq  ||L||_{\R^{d+1}} \Lg(\gamma) \;.
\end{equation}
Thus, although the gap-length of a curve depends on the choice of frame field, any two such choices yield comparable results. Let us remark that the concept of gap-length readily extends to curves which are only piecewise $C^1$.

Another take on the gap-length is to consider the coframe field $\{\omega^\alpha(s)\}$ which is dual to the parallel frame field $\{e_\alpha(s)\}$ along $\gamma$, i.e., so that we have $\omega^\beta(s) \big(e_\alpha(s)\big) = \delta^\beta_{\; \, \alpha}$. Clearly, $\{\omega^\alpha(s)\}$ is also parallel along $\gamma$. We can then define a Riemannian metric along $\gamma$ by $h_{e_\alpha}(s) := \omega^0 (s)\otimes \omega^0(s) + \ldots + \omega^d (s)\otimes \omega^d(s)$ to obtain $\Lg(\gamma) = \int_I \sqrt{ h_{e_\alpha}\big(\dot{\gamma}(s), \dot{\gamma}(s)\big)} \, ds$.

\begin{lemma} \label{LemGapCoord}
Consider an open subset $V \subseteq \R^{d+1}$ with global canonical coordinates $x^\mu$ and a $C^1$-regular Lorentzian metric $g$ which satisfies, in the coordinates $x^\mu$, the uniform bounds
\begin{equation}
\label{EqLemBoundsG}
|g^{\mu \nu}|,\; |g_{\mu \nu}| ,\; |\rd_\kappa g_{\mu \nu}| \leq C_g 
\end{equation}
for all $\mu, \nu, \kappa \in \{0, \ldots, d\}$ for $0<C_g < \infty$. Let $p \in V$ and let $\{e_\alpha\}$ be an orthonormal basis of $T_pV$ with $e_\alpha = e_\alpha^{\; \, \mu} \rd_\mu$ and $\rd_\mu = (e^{-1})^{\; \,\alpha}_{\mu} e_\alpha$, where the matrices $e$ and $e^{-1}$ are bounded by $0< C_e < \infty$ in operator norm:
\begin{equation}
\label{EqLemBoundsE}
||e||_{\R^{d+1}} , \; ||e^{-1}||_{\R^{d+1}} \leq C_e \;.
\end{equation}
Let $\overline{g} := (dx^0)^2 + \ldots + (dx^d)^2$ be the coordinate Euclidean metric, $0< a \leq \infty$, and let\footnote{In the case of $a< \infty$, the interval on which $\gamma$ is defined may also be closed.} $\gamma :[0, a) \to V$ be a piecewise $C^1$-curve starting at $p = \gamma(0)$ with coordinate length (i.e., with respect to the metric $\overline{g}$) bounded by $0<C_L < \infty$:
\begin{equation}
\label{EqLemBoundL}
L_{\overline{g}}(\gamma) \leq C_L\;.
\end{equation}
Let $e_\alpha(s)$ denote the parallel transport  of $e_\alpha$ along $s \mapsto \gamma(s)$ and write $e_\alpha(s) = e_\alpha^{\; \, \mu}(s) \rd_\mu$ and $\rd_\mu = (e^{-1})^{\; \,\alpha}_{ \mu}(s) e_\alpha(s)$.
Then there is a constant $0<B< \infty$ depending only on $C_g, C_e, C_L$ (and in particular independent of the point $p$) such that for the matrices $e(s)$ and $e^{-1}(s)$ we have 
\begin{equation}
\label{EqLemRes1}
||e(s)||_{\R^{d+1}}, \; ||e^{-1}(s)||_{\R^{d+1}} \leq B \quad \textnormal{ for all } s \in [0, a)\;.
\end{equation}
In particular, it follows that 
\begin{equation}\label{EqLemRes2}
B^{-1} \cdot  \Lg(\gamma) \leq L_{\overline{g}}(\gamma) \leq B \cdot \Lg(\gamma) \;.
\end{equation}
\end{lemma}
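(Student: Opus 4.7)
The strategy is to read parallel transport as a linear first-order ODE for the matrix $E(s) = \big(e_\alpha^{\;\mu}(s)\big)$ and close it by Gr\"onwall. Reparametrising $\gamma$ by its Euclidean arc length, we may assume $\gamma$ is defined on some interval $[0,T]$ with $T \leq C_L$ and $\|\dot{\gamma}(\tau)\|_{\R^{d+1}} = 1$ for almost every $\tau$. In this parametrisation the parallel transport equation reads
\begin{equation*}
\frac{dE^{\;\,\mu}_\alpha}{d\tau}(\tau) = -\Gamma^\mu_{\;\nu\kappa}\big(\gamma(\tau)\big)\,\dot{\gamma}^\nu(\tau)\,E_\alpha^{\;\,\kappa}(\tau)\;, \qquad E(0) = \big(e_\alpha^{\;\,\mu}\big)\;,
\end{equation*}
i.e.\ a linear ODE system $\dot{E} = -A(\tau)\, E$ with $|A(\tau)| \leq |\Gamma\big(\gamma(\tau)\big)|$.

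The key point is to bound the Christoffel symbols along $\gamma$ in terms of the hypotheses. The upper bound $|g_{\mu\nu}|, |\rd_\kappa g_{\mu\nu}| \leq C_g$ controls the numerator in the formula for $\Gamma$, but the denominator requires control of $g^{\mu\nu}$. This is recovered from the orthonormality of the parallel-transported frame: since parallel transport preserves inner products, the identity $E(\tau)\,g\big(\gamma(\tau)\big)\,E(\tau)^T = \eta$ holds, and inverting it yields $g^{-1}\big(\gamma(\tau)\big) = E(\tau)^T \eta\, E(\tau)$, so that $\|g^{-1}\big(\gamma(\tau)\big)\| \leq C\,\|E(\tau)\|^2$. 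Feeding this back into $|\Gamma|$ and running a bootstrap/continuity argument started from the initial bound $\|E(0)\| \leq C_e$, one obtains the desired uniform bound $\|E(\tau)\| \leq B_1$ on $[0,T]$, where $B_1$ depends only on $C_g$, $C_e$, $C_L$. This step is the heart of the lemma, and is the place where one uses crucially that the length $C_L$ enters the bound.

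To bound the inverse matrix $F(\tau) := E(\tau)^{-1}$, one applies exactly the same argument to the parallel-transported dual coframe $\{\omega^\alpha(s)\}$, which satisfies an analogous linear ODE with coefficients of the same form; this gives $\|F(\tau)\| \leq B_2$ with $B_2$ depending only on the same constants. Setting $B := \max(B_1, B_2)$ establishes \eqref{EqLemRes1}.

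The gap-length comparison \eqref{EqLemRes2} is then a pointwise consequence of \eqref{EqLemRes1}. Expanding $\dot{\gamma}(s) = \dot{\gamma}^\alpha(s)\,e_\alpha(s) = \dot{\gamma}^\mu(s)\,\rd_\mu$, one has the change of basis relations $\dot{\gamma}^\mu(s) = e_\alpha^{\;\,\mu}(s)\,\dot{\gamma}^\alpha(s)$ and $\dot{\gamma}^\alpha(s) = (e^{-1})^{\;\,\alpha}_\mu(s)\,\dot{\gamma}^\mu(s)$, so the bound on the matrices $E(s)$, $F(s)$ gives $B^{-1}\|\dot{\gamma}^\mu(s)\|_{\R^{d+1}} \leq \|\dot{\gamma}^\alpha(s)\|_{\R^{d+1}} \leq B\,\|\dot{\gamma}^\mu(s)\|_{\R^{d+1}}$ almost everywhere; integrating in $s$ gives \eqref{EqLemRes2}. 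The main obstacle, as described above, is closing the nonlinear feedback between the Christoffel bound and the frame bound in the key Gr\"onwall step.
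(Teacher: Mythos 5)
Your overall architecture --- reparametrise by Euclidean arc length, read parallel transport as a linear matrix ODE $\dot E = -A(\tau)E$, bound the coefficient matrix, apply Gr\"onwall, and then obtain \eqref{EqLemRes2} pointwise from \eqref{EqLemRes1} by converting $\dot\gamma^\mu$ into $\dot\gamma^\alpha$ and integrating --- is exactly the paper's. The treatment of the inverse matrix is a harmless variation: the paper bounds $F(s)^{-1}$ by observing that it is the backward parallel transport and hence obeys the same Gr\"onwall estimate, rather than re-running the argument on the dual coframe, but both routes work.

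The problem is the step you yourself single out as the heart of the lemma. Recovering $g^{-1}(\gamma(\tau)) = E(\tau)^T\eta\,E(\tau)$ from orthonormality and feeding it back into $|\Gamma|$ yields the differential inequality $\frac{d}{d\tau}\|E\| \leq C(C_g)\,\|E\|^3$, which is supercritical: its maximal solution with $\|E(0)\|\leq C_e$ blows up at time $\tau^*\sim \big(C(C_g)C_e^2\big)^{-1}$, so no continuity or iteration argument closes the bootstrap beyond coordinate length of order $(C_gC_e^2)^{-1}$; you do not obtain a bound $B(C_g,C_e,C_L)$ for arbitrary $C_L$. Moreover the obstruction is not an artefact of the method: the metric $-dt^2+(1-x)\,dx^2$ on $(-1,1)_t\times(-1,1)_x$ satisfies the component bounds \eqref{EqLemBoundsG} and \eqref{EqLemBoundsE} at the origin, yet the parallel transport of $\rd_x$ along the $x$-axis has component $(1-s)^{-1/2}$, which blows up within coordinate length $1$ --- so the frame bound genuinely requires control of $g^{\mu\nu}$ (equivalently a lower bound on $|\det g|$) as an input; it cannot be manufactured from the constraint $EgE^T=\eta$, since $\eta$ is indefinite. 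The paper's proof does not attempt your bootstrap: it reads the hypotheses as giving a uniform bound $\|A(s)\|\leq C(C_g)$ on the Christoffel symbols directly (the lemma is only ever applied in near-Minkowskian charts with $|g_{\mu\nu}-m_{\mu\nu}|<\delta<\tfrac14$, where $g^{\mu\nu}$ is uniformly bounded), so the ODE is genuinely linear with bounded coefficients and Gr\"onwall gives $\|F(s)\|,\|F(s)^{-1}\|\leq \exp\big(C(C_g)C_L\big)$ on all of $[0,C_L]$. To repair your argument you must add the inverse-metric bound to the hypotheses you actually use, rather than trying to derive it from the frame.
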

The result \eqref{EqLemRes1} can be read as the Riemannian metrics $\overline{g}|_{\gamma(s)}$ and $h_{e_\alpha}(s)$ along $\gamma$ being comparable. For this to hold it is in general needed that the curve $\gamma$ is of finite coordinate length, cf.\ Remark \ref{RemFiniteCoordLength}.
\begin{proof}
Let us first assume that $\gamma$ is $C^1$. We parameterise $\gamma$ by $\overline{g}$-arclength so that we can assume without loss of generality $\gamma : \big[0, L_{\overline{g}}(\gamma)\big) \to V$ with $||\dot{\gamma}||_{\R^{d+1}} = 1$ and $L_{\overline{g}}(\gamma) \leq C_L$. The parallel frame $e_\alpha(s)$ along $\gamma$ satisfies the parallel transport equation
\begin{equation}\label{EqPfLem1}
0 = \frac{D}{ds} X = \frac{d}{ds} X^\mu + \Gamma^\mu_{\rho \sigma}\big(\gamma(s)\big) \dot{\gamma}(s)^\rho X^\sigma(s) =: \frac{d}{ds} X^\mu +  A^\mu_{\; \, \sigma}(s) X^\sigma(s)\;,
\end{equation}
where, by \eqref{EqLemBoundsG} and $||\dot{\gamma}||_{\R^{d+1}} = 1$, we have $||A(s)||_{\R^{d+1}} \leq C(C_g)$. A standard Gronwall argument gives for solutions of \eqref{EqPfLem1}
\begin{equation} \label{EqPfLem2}
||X(s)||_{\R^{d+1}} \leq ||X(0)||_{\R^{d+1}} \exp\big(C(C_g) \cdot s\big) \;.
\end{equation}
Going over to the matrix formulation 
\begin{equation*}
0 = \frac{d}{ds} F^{\mu}_{\; \, \kappa}(s) + A^\mu_{\; \, \nu}(s) F^\nu_{\; \, \kappa}(s)
\end{equation*}
of the parallel transport equation \eqref{EqPfLem1} and prescribing $F(0)^\mu_{\; \, \kappa} = \id^\mu_{\; \, \kappa}$, we directly obtain from \eqref{EqPfLem2} for the solution $F(s)$ the estimate $||F(s)||_{\R^{d+1}} \leq \exp\big(C(C_g) \cdot s\big)$. Note that the matrix $F(s) : \R^{d+1} \to \R^{d+1}$ takes the coordinates of a vector $X \in T_pM$ as initial conditions and evolves it forward, by parallel transport, to give the coordinate expression of the vector at $\gamma(s)$. Since linearly independent initial vectors gives rise to linearly independent solutions, the matrix $F(s)$ is invertible for all $s \in \big[0, L_{\overline{g}}(\gamma)\big)$. It is easily seen that the inverse matrix $F(s)^{-1} : \R^{d+1} \to \R^{d+1}$ takes the coordinate expression of a vector at $\gamma(s)$ and evolves it, by parallel transport, back to $\gamma(0)$. This gives us the bounds
\begin{equation}\label{EqLemPfF}
||F(s)||_{\R^{d+1}},\; ||F(s)^{-1}||_{\R^{d+1}} \leq \exp\big(C(C_g) \cdot s\big) \;.
\end{equation}
Moreover, we have $e_{\alpha}^{\;\,\mu}(s) = F^{\mu}_{\; \, \kappa}(s) e^{\; \,\kappa}_{ \alpha}$ so that \eqref{EqLemRes1} follows from \eqref{EqLemPfF} together with \eqref{EqLemBoundL} and \eqref{EqLemBoundsE}.

Finally note that it follows from  $\dot{\gamma}(s) = \dot{\gamma}^{\alpha}(s) e_\alpha(s) = \dot{\gamma}^\mu(s) \rd_\mu$ that $\dot{\gamma}^\mu(s) = \dot{\gamma}^\alpha(s) e_\alpha^{\; \, \mu}(s)$ and $\dot{\gamma}^\alpha(s) = \dot{\gamma}^\mu(s) (e^{-1})_\mu^{\; \, \alpha} (s)$. Together with \eqref{EqLemRes1} we obtain for all $s$ $$||\dot{\gamma}^\mu(s)||_{\R^{d+1}} \sim ||\dot{\gamma}^\alpha(s)||_{\R^{d+1}} \;,$$ from which \eqref{EqLemRes2} follows.  

The case of $\gamma$ being piecewise $C^1$ follows analogously.
\end{proof}

\begin{remark}
\label{RemFiniteCoordLength}
The above lemma shows in particular that under the Lipschitz bound \eqref{EqLemBoundsG} on the metric the gap-length of  a curve cannot blow up in finite coordinate length. However, under the Lipschitz bound, there are curves of finite gap-length which have infinite coordinate length: consider for instance trapped null geodesics  which are inextendible but affine incomplete\footnote{Recall that for geodesics the affine parameter is in particular also a generalised affine parameter.} \cite{Mis67}, compare also with \cite{Schm73}. However, for small enough gap-lengths, where the smallness parameter depends on the geometry, the above lemma directly gives such a bound, see the following corollary. This is all that is needed for the purposes of this paper.
\end{remark}

\begin{corollary} \label{CorGapCoord}
Under the assumptions of Lemma \ref{LemGapCoord} there exists $\delta = \delta(C_g, C_e)$ and $D=D(C_g, C_e)$ such that for any curve $\gamma$ as in Lemma \ref{LemGapCoord} with $\Lg(\gamma) \leq \delta$ we have $L_{\overline{g}}(\gamma) \leq D \cdot \Lg(\gamma)$.
\end{corollary}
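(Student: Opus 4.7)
The plan is to deduce the corollary from Lemma \ref{LemGapCoord} via a simple continuity (bootstrap) argument that lets one trade the a priori coordinate-length hypothesis $L_{\overline g}(\gamma) \leq C_L$ in the lemma for a sufficiently small gap-length hypothesis. The only dependence of the constant $B$ in \eqref{EqLemRes2} on $C_L$ is through the factor $\exp(C(C_g) C_L)$ coming from \eqref{EqLemPfF}, so freezing $C_L$ at any convenient fixed value (I will take $C_L = 1$) produces a constant $B_0 = B(C_g, C_e, 1)$ depending only on $C_g$ and $C_e$.

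Concretely, I would set $D := B_0$ and $\delta := 1/(2 B_0)$, and argue as follows. Given a curve $\gamma : [0,a) \to V$ as in Lemma \ref{LemGapCoord} with $\Lg(\gamma) \leq \delta$, consider
\begin{equation*}
A := \{ s \in [0,a) \; | \; L_{\overline g}(\gamma|_{[0,s]}) \leq 1 \}.
\end{equation*}
Since $s \mapsto L_{\overline g}(\gamma|_{[0,s]})$ is continuous and vanishes at $s=0$, the set $A$ is a relatively closed subinterval of $[0,a)$ containing $0$, so either $A = [0,a)$ or $A = [0,s_*]$ for some $s_* < a$ with $L_{\overline g}(\gamma|_{[0,s_*]}) = 1$.

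The second alternative is ruled out as follows: apply Lemma \ref{LemGapCoord} to the piecewise $C^1$ restriction $\gamma|_{[0,s_*]}$ (a case permitted by the footnote to the lemma), with coordinate-length bound $C_L = 1$. This yields
\begin{equation*}
1 \;=\; L_{\overline g}(\gamma|_{[0,s_*]}) \;\leq\; B_0 \cdot \Lg(\gamma|_{[0,s_*]}) \;\leq\; B_0 \cdot \Lg(\gamma) \;\leq\; B_0 \cdot \delta \;=\; \tfrac{1}{2},
\end{equation*}
a contradiction. Hence $A = [0,a)$, so $L_{\overline g}(\gamma) \leq 1$, and a final application of Lemma \ref{LemGapCoord} (again with $C_L = 1$) gives $L_{\overline g}(\gamma) \leq B_0 \cdot \Lg(\gamma) = D \cdot \Lg(\gamma)$, as required.

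There is no genuine obstacle here — the only thing to check is that the restriction $\gamma|_{[0,s_*]}$ is admissible in the lemma (it is, by the footnote) and that both $\delta$ and $D$ depend only on $C_g$ and $C_e$ (they do, since $B_0 = B(C_g, C_e, 1)$). The single conceptual point worth flagging, and which Remark \ref{RemFiniteCoordLength} already anticipates, is that the bound in the corollary cannot be expected for arbitrarily large gap-lengths; the smallness of $\delta$ is precisely what prevents phenomena such as trapped incomplete null geodesics from spoiling the comparison.
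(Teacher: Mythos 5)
Your proof is correct and is essentially the paper's own argument: both choose $D = B(C_g,C_e,1)$ and $\delta = 1/(2B(C_g,C_e,1))$, truncate the curve at the first parameter where the coordinate length reaches $1$, and apply Lemma \ref{LemGapCoord} with $C_L=1$ to that truncation to reach the contradiction $1 \leq B_0\,\delta = \tfrac12$. The only cosmetic difference is that you package the truncation as a continuity/bootstrap argument on the set $A$, whereas the paper reparametrises by $\overline g$-arclength and restricts directly to $\gamma|_{[0,1]}$.
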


\begin{proof}
We choose $C_L = 1$ in Lemma \ref{LemGapCoord} to obtain 
\begin{equation}
\label{EqCorPf}
L_{\overline{g}}(\gamma) \leq B(C_g, C_e, 1) \cdot \Lg(\gamma)
\end{equation} 
for all $\gamma$ with $L_{\overline{g}}(\gamma) \leq 1$ and we set $\delta := \frac{1}{2 \cdot B(C_g, C_e, 1)}$. We claim that $D(C_g, C_e) = B(C_g, C_e, 1)$ meets the requirement. To show this let now $\gamma$ be a curve with $\Lg(\gamma) \leq \delta$ which we can assume, without loss of generality, to be parametrised by $\overline{g}$-arclength, i.e., $\gamma : [0, L_{\overline{\gamma}}) \to V$. If $L_{\overline{g}}(\gamma) \leq 1$, then the corollary follows from \eqref{EqCorPf}. Assuming now $L_{\overline{g}}(\gamma) > 1$, we consider $\gamma|_{[0, 1]}$ and apply \eqref{EqCorPf} to obtain the contradiction
$$1 = L_{\overline{g}}(\gamma|_{[0,1]}) \leq B \cdot \Lg(\gamma|_{[0,1]}) \leq B \cdot \Lg(\gamma) \leq \frac{1}{2}\;.$$
This proves the corollary.
\end{proof}

\section{Lipschitz extensions}

\subsection{Controlling the $C^0$-structure of the extension} \label{SecC0}

\begin{proposition} \label{PropBBound}
Let $(M,g)$ be a Lorentzian manifold with $g \in C^1$ and let $\iota : M \hookrightarrow \tilde{M}$ be a $C^{0,1}_{\loc}$-extension. Let $\gamma_i : [-1,0) \to M$ be two $C^1$-curves, $i = 1,2$, where $\gamma_1$ is a causal curve and $\gamma_2$ has finite gap-length. Let $e_\alpha^{(i)}$ be a parallel orthonormal frame field along $\gamma_i$, for $i = 1,2$. Moreover, let $s_n, t_n \in [-1,0)$ be two sequences with $\lim_{n \to \infty} s_n = 0 = \lim_{n \to \infty} t_n$ and $\sigma_n : [0,1] \to M$ piecewise $C^1$-curves from $\gamma_1(s_n)$ to $\gamma_2(t_n)$ such that $L_{\mathrm{gap}, e_\alpha^{(1)}}(\sigma_n) \to 0$ for $n \to \infty$ and such that, when parallely propagating $e_\alpha^{(1)}$ along $\sigma_n$, the Lorentz transformations relating $e_\alpha^{(1)}|_{\sigma_n(1)}$ and $e_\alpha^{(2)}|_{\gamma_2(t_n)}$ are uniformly bounded in $n \in \N$. 

Now, if $\lim_{s \to 0} (\iota \circ \gamma_1)(s) \in \tilde{M}$ exists, then so does $\lim_{s \to 0} (\iota \circ \gamma_2)(s) \in \tilde{M}$ exist and  we have $\lim_{s \to 0} (\iota \circ \gamma_1)(s) = \lim_{s \to 0} (\iota \circ \gamma_2)(s)$.
\end{proposition}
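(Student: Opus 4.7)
The plan is to localise around $\tilde{p}$ in a near-Minkowskian chart $\tilde{V} \subseteq \tilde{M}$ afforded by the continuity of $\tilde{g}$, and then to exploit the Lipschitz bound on $\tilde{g}$ to apply the coordinate-length estimates from Lemma \ref{LemGapCoord} and Corollary \ref{CorGapCoord}. The strategy is first to translate the hypothesis $L_{\mathrm{gap}, e_\alpha^{(1)}}(\sigma_n) \to 0$ into $\iota(\gamma_2(t_n)) \to \tilde{p}$ in the chart, and then to upgrade this subsequential convergence to a full limit of $\iota \circ \gamma_2$ using the finite gap-length of $\gamma_2$ together with the Lorentz-transformation hypothesis.

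First I would bound the parallel frame $e_\alpha^{(1)}$ in coordinates along the tail of $\gamma_1$ inside $\tilde{V}$: since $\gamma_1$ is causal and $x^0$ is a time function on $\tilde{V}$, reparametrising by $x^0$ and using the near-Minkowskian bound shows that $\iota \circ \gamma_1$ has bounded coordinate length in $\tilde{V}$. Lemma \ref{LemGapCoord} then yields uniform coordinate bounds on $e_\alpha^{(1)}|_{\gamma_1(s)}$ for $s$ close to $0$. For large $n$, $\iota(\gamma_1(s_n))$ lies in a small ball around $\tilde{p}$ inside $\tilde{V}$ and the starting frame is uniformly bounded; Corollary \ref{CorGapCoord}, combined with a continuity bootstrap (small coordinate length prevents $\iota \circ \sigma_n$ from escaping $\tilde{V}$), gives $L_{\overline{g}}(\iota \circ \sigma_n) \lesssim L_{\mathrm{gap}, e_\alpha^{(1)}}(\sigma_n) \to 0$, hence $\iota(\gamma_2(t_n)) = \iota(\sigma_n(1)) \to \tilde{p}$.

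Next I would upgrade this to $\lim_{s \to 0}(\iota \circ \gamma_2)(s) = \tilde{p}$. A second application of Lemma \ref{LemGapCoord} along $\sigma_n$ shows that the frame obtained by parallel-transporting $e_\alpha^{(1)}$ from $\gamma_1(s_n)$ to $\sigma_n(1)$ has uniformly bounded coordinate components; the hypothesis on Lorentz transformations then yields uniform coordinate bounds on $e_\alpha^{(2)}|_{\gamma_2(t_n)}$. Because $\gamma_2$ has finite gap-length with respect to its own parallel frame $e_\alpha^{(2)}$, the tail $L_{\mathrm{gap}, e_\alpha^{(2)}}(\gamma_2|_{[t_n, 0)}) \to 0$ as $n \to \infty$. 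Applying Corollary \ref{CorGapCoord} to each segment $\gamma_2|_{[t_n, s)}$ with $s \in [t_n, 0)$, once again via a bootstrap keeping the curve in $\tilde{V}$, bounds the coordinate distance from $\iota(\gamma_2(s))$ to $\iota(\gamma_2(t_n))$ uniformly in $s$, with the bound vanishing as $n \to \infty$. Combined with the already-established $\iota(\gamma_2(t_n)) \to \tilde{p}$, this delivers the desired convergence.

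The main obstacle is the propagation of uniform coordinate bounds on parallel frames across the three curves $\gamma_1$, $\sigma_n$, and $\gamma_2$: this is precisely the role of the Lorentz-transformation hypothesis, which transfers the bound from the frame one obtains by parallel-transporting $e_\alpha^{(1)}$ along $\sigma_n$ to the genuine frame $e_\alpha^{(2)}|_{\gamma_2(t_n)}$, thereby enabling the final application of Corollary \ref{CorGapCoord} along $\gamma_2$. The bootstrap keeping the various curves in $\tilde{V}$ — so that the Lipschitz bounds on $\tilde{g}$ required by Lemma \ref{LemGapCoord} actually apply — is a standard but necessary technical step, made possible by the fact that Corollary \ref{CorGapCoord} provides a small-coordinate-length conclusion as soon as the gap-length is sufficiently small relative to the chart constants.
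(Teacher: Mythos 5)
Your proposal is correct and follows essentially the same route as the paper: localise in a near-Minkowskian chart with uniform Lipschitz bounds, use Lemma \ref{LemGapCoord} to bound the frame $e_\alpha^{(1)}$ in coordinates along the causal curve $\gamma_1$, and then convert gap-length to coordinate length via Corollary \ref{CorGapCoord} with a bootstrap to stay in the chart. The only (cosmetic) difference is that the paper applies Corollary \ref{CorGapCoord} once to the concatenated curve $\gamma_2|_{[t_{n},0)} \star \sigma_{n}$, converting the tail's $e^{(2)}$-gap-length to the $e^{(1)}$-frame via \eqref{EqGapLorentzT}, whereas you handle $\sigma_n$ and the tail of $\gamma_2$ in two separate applications, transferring the coordinate frame bound to $e_\alpha^{(2)}|_{\gamma_2(t_n)}$ through the Lorentz-transformation hypothesis.
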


\begin{figure}[h] 
  \centering
  \def\svgwidth{7cm}
   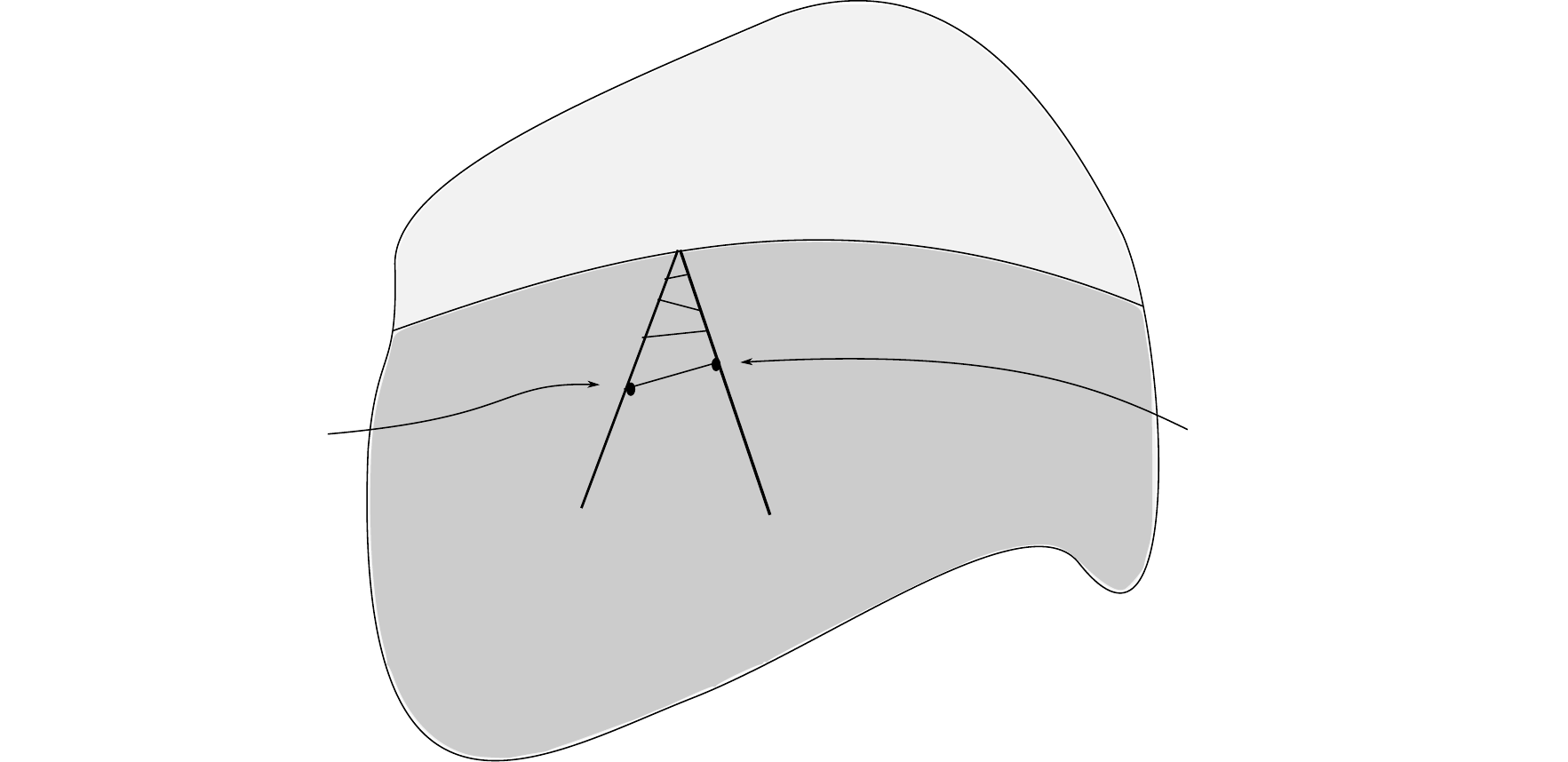 
      \caption{The set-up of Proposition \ref{PropBBound}.} \label{FigBBound}
\end{figure}

\begin{proof}
Let $\tilde{p} := \lim_{s \to 0} ( \iota \circ \gamma_1)(s)$ and let $ \tilde{\varphi} : \tilde{M} \supseteq \tilde{U} \to (-\varepsilon_0, \varepsilon_0) \times (-\varepsilon_1, \varepsilon_1)^d$ be a near-Minkowskian neighbourhood of $\tilde{p}$ with coordinates $\tilde{x}^\mu$ and uniform bounds $|\tilde{g}^{\mu \nu}|,\; |\tilde{g}_{\mu \nu}|, \; |\mathrm{Lip}(\tilde{g}_{\mu \nu})| \leq C_{\tilde{g}}$ (and $\tilde{g}_{\mu \nu}$ being close enough to the Minkowski metric $m_{\mu \nu}$ such that $\tilde{x}^0$ is in particular a time function). Note that in the open region $\tilde{U} \cap \iota(M)$, where the metric is in $C^1$, we have $|\rd_\kappa \tilde{g}_{\mu \nu}| \leq C_{\tilde{g}}$. Without loss of generality assume that $\gamma_1$ is future directed in\footnote{To be more precise, one should write $\tilde{\gamma}_1 := \iota \circ \gamma_1$, but we are slightly lax here -- and in the following -- and identify objects in $(M,g)$ with those in $(\iota(M), \tilde{g}|_{\iota(M)})$.} $\tilde{U}$ and completely contained in $\tilde{U}$. It follows from $\gamma_1$ being causal (e.g.\ by parametrising $\gamma_1$ by the $\tilde{x}^0$-coordinate) that $\gamma_1$ is of finite coordinate length in $\tilde{U} \cap \iota(M)$. We can now apply Lemma \ref{LemGapCoord} to $\tilde{U} \cap \iota(M)$ and $\gamma_1$ to obtain, from \eqref{EqLemRes1}, that 
\begin{equation}
\label{EqPfC0E}
||e^{(1)}||_{\R^{d+1}}, \; ||(e^{(1)})^{-1}||_{\R^{d+1}} \leq B \qquad \textnormal{ along }\gamma_1\;.
\end{equation}
Moreover, it follows from \eqref{EqGapLorentzT} and the assumption that $e_\alpha^{(1)}|_{\sigma_n(1)}$ and $e_\alpha^{(2)}|_{\gamma_2(t_n)}$ are related by uniformly bounded Lorentz transformations that there exists $C_L >0$ such that
\begin{equation}
\label{EqPfC0GapR}
L_{\mathrm{gap}, e_{\alpha}^{(1)}|_{\sigma_n(1)}} \big(\gamma_2|_{[t_n, 0)} \big) \leq C_L \cdot L_{\mathrm{gap}, e_\alpha^{(2)}|_{\gamma_2(t_n)}} \big(\gamma_2|_{[t_n, 0)} \big) \qquad \textnormal{ holds for all } n \in \N.
\end{equation}
We are now ready to show $\lim_{s \to 0} (\iota \circ \gamma_2)(s) = \tilde{p}$. Let $\varepsilon >0$ and let $B_{\overline{g}, \varepsilon}(\tilde{p}) \subseteq \tilde{U}$ be the coordinate ball of radius $\varepsilon$ in $\tilde{U}$. We now apply Corollary \ref{CorGapCoord} to the open subset $\tilde{U} \cap \iota(M)$ of $\tilde{U}$ to obtain a $\delta >0$ and a $D>0$ such that for all curves $\tau$ starting on $\gamma_1$ and mapping into $\tilde{U} \cap \iota(M)$, with $L_{\mathrm{gap}, e_\alpha^{(1)}}(\tau) \leq \delta$, we have
\begin{equation}
\label{EqPfC0AppCor}
L_{\overline{g}}(\tau) \leq D \cdot L_{\gap, e_\alpha^{(1)}}(\tau) \;.
\end{equation}
Note here that although $\delta$ and $D$ depend on $C_e$ in Corollary \ref{CorGapCoord}, by \eqref{EqPfC0E} $C_e$ can be chosen uniformly for $e_{\alpha}^{(1)}$ along $\gamma_1$. Let now $n_0$ be large enough such that $\gamma_1\big([s_{n_0}, 0)\big) \subseteq B_{\overline{g}, \frac{\varepsilon}{2}}(\tilde{p})$ and $n_1 \geq n_0$ large enough such that
\begin{equation*}
\begin{aligned}
&L_{\gap, e_\alpha^{(1)}}(\sigma_{n_1}) < \min \{ \frac{\varepsilon}{4D}, \frac{\delta}{2}\} \\
&L_{\gap, e_\alpha^{(2)}}(\gamma_2|_{[t_{n_1}, 0)} ) < \min \{ \frac{\varepsilon}{4 C_L \cdot D}, \frac{\delta}{2 C_L} \} \;.
\end{aligned}
\end{equation*}
It then follows from \eqref{EqPfC0GapR} that $L_{\gap, e_\alpha^{(1)}}(\gamma_2|_{[t_{n_1}, 0)} \star \sigma_{n_1}) \leq \min\{ \frac{\varepsilon}{2D} , \delta\}$.\footnote{Here, $\star$ denotes the concatenation of curves. The resulting curve is piecewise $C^1$. } Thus, by \eqref{EqPfC0AppCor} we obtain\footnote{Note that a priori $\gamma_2|_{[t_{n_1}, 0)} \star \sigma_{n_1}$ does not need to map into $\tilde{U}$, which is a prerequisite for applying \eqref{EqPfC0AppCor}. But since it starts in $\tilde{U}$ the below bound shows that it cannot leave $\tilde{U}$.}
\begin{equation} \label{EqRef}
L_{\overline{g}}(\gamma_2|_{[t_{n_1}, 0)} \star \sigma_{n_1}) \leq D \cdot L_{\gap, e_\alpha^{(1)}}(\gamma_2|_{[t_{n_1}, 0)} \star \sigma_{n_1} ) < \frac{\varepsilon}{2}
\end{equation}
and hence $\gamma_2\big([t_{n_1}, 0)\big) \subseteq B_{\overline{g}, \varepsilon}(\tilde{p})$ as desired.
\end{proof}

\begin{remark} \label{RemIfInBoundary}
\begin{enumerate}
\item Note that, in contrast to the scenario depicted in Figure \ref{FigBBound}, $\tilde{p} = \lim_{s \to 0} (\iota \circ \gamma_1)(s)$ does not have to lie in the boundary $\rd \iota(M)$ -- it could also lie in $\iota(M) \simeq M$. However, the case of it lying in the boundary is the most relevant for the remainder of the paper.
\item Let us assume in addition $M$ to be time-oriented and globally hyperbolic, and $\gamma_1$ to be future directed. Then, in the case of $\tilde{p} = \lim_{s \to 0} (\iota \circ \gamma_1)(s) \in \rd\iota(M)$ lying on the boundary,  we can choose the near-Minkowskian neighbourhood $\tilde{U}$ in the above proof to be  a future boundary chart $\tilde{\varphi} : \tilde{U} \to (-\varepsilon_0, \varepsilon_0) \times (-\varepsilon_1, \varepsilon_1)^d$ around the future boundary point $\tilde{p}$ by Propositions \ref{PropBoundaryChart} and \ref{PropLeavingLipschitz} such that $\iota \circ \gamma_1$ lies eventually below the graph of the graphing function $f$. It then follows from \eqref{EqRef} that $\iota \circ (\gamma_2|_{[t_{n_1}, 0)} \star \sigma_{n_1})$ also maps into $B_{\overline{g}, \varepsilon}(\tilde{p}) \subseteq \tilde{U}$ and thus $\iota \circ \gamma_2$ has to lie eventually in the same connected component of $\iota(M) \cap \tilde{U}$, that is, below the graph of $f$.
\item  The reader familiar with the b-boundary construction by Schmidt (\cite{Schm71}; see also \cite{Cla93}, \cite{Fried74}) will recognise that Proposition \ref{PropBBound} is inspired by this construction. 
\item Given a spacetime $(M,g)$ and an extension $\iota : M \hookrightarrow \tilde{M}$ it is often of interest to know whether two given inextendible causal curves $\gamma_1$, $\gamma_2$ in $M$ leave for $\tilde{M}$ through the same boundary point in $\rd \iota(M) \subseteq \tilde{M}$. Proposition \ref{PropBBound} provides a convenient sufficient criterion to answer that question for Lipschitz extensions. The next proposition shows that the criterion is also necessary for Lipschitz extensions. 

Let us also remark that methods for ensuring that $\gamma_1$ and $\gamma_2$ leave through the same boundary point, and which are based on causal homotopies, have already been developed in earlier work, see the proof of Theorem 4.3 in \cite{Sbie22a}. The advantage of those methods is that they even apply to $C^0$-extensions. However, the criterion given in Proposition \ref{PropBBound} is easier to implement in practise (provided the extension is Lipschitz regular).
\end{enumerate}
\end{remark}

\begin{proposition} \label{PropFindCurves}
Let $(M,g)$ be a time-oriented and globally hyperbolic Lorentzian manifold with $g \in C^1$ and let $\iota : M \hookrightarrow \tilde{M}$ be a $C^{0,1}_{\loc}$-extension. 
Let $\tilde{p} \in \rd^+\iota(M)$ be a future boundary point and let $\tilde{\varphi} : \tilde{U} \to (-\varepsilon_0, \varepsilon_0) \times (-\varepsilon_1, \varepsilon_1)^d$ be a boundary chart as in Proposition \ref{PropBoundaryChart} with coordinates $(x^0, x^1, \ldots, x^d) = (x^0, \ux)$ and Lipschitz continuous graphing function $f : (-\varepsilon_1, \varepsilon_1)^d \to (-\varepsilon_0, \varepsilon_0)$ with Lipschitz constant\footnote{I.e., we have $|f(\ux) - f(\underline{y})| \leq C_f \cdot || \ux - \underline{y}||_{\R^d}$ for all $\ux, \underline{y} \in (-\varepsilon_1, \varepsilon_1)^d$.} $C_f$. Since the extension is locally Lipschitz we can, in addition to the bound $|\tilde{g}_{\mu \nu} - m_{\mu \nu}| < \delta$ from Proposition \ref{PropBoundaryChart}, also assume $|\mathrm{Lip}(\tilde{g}_{\mu \nu})| \leq C_{\tilde{g}}$ after making the chart slightly smaller if necessary. Denote by $\tilde{U}_< := \{(x_0, \ux) \in \tilde{U} \; | \; x_0 < f(\ux)\} \subseteq \iota(M)$ the region below the graph of $f$.

Let now $\tilde{\gamma}_i : [-1,0) \to \tilde{U}_<$, $i = 1,2$, be two  future directed causal $C^1$-curves with $\lim_{s \to 0} \tilde{\gamma}_1(s) = \lim_{s \to 0} \tilde{\gamma}_2(s) \in \mathrm{graph}(f) \subseteq \tilde{U}$. Then $\tilde{\gamma}_i$, $i=1,2$, have finite gap-length and there exists a sequence of curves $\tilde{\sigma}_n : [0,1] \to \tilde{U}_<$ as in Proposition \ref{PropBBound}. 

More precisely, there exist sequences $s_n, t_n \in [-1,0)$ with $s_n, t_n \to 0$ for $n \to \infty$ and a sequence of piecewise $C^1$-curves $\tilde{\sigma}_n : [0,1] \to \tilde{U}_<$ from $\tilde{\gamma}_1(s_n)$ to $\tilde{\gamma}_2(t_n)$ such that, if $\tilde{e}_\alpha^{(i)}$ denote parallel orthonormal frame fields along $\tilde{\gamma}_i$, we have $L_{\gap, \tilde{e}_\alpha^{(1)}}(\tilde{\sigma}_n) \to 0$ for $n \to \infty$ and, when parallely propagating $\tilde{e}_\alpha^{(1)}$ along $\tilde{\sigma}_n$, the Lorentz transformations relating $\tilde{e}_\alpha^{(1)}|_{\tilde{\sigma}_n(1)}$ and $\tilde{e}_\alpha^{(2)}|_{\tilde{\gamma}_2(t_n)}$ remain uniformly bounded in $n \in \N$.
\end{proposition}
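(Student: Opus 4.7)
The plan is to handle the finite gap-length of $\tilde\gamma_i$ and the existence of suitable connecting curves $\tilde\sigma_n$ separately, both by reducing to Lemma \ref{LemGapCoord}.

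First I would establish finite gap-length of each $\tilde\gamma_i$ together with a uniform coordinate bound on its parallel frame $\tilde e_\alpha^{(i)}$. Since the near-Minkowski bound $|\tilde g_{\mu\nu}-m_{\mu\nu}|<\delta$ makes $x^0$ a time function on $\tilde U$, reparametrising the causal curve $\tilde\gamma_i$ by $x^0$ bounds its spatial velocity by a universal constant, so $L_{\overline g}(\tilde\gamma_i)$ is controlled by a constant depending only on $\varepsilon_0,\varepsilon_1,\delta$. Viewing $\tilde\varphi(\tilde U\cap\iota(M))\subseteq\R^{d+1}$ as an open set equipped with the $C^1$-metric $g$ (whose coefficients are bounded by $1+\delta$ and whose first derivatives, by the Lipschitz bound on $\tilde g$ restricted to $\iota(M)$, are bounded by $C_{\tilde g}$), Lemma \ref{LemGapCoord} applied with an orthonormal initial frame at $\tilde\gamma_i(-1)$ yields a constant $B>0$ such that both $(\tilde e^{(i)})_\alpha^{\;\,\mu}$ and its inverse are bounded in operator norm by $B$ all along $\tilde\gamma_i$; in particular $\Lg(\tilde\gamma_i)<\infty$.

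Next I would construct the connecting curves. For sequences $s_n,t_n\to 0$, write $\tilde\varphi(\tilde\gamma_1(s_n))=(a_n^{(1)},\underline b_n^{(1)})$ and $\tilde\varphi(\tilde\gamma_2(t_n))=(a_n^{(2)},\underline b_n^{(2)})$, both tending to $(0,\underline 0)$. Set $M_n:=\max\{||\underline b_n^{(1)}||,||\underline b_n^{(2)}||\}$ and $T_n:=-(C_f+1)M_n$, and let $\tilde\sigma_n$ be the $\tilde\varphi^{-1}$-image of the concatenation of three coordinate line segments: from $(a_n^{(1)},\underline b_n^{(1)})$ to $(T_n,\underline b_n^{(1)})$, then across to $(T_n,\underline b_n^{(2)})$, then to $(a_n^{(2)},\underline b_n^{(2)})$. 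Using $f(\underline 0)=0$ and the Lipschitz estimate $f(\ux)\geq -C_f||\ux||$, together with the convexity bound $||\ux||\leq M_n$ along the horizontal segment, one checks $T_n<f(\ux)$ pointwise on that segment; on each vertical segment the condition reduces to $\max\{a_n^{(i)},T_n\}<f(\underline b_n^{(i)})$, which holds since $\tilde\gamma_i(s_n),\tilde\gamma_i(t_n)\in\tilde U_<$ and $T_n<-C_f M_n\leq f(\underline b_n^{(i)})$. Hence $\tilde\sigma_n\subseteq\tilde U_<$, and its coordinate length $|a_n^{(1)}-T_n|+||\underline b_n^{(1)}-\underline b_n^{(2)}||+|a_n^{(2)}-T_n|\to 0$.

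Finally I would verify both conditions of Proposition \ref{PropBBound} by applying Lemma \ref{LemGapCoord} a second time, now to $\tilde\sigma_n$ with initial frame $\tilde e_\alpha^{(1)}|_{\tilde\gamma_1(s_n)}$, whose coordinate matrix is already bounded by $B$ from the first step. For $n$ large enough that $L_{\overline g}(\tilde\sigma_n)\leq 1$, the lemma gives $\Lg(\tilde\sigma_n)\leq B'\cdot L_{\overline g}(\tilde\sigma_n)\to 0$, which is the first requirement. The same application uniformly bounds the coordinate matrix of the parallel-transported frame $\tilde e_\alpha^{(1)}|_{\tilde\sigma_n(1)}$; combined with the analogous coordinate bound on $\tilde e_\alpha^{(2)}|_{\tilde\gamma_2(t_n)}$, the Lorentz transformation $L_n$ between these two orthonormal frames at $\tilde\gamma_2(t_n)$ is expressed in coordinates as $L_n=\tilde e^{(1)}\cdot(\tilde e^{(2)})^{-1}$, hence uniformly bounded in $n$. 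The only real obstacle is the construction of $\tilde\sigma_n$: because $f$ is merely Lipschitz, the connecting curves must dip to $x^0\approx -C_f M_n$ to remain below the graph; crucially, since $M_n\to 0$ this detour has vanishing coordinate length and thus, by Lemma \ref{LemGapCoord}, vanishing gap-length. Everything else is routine bookkeeping built on that lemma, which converts the uniform Lipschitz bound on $\tilde g$ (restricted to the $C^1$-open set $\iota(M)$) into uniform control on parallel transport.
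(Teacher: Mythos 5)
Your proof is correct and follows essentially the same route as the paper: reparametrise the causal curves by the time function $x^0$ to get finite coordinate length, apply Lemma \ref{LemGapCoord} once along each $\tilde{\gamma}_i$ to bound the frames and once along the connecting curves, and build those connecting curves as explicit piecewise-linear detours kept below the Lipschitz graph (the paper uses a two-piece ``V'' whose time component has slope $\pm C_f\|\underline{b}_n^{(1)}-\underline{b}_n^{(2)}\|$ over the straight spatial segment, you dip to a fixed level $T_n$; both yield coordinate length tending to zero, which is all that is needed). One small point: your construction is anchored at the coordinate origin via $f(\underline{0})=0$, whereas the proposition allows the common limit point to be an arbitrary $(f(\underline{x}_*),\underline{x}_*)\in\mathrm{graph}(f)$, in which case your $M_n$ does not tend to zero; recentring, i.e.\ taking $M_n:=\max_i\|\underline{b}_n^{(i)}-\underline{x}_*\|$ and $T_n:=f(\underline{x}_*)-(C_f+1)M_n$ and using $f(\ux)\geq f(\underline{x}_*)-C_f\|\ux-\underline{x}_*\|$, repairs this immediately, and the degenerate case $M_n=0$ (not literally covered by your three-segment formula, since then $T_n=f(\underline{x}_*)$ lies on the graph) is handled by a single vertical segment as in the paper.
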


\begin{proof}
The causal curves $\tilde{\gamma}_i$ can be parametrised by the time function $x^0$ on $\tilde{U}$. It follows that they have finite coordinate length and thus the finiteness of their gap-lengths follows directly from \eqref{EqLemRes2} in Lemma \ref{LemGapCoord}. Moreover, \eqref{EqLemRes1} in Lemma \ref{LemGapCoord} also gives that the parallel and orthonormal frame fields $\tilde{e}^{(i)}_\alpha$ along $\tilde{\gamma}_i$, $i=1,2$, remain uniformly bounded
\begin{equation}
\label{EqPfPropFindCurves1}
|| \tilde{e}^{(i)}||_{\R^{d+1}}, \; || (\tilde{e}^{(i)})^{-1}||_{\R^{d+1}} \leq C \;.
\end{equation}
Let now $s_n, t_n \in [-1,0)$ be two sequences with $\lim_{n \to \infty} s_n = 0 = \lim_{n \to \infty} t_n$. We claim that since $\tilde{U}_<$ is a Lipschitz domain we can find piecewise smooth curves $\tilde{\sigma}_n : [0,1] \to \tilde{U}_<$ with $\tilde{\sigma}_n(0) = \tilde{\gamma}_1(s_n)$ and $\tilde{\sigma}_n(1) = \tilde{\gamma}_2(t_n)$ and 
\begin{equation}
\label{EqPfPropFindCurves2}
L_{\overline{g}}(\tilde{\sigma}_n) \leq C \cdot d_{\overline{g}, \tilde{U}}\big(\tilde{\gamma}_1(s_n), \tilde{\gamma}_2(t_n) \big) \;.
\end{equation}
This will be shown below but is also a well-known property of Lipschitz domains. Assuming it for the time being it follows directly from \eqref{EqPfPropFindCurves1}, \eqref{EqPfPropFindCurves2} and Lemma \ref{LemGapCoord} that $L_{\gap, \tilde{e}^{(1)}_\alpha}(\tilde{\sigma}_n) \to 0$ for $n \to \infty$. Using in addition \eqref{EqLemRes1} it follows that the Lorentz transformations relating the parallely propagated frame $\tilde{e}_\alpha^{(1)}$ along $\tilde{\sigma}_n$ and $\tilde{e}_\alpha^{(2)}$ remain uniformly bounded in $n \in \N$.

It remains to show the claim. Let $y = (y^0, \uy)$ and $z = (z^0, \uz)$ lie in $\tilde{U}_<$. If $\uy = \uz$ then we can just connect the points by the vertical (shortest) curve  which is also  contained in $\tilde{U}_<$ and satisfies \eqref{EqPfPropFindCurves2} with $C=1$. Thus, we now assume $\uy \neq \uz$ and we set $$s_0 := \frac{y^0 - z^0 + C_f ||\uy - \uz||_{\R^d}}{2 C_f ||\uy - \uz||_{\R^d}} \;.$$
We now define the piecewise smooth curve $\tilde{\sigma}$ by
$$[0,1] \ni s \overset{\tilde{\sigma}}{\mapsto} \begin{pmatrix}
\tilde{\sigma}^0(s) = \begin{cases} y^0 - s \cdot C_f ||\uy - \uz||_{\R^d} \quad &\textnormal{ for } 0 \leq s \leq s_0 \\
y^0 - s_0 \cdot C_f || \uy - \uz ||_{\R^d} + (s - s_0) \cdot C_f ||\uy - \uz||_{\R^d} \quad &\textnormal{ for } s_0 \leq s \leq 1
\end{cases} \\
\underline{\tilde{\sigma}}(s) = s \cdot \uz + (1-s) \cdot \uy
\end{pmatrix} $$
It is easily checked that $\tilde{\sigma}(0) = y$ and $\tilde{\sigma}(1) = z$. Moreover, an each smooth piece $\tilde{\sigma}$ has (negative) slope $C_f$ and thus, since $y$ and $z$ lie below the graph of $f$, so does all of $\tilde{\sigma}$. We have $$\dot{\tilde{\sigma}}(s) = \begin{pmatrix}
\pm C_f ||\uy - \uz||_{\R^d} \\ \uz - \uy
\end{pmatrix}$$
and thus $L_{\overline{g}}(\tilde{\sigma}) = ||\uy - \uz||_{\R^d} \sqrt{1 + C_f^2} \leq \sqrt{1 + C_f^2} \cdot ||y - z||_{\R^{d+1}}$, which shows the claim.
\end{proof}

Although not needed for the remainder of this paper, let us remark that if the $C^{0,1}_{\loc}$-extension is in addition time-oriented then the formulation of the above proposition can be simplified:

\begin{corollary} \label{CorFindCurves}
Let $(M,g)$ be a time-oriented and globally hyperbolic Lorentzian manifold with $g \in C^1$ and let $\iota : M \hookrightarrow \tilde{M}$ be a $C^{0,1}_{\loc}$-extension which is also time-oriented. Let $\gamma_i : [-1,0) \to M$, $i = 1,2$, be two  future directed causal $C^1$-curves with $\lim_{s \to 0} ( \iota \circ \gamma_1)(s) = \lim_{s \to 0} ( \iota \circ \gamma_2)(s) \in \tilde{M}$. Then $\gamma_i$, $i=1,2$, have finite gap-length and there exists a sequence of curves $\sigma_n : [0,1] \to M$ as in Proposition \ref{PropBBound}.
\end{corollary}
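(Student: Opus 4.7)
I will reduce the corollary to Proposition~\ref{PropFindCurves} by distinguishing whether the common limit point $\tilde{p} := \lim_{s\to 0}(\iota\circ \gamma_i)(s)$ lies in $\iota(M)$ or in $\rd \iota(M)$.

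If $\tilde{p} \in \iota(M)$, both $\gamma_i$ extend continuously to $p := \iota^{-1}(\tilde{p}) \in M$ at $s=0$, and in a precompact coordinate chart around $p$ Lemma~\ref{LemGapCoord} delivers finite gap-length for each $\gamma_i$. For large $n$, both $\gamma_1(s_n)$ and $\gamma_2(t_n)$ lie in this chart, and a coordinate line segment $\sigma_n$ connecting them has coordinate length (and hence, again by Lemma~\ref{LemGapCoord}, gap-length) tending to zero, with uniform bounds on the Lorentz transformations between the parallel frames coming from smoothness of the metric in the chart.

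Suppose now $\tilde{p} \in \rd\iota(M)$. Since $\iota(M)$ is open in $\tilde{M}$, $\tilde{p} \notin \iota(M)$, so each $\gamma_i$ must be future-inextendible in $M$ (any continuous future extension in $M$ would place $\tilde{p}$ in $\iota(M)$). Applying Proposition~\ref{PropLeavingLipschitz} to $\gamma_1$ yields $\tilde{p} \in \rd^+\iota(M)$ together with a future boundary chart $\tilde{\varphi}: \tilde{U} \to (-\varepsilon_0,\varepsilon_0)\times(-\varepsilon_1,\varepsilon_1)^d$ in which $\iota \circ \gamma_1$ is ultimately in $\tilde{U}_<$. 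The key remaining step is to show that $\iota \circ \gamma_2$ is also ultimately in $\tilde{U}_<$: since $\iota$ is time-oriented, $\iota \circ \gamma_2$ is future-directed causal in $\tilde{M}$, and because $x^0$ is a near-Minkowski time function on $\tilde{U}$, the composition $x^0\circ\iota\circ\gamma_2$ is strictly increasing and tends to $0$, hence eventually negative. The causal cone bound in the near-Minkowski chart gives $|\underline{\iota\circ\gamma_2}(s)| \leq (1+O(\delta))\,|x^0(\iota\circ\gamma_2(s))|$, while the achronality of $\mathrm{graph}\,f$ controls the Lipschitz constant $C_f$ of $f$; choosing $\delta$ sufficiently small in Proposition~\ref{PropBoundaryChart} ensures $x^0 < f(\ux)$ along the tail of $\iota\circ\gamma_2$, placing it in $\tilde{U}_<$. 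With both curves eventually in $\tilde{U}_<$, Proposition~\ref{PropFindCurves} applied to $\tilde{\gamma}_i := \iota\circ\gamma_i$ produces sequences $s_n,t_n\to 0$ and piecewise-$C^1$ connecting curves $\tilde{\sigma}_n : [0,1] \to \tilde{U}_<\subseteq \iota(M)$ satisfying the required gap-length and Lorentz transformation bounds; setting $\sigma_n := \iota^{-1}\circ\tilde{\sigma}_n$ gives curves in $M$ meeting the hypotheses of Proposition~\ref{PropBBound}.

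The main obstacle is the middle step of the second case: Proposition~\ref{PropBoundaryChart} permits a second connected component of $\iota(M)$ to lie above $\mathrm{graph}\,f$ in $\tilde{U}$, so without the time-orientation hypothesis on the extension the future-directed causal curve $\iota \circ \gamma_2$ could in principle limit to $\tilde{p}$ from that other component---precisely the Taub-NUT/Misner non-uniqueness phenomenon discussed in the introduction. The time-orientation of $\iota$, together with the achronal slope bound on $f$ and the near-Minkowski causal cone bound, is what pins $\iota\circ\gamma_2$ to the same connected component as $\iota\circ\gamma_1$ and allows the direct appeal to Proposition~\ref{PropFindCurves}.
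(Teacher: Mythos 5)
Your overall architecture matches the paper's: handle the interior case directly, and in the boundary case reduce to Proposition \ref{PropFindCurves} after showing that $\iota\circ\gamma_2$ eventually lies in $\tilde{U}_<$. You correctly identify that last point as the crux and correctly locate where time-orientation enters. The gap is in how you establish it. Writing $(\iota\circ\gamma_2)(s)=(x^0(s),\ux(s))$ in the chart, your causal-cone bound gives $\|\ux(s)\|\leq (1+C\delta)\,|x^0(s)|$, and achronality of $\mathrm{graph}\, f$ gives $\mathrm{Lip}(f)\leq 1+C\delta$, hence $f(\ux(s))\geq f(0)-(1+C\delta)\|\ux(s)\| \geq (1+C\delta)^2\,x^0(s)$. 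Since $x^0(s)<0$, this is \emph{weaker} than the desired $f(\ux(s))\geq x^0(s)$: both bounds degenerate at the same null-cone slope, their product exceeds $1$ for every $\delta>0$, and shrinking $\delta$ never reverses the inequality. Your estimates are consistent with the tail of $\iota\circ\gamma_2$ lying in the collar $\{-(1+C\delta)\|\ux\|\leq x^0\leq -(1-C\delta)\|\ux\|\}$, part of which sits in $\tilde{U}_>$. So the step ``choosing $\delta$ sufficiently small ensures $x^0<f(\ux)$ along the tail'' does not follow.

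The paper closes exactly this gap with a soft causal argument instead of a slope comparison: it proves the claim that $J^-(\tilde{p},\tilde{U})$ contains no point of $\tilde{U}_>$. If it did, openness of $\tilde{U}_>$ together with the push-up property (valid for $C^{0,1}_{\loc}$ metrics, \cite{ChrusGra12}) would produce a point of $I^-(\tilde{p},\tilde{U})\cap\tilde{U}_>$; a past directed timelike curve from $\tilde{p}$ (which lies on $\mathrm{graph}\, f$) to that point, concatenated with the curve of tangent $-\rd_0$ run downwards until it crosses $\mathrm{graph}\, f$, is a piecewise smooth timelike curve with both endpoints on $\mathrm{graph}\, f$, contradicting achronality. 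Since the tail of $\iota\circ\gamma_2$ lies in $J^-(\tilde{p},\tilde{U})$ and, being in $\iota(M)$, cannot meet $\mathrm{graph}\, f$, it must lie in $\tilde{U}_<$. Note that push-up --- and hence the Lipschitz hypothesis --- is genuinely used here: it supplies the strict timelike-versus-causal separation that your borderline quantitative estimate cannot. To repair your proof you should replace the cone/slope comparison by this (or an equivalent) achronality-plus-push-up argument.
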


The most interesting case covered by the corollary is of course $\lim_{s \to 0} (\iota \circ \gamma_1) = \lim_{s \to 0} (\iota \circ \gamma_2) \in \rd\iota(M)$.

\begin{proof}
Without loss of generality we can assume that the isometric embedding $\iota$ preserves the time-orientation. 

We will only discuss the case $\lim_{s \to 0} (\iota \circ \gamma_1) = \lim_{s \to 0} (\iota \circ \gamma_2) \in \rd\iota(M)$. The case of $\lim_{s \to 0} \gamma_1 = \lim_{s \to 0} \gamma_2 \in M$ is treated similarly but is simpler -- we leave the details to the reader.

By Proposition \ref{PropLeavingLipschitz} we have that $\tilde{p} := \lim_{s \to 0}(\iota \circ \gamma_1)(s) $ lies in $\rd^+ \iota(M)$ and that there is a boundary chart $\tilde{\varphi} : \tilde{U} \to (-\varepsilon_0, \varepsilon) \times (-\varepsilon_1, \varepsilon_1)^d$ such that $\iota \circ \gamma_1$ lies eventually below that graph of $f$. The corollary follows from Proposition \ref{PropFindCurves} if we can show that $\tilde{\gamma}_2:=\iota \circ \gamma_2$ also lies eventually below the graph of $f$. 
To this effect we make the following

\textbf{Claim:} $J^-(\tilde{p}, \tilde{U})$ does not contain any points lying above the graph of $f$.

Let us accept the claim for the moment. Since $\tilde{\gamma}_2$ is a locally Lipschitz future directed causal curve converging to $\tilde{p}$ it must eventually lie in $J^-(\tilde{p}, \tilde{U})$. Since $\tilde{\gamma}_2$ maps into $\iota(M)$, it cannot lie on the graph of $f$. Thus the claim entails that $\tilde{\gamma}_2$ must ultimately map into the region of $\tilde{U}$  below the graph of $f$ -- which concludes the proof.

To prove the claim, let us denote by $\tilde{U}_<$ and $\tilde{U}_>$ the points in $\tilde{U}$ lying below and above the graph of $f$, respectively. Suppose now there was $\tilde{q} \in J^-(\tilde{p}, \tilde{U}) \cap \tilde{U}_>$.  Since $\tilde{U}_>$ is open and the push-up property holds for $C^{0,1}_{\loc}$-regular metrics (see \cite{ChrusGra12}) there is also $\tilde{r} \in I^-(\tilde{p}, \tilde{U}) \cap \tilde{U}_>$. Thus, we can find a smooth past directed timelike curve starting at $\tilde{p}$ and going to $\tilde{r}$ -- and by concatenating it with the smooth and past directed timelike curve with tangent $-\rd_0$, which has to intersect $\mathrm{graph} f$, we have thus constructed a piecewise smooth timelike curve starting and ending on the graph of $f$. This contradicts the achronality of $\mathrm{graph}(f)$ known from Proposition \ref{PropBoundaryChart}.
\end{proof}

Note that Corollary \ref{CorFindCurves} does not hold if one drops the assumption of $(\tilde{M}, \tilde{g})$ being time-orientable. This follows from the example given in Figure \ref{FigCounter}.
\begin{figure}[h] 
  \centering
  \begin{minipage}{0.8\textwidth}
  \centering
  \def\svgwidth{6cm}
\begingroup%
  \makeatletter%
  \providecommand\color[2][]{%
    \errmessage{(Inkscape) Color is used for the text in Inkscape, but the package 'color.sty' is not loaded}%
    \renewcommand\color[2][]{}%
  }%
  \providecommand\transparent[1]{%
    \errmessage{(Inkscape) Transparency is used (non-zero) for the text in Inkscape, but the package 'transparent.sty' is not loaded}%
    \renewcommand\transparent[1]{}%
  }%
  \providecommand\rotatebox[2]{#2}%
  \newcommand*\fsize{\dimexpr\f@size pt\relax}%
  \newcommand*\lineheight[1]{\fontsize{\fsize}{#1\fsize}\selectfont}%
  \ifx\svgwidth\undefined%
    \setlength{\unitlength}{396.85454601bp}%
    \ifx\svgscale\undefined%
      \relax%
    \else%
      \setlength{\unitlength}{\unitlength * \real{\svgscale}}%
    \fi%
  \else%
    \setlength{\unitlength}{\svgwidth}%
  \fi%
  \global\let\svgwidth\undefined%
  \global\let\svgscale\undefined%
  \makeatother%
  \begin{picture}(1,0.24893383)%
    \lineheight{1}%
    \setlength\tabcolsep{0pt}%
    \put(0,0){\includegraphics[width=\unitlength,page=1]{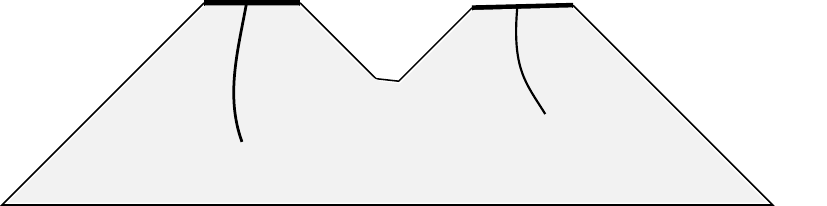}}%
    \put(0.29878906,0.09264831){\color[rgb]{0,0,0}\makebox(0,0)[lt]{\lineheight{1.25}\smash{\begin{tabular}[t]{l}$\gamma_1$\end{tabular}}}}%
    \put(0.65337574,0.1629543){\color[rgb]{0,0,0}\makebox(0,0)[lt]{\lineheight{1.25}\smash{\begin{tabular}[t]{l}$\gamma_2$\end{tabular}}}}%
    \put(0.85512327,0.12321625){\color[rgb]{0,0,0}\makebox(0,0)[lt]{\lineheight{1.25}\smash{\begin{tabular}[t]{l}$M$\end{tabular}}}}%
  \end{picture}%
\endgroup%
 
      \caption{Drawn is a subset of $1+1$-dimensional Minkowski spacetime. The extension $\tilde{M}$ is constructed by adding and identifying the bold boundaries such that $\gamma_1$ and $\gamma_2$ have the same limit point in $\tilde{M}$. } \label{FigCounter}
      \end{minipage}
\end{figure}

\subsection{Controlling the $C^1$-structure of the extension} \label{SecC1}

Let us agree that if $\tilde{\varphi} : \tilde{U} \to (-\varepsilon_0, \varepsilon_0) \times (-\varepsilon_1, \varepsilon_1)^d$ is a  future boundary chart as in Proposition \ref{PropBoundaryChart} then we denote by $\tilde{U}_<$ all the points of $\tilde{U}$ lying strictly below the graph of $f$ and with $\tilde{U}_\leq$ all the points of $\tilde{U}$ lying on or below the graph of $f$.

\begin{proposition} \label{PropC1Aux}
Let $(M,g)$ be a time-oriented and globally hyperbolic Lorentzian manifold with $g \in C^1$, $\tilde{\iota} : M \hookrightarrow \tilde{M}$ a $C^0$-extension, and $\hat{\iota} : M \hookrightarrow \hat{M}$ a $C^{0,1}_{\loc}$-extension. Moreover, let $\tilde{p} \in \rd^+ \tilde{\iota}(M)$ and $ \tilde{\varphi} : \tilde{U} \to (-\varepsilon_0, \varepsilon_0) \times (-\varepsilon_1, \varepsilon_1)^d$ be a future boundary chart around $\tilde{p}$ as in Proposition \ref{PropBoundaryChart}.

We assume the existence of a map $\Phi : \R^{d+1} \supseteq D \to \tilde{U}$ with the following properties:
\begin{itemize}
\item $D= \{ (z_0, \uz) \in \R^{d+1} \; | \; \uz \in B^d_\rho(0), \; 0 \leq z_0 \leq \tilde{F}(\uz)\}$, where $B^d_{\rho}(0)$ denotes the coordinate ball of radius $\rho$ around $0$ in $\R^d$ for some $\rho >0$, and $\tilde{F} : B^d_\rho(0) \to (0, \infty)$ is some continuous function. We also set $D_< = \{ (z_0, \uz) \in \R^{d+1} \; | \; \uz \in B^d_\rho(0), \; 0 \leq z_0 < \tilde{F}(\uz)\}$.
\item $\Phi|_{D_<}$ is a diffeomorphism onto its image and $\Phi$ is a homeomorphism onto its image. 
\item The family of curves $\big[0, \tilde{F}(\uz)\big) \ni z_0 \mapsto \tilde{\gamma}_{\uz}(z_0):= \Phi(z_0, \uz)  $ maps into $\tilde{U}_<$ for all $\uz \in B_\rho^d(0)$ and, under $\tilde{\iota}^{-1}|_{\tilde{\iota}(M)} : \tilde{\iota}(M) \to M$, they correspond to future directed and future inextendible causal curves in $M$. The components of the velocity vector $\partial_{z_0} \Phi^\mu(z_0, \underline{z})$ with respect to the coordinates of the future boundary chart are uniformly bounded on $D_<$. Moreover, we have $\Phi\big(\tilde{F}(0), 0\big) = \tilde{p}$.
\item There exists a continuous orthonormal frame field $\tilde{e}_{\alpha}$ on $\Phi(D) \subseteq \tilde{U}$ which is parallel along $\tilde{\gamma}_{\uz}$ for all $\uz \in B_{\rho}^d(0)$.
\end{itemize}

Now, if $\lim_{z_0 \to \tilde{F}(0)} (\id \circ \tilde{\gamma}_0)(z_0) =: \hat{p}$ exists in $\hat{M}$, where $\id := \hat{\iota} \circ \tilde{\iota}^{-1}|_{\tilde{\iota}(M)}: \tilde{M} \supseteq \tilde{\iota}(M) \to \hat{\iota}(M) \subseteq \hat{M}$ is the identification map, then there exists a future boundary chart $\hat{\varphi} : \hat{U} \to (-\varepsilon_2, \varepsilon_2) \times (-\varepsilon_3, \varepsilon_3)^d$ around $\hat{p}$ and neighbourhoods $\tilde{W} \subseteq \tilde{U}$ of $\tilde{p}$ and $\hat{W} \subseteq \hat{U}$ of $\hat{p}$ such that $$\id|_{\tilde{W}_<} : \tilde{W}_< \to \hat{W}_<$$ is a diffeomorphism and extends as a $C^1$-regular isometric diffeomorphism to $$\id|_{\tilde{W}_\leq} : \tilde{W}_\leq \to \hat{W}_\leq \;.$$
Here we have set $\tilde{W}_< := \tilde{W} \cap \tilde{U}_<$, $\tilde{W}_\leq := \tilde{W} \cap \tilde{U}_\leq$ --- and analogously in the hatted case.
\end{proposition}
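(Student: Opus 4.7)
My plan is three phases: construct a future boundary chart $\hat U$ around $\hat p$; extend $\id$ continuously to a neighborhood of $\tilde p$ in $\tilde U_\leq$; and upgrade this extension to $C^1$ via the isometry identity $\id_*\tilde e_\alpha = \hat e_\alpha$. For the first phase, pull $\tilde\gamma_0 = \Phi(\cdot,0)$ back to a future directed, future inextendible causal $C^1$-curve $\gamma_0$ in $M$; its image under $\hat\iota$ converges to $\hat p$ by hypothesis, and since $\hat g \in C^{0,1}_{\loc}$, Proposition \ref{PropLeavingLipschitz} yields $\hat p \in \rd^+\hat\iota(M)$ together with a future boundary chart $\hat\varphi:\hat U\to(-\varepsilon_2,\varepsilon_2)\times(-\varepsilon_3,\varepsilon_3)^d$ with graphing function $\hat f$; after shrinking we may assume $|\hat g_{\mu\nu}-m_{\mu\nu}|<\delta$ and $\mathrm{Lip}(\hat g_{\mu\nu}) \leq C_{\hat g}$ in $\hat U$, and $\hat\gamma_0:=\hat\iota\circ\gamma_0$ is ultimately contained in $\hat U_<$.

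For the continuous extension, first pick $z_0^*<\tilde F(0)$ so close to $\tilde F(0)$ that $\hat\gamma_0([z_0^*,\tilde F(0)))$ lies in a small coordinate ball deep inside $\hat U_<$; this is possible because $\tilde\gamma_0|_{[z_0^*,\tilde F(0))}$ has arbitrarily small gap-length (Lemma \ref{LemGapCoord} in $\tilde U$ with the bounded frame $\tilde e_\alpha$ on $\Phi(D)$). By continuity of $\Phi$, of $\id$ on $\tilde\iota(M)$, and of $\tilde e_\alpha$, there is $\rho'>0$ such that for every $\uz\in B^d_{\rho'}(0)$ the curve $\hat\gamma_{\uz}(z_0):=\id(\Phi(z_0,\uz))$ starts at $z_0^*$ inside $\hat U_<$ close to $\hat\gamma_0(z_0^*)$; the pushed-forward frame $\hat e_\alpha(z_0^*,\uz):=\id_*\tilde e_\alpha$ is uniformly bounded with uniformly bounded inverse in $\hat U$-coordinates; and $L_{\mathrm{gap},\tilde e_\alpha}(\tilde\gamma_{\uz}|_{[z_0^*,\tilde F(\uz))})$ is uniformly small. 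Since the isometry $\id$ preserves gap-length and parallel transport, Corollary \ref{CorGapCoord} applied in $\hat U$ forces each $\hat\gamma_{\uz}|_{[z_0^*,\tilde F(\uz))}$ to have small coordinate length, hence to remain in $\hat U_<$ and be Cauchy with limit $\hat p_{\uz}\in\hat U$. Applying Proposition \ref{PropLeavingLipschitz} once more identifies $\hat p_{\uz}\in\mathrm{graph}(\hat f)$, and a Cauchy estimate in the spirit of the proof of Proposition \ref{PropBBound} (connecting $\tilde\gamma_{\uz}$ and $\tilde\gamma_{\uz'}$ near the boundary by a short ``horizontal'' segment inside $\Phi(D)$) shows $\uz\mapsto\hat p_{\uz}$ is continuous, so $\id$ extends continuously to a neighborhood $\tilde W$ of $\tilde p$ in $\tilde U_\leq$, with image in $\hat U_\leq$.

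For the $C^1$ upgrade, the identity $\id_*\tilde e_\alpha=\hat e_\alpha$ gives on the open overlap
\[
\frac{\rd \hat x^\nu}{\rd \tilde x^\mu} \;=\; \hat e_\alpha^{\,\nu}\,(\tilde e^{-1})^{\alpha}_{\,\mu}\,.
\]
The right-hand factor $(\tilde e^{-1})^\alpha_{\,\mu}$ extends continuously up to $\mathrm{graph}(f)$ by the assumed continuity of $\tilde e_\alpha$ on $\Phi(D)$, so it remains to show that $\hat e_\alpha^{\,\nu}$, viewed as a function of $(z_0,\uz)$, extends continuously to $z_0=\tilde F(\uz)$. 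Along each $\uz$-slice, $\hat e$ solves
\[
\tfrac{d}{dz_0}\hat e_\alpha^{\,\nu}+\hat\Gamma^\nu_{\rho\sigma}(\hat\gamma_{\uz}(z_0))\,\dot{\hat\gamma}_{\uz}^{\,\rho}\,\hat e_\alpha^{\,\sigma}=0,
\]
with $\hat\Gamma\in L^\infty$ (since $\hat g\in C^{0,1}_{\loc}$) and, crucially, $\dot{\hat\gamma}_{\uz}^{\,\rho}=\hat e_\beta^{\,\rho}\,\dot{\tilde\gamma}_{\uz}^{\,\beta}$ uniformly bounded in $(z_0,\uz)$ by the frame bound from the previous step. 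A Gronwall estimate shows $\hat e_\alpha^{\,\nu}(z_0,\uz)$ is absolutely continuous up to $z_0=\tilde F(\uz)$, and continuous dependence of this linear ODE on parameters (using continuity of $\uz\mapsto\hat\gamma_{\uz}$ in $\hat U$-coordinates, continuity of the initial data, and the uniform bound on $\dot{\hat\gamma}_{\uz}$) yields joint continuity in $(z_0,\uz)$. The Jacobian of $\id$ is then continuous and, since both frames are invertible, non-degenerate on $\tilde W_\leq$; hence $\id|_{\tilde W_\leq}$ is a $C^1$-diffeomorphism onto its image $\hat W_\leq$. It is isometric on $\tilde W_<$, and continuity of both metrics propagates the isometry condition to the closure.

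I expect the main obstacle to lie in the $C^1$ upgrade: disentangling the coupled dependence between $\hat e_\alpha$ and $\dot{\hat\gamma}_{\uz}$ (each defined via the other) and rigorously establishing the continuous extension and joint continuity of the parallel frame across the boundary when the Christoffels are merely $L^\infty$ and the curves $\hat\gamma_{\uz}$ are only known to be continuous in $\uz$ in $\hat U$-coordinates. Phase 2 handles the circularity by a bootstrap through Corollary \ref{CorGapCoord}; Phase 3 requires the same control to persist at the level of derivatives.
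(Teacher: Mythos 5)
Your proposal follows essentially the same route as the paper's proof: Proposition \ref{PropLeavingLipschitz} produces the boundary chart around $\hat{p}$; the invariance of the parallel frame and gap-length under $\id$, combined with Lemma \ref{LemGapCoord}/Corollary \ref{CorGapCoord}, bootstraps the curves $\hat{\gamma}_{\uz}$ into $\hat{U}_<$ with uniformly bounded Jacobian $\frac{\rd y^\nu}{\rd x^\mu} = (\tilde{e}^{-1})^{\alpha}_{\;\mu}\hat{e}_\alpha^{\;\nu}$; and the parallel-transport ODE with $L^\infty$ Christoffel symbols plus Gronwall gives the continuous, non-degenerate extension of that Jacobian up to $\mathrm{graph}\, f$. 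The one step you skip is the passage from ``the first derivatives of $\id$ extend continuously to $\tilde{W}_\leq$'' to ``$\id$ extends as a $C^1$ map'': since $C^1$-regularity on a closed set means by definition the existence of a $C^1$ extension to an open neighbourhood, the paper invokes Whitney's extension theorem here, which applies because the below-graph region of a Lipschitz graphing function is quasiconvex, and only then applies the inverse function theorem at $\tilde{p}$ (together with a short connectedness argument to see that the image of $\tilde{W}_<$ is exactly $\hat{W}\cap\hat{U}_<$ rather than a proper subset); without some such argument your final claim that $\id|_{\tilde{W}_\leq}$ is a $C^1$-diffeomorphism onto $\hat{W}_\leq$ does not yet follow.
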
 

\begin{remark}
\begin{enumerate}
\item Note that it is implicit in the formulation of the proposition that $\Phi(\mathrm{graph} \tilde{F} ) \subseteq \mathrm{graph} f \subseteq \tilde{U}$ and $\Phi(D_<) \subseteq \tilde{U}_<$.
\item The orthonormal frame field $\tilde{e}_{\alpha}$ can of course be constructed by parallel propagation along  $\tilde{\gamma}_{\uz}$. The assumption is here that it extends continuously to $\mathrm{graph} f$.
\item An immediate consequence of the proposition is that one can change the $C^2$-differentiable structure on $\tilde{W}$ (in a way that is compatible with the $C^1$-differentiable structure on $\tilde{W}$) such that the regularity of $\tilde{g}$ is boosted from $C^0$ to $C^{0,1}_{\loc}$ in $\tilde{W}_\leq$.
\item The formulation of the proposition in terms of `a $C^0$-extension with an additional property' and a $C^{0,1}_{\loc}$-extension is used for future work on the $C^{0,1}_{\loc}$-inextendibility of weak null singularities without any symmetry assumptions. The special case of two $C^{0,1}_{\loc}$-extensions will be inferred from the proposition in Theorem \ref{ThmAnchUni}.
\end{enumerate}
\end{remark}

\begin{proof}
\textbf{\underline{Step 0:}} Since $(\hat{M}, \hat{g})$ is a $C^{0,1}_{\loc}$-extension it follows from Proposition \ref{PropLeavingLipschitz} that $\hat{p} \in \rd^+ \hat{\iota}(M)$ and that there exists a future boundary chart $\hat{\varphi} : \hat{U} \to (-\varepsilon_2, \varepsilon_2) \times (-\varepsilon_3, \varepsilon_3)^d$ around $\hat{p}$ such that $\id \circ \tilde{\gamma}_0$ lies eventually below the graphing function. By making the chart slightly smaller if necessary we can assume $\mathrm{Lip}(\hat{g}) \leq C$ on $\hat{U}$.

Let us denote the coordinates on $\tilde{U}$, induced by $\tilde{\varphi}$, by $x^\mu$ and those on $\hat{U}$, induced by $\hat{\varphi}$, by $y^\nu$. The push-forward of $\tilde{e}_\alpha|_{\Phi(D_<)}$ via $\id$ to $\hat{M}$ is denoted by $\hat{e}_\alpha$. We write $\tilde{e}_\alpha = \tilde{e}_\alpha^{\; \, \mu} \frac{\rd}{\rd x^\mu}$, where $\tilde{e}_\alpha^{\; \, \mu}$ is continuous on $\Phi(D)$, and $\hat{e}_\alpha = \hat{e}_\alpha^{\; \, \nu} \frac{\rd}{\rd y^\nu}$, where $\hat{e}_\alpha^{\; \, \nu}$ is continuous on its domain of definition $\hat{U}_< \cap \id\big( \Phi(D_<)\big)$. Note that this set is non-empty since $\id \circ \tilde{\gamma}_0$ maps eventually into $\hat{U}_<$.

\textbf{\underline{Step 1:}} We show that there is $s_1 \in \big[0, \tilde{F}(0) \big)$ and $0 < \rho_1 < \rho$ such that if we set $$D_1 := \{(z_0, \uz) \in D \; | \; \uz \in \overline{B_{\rho_1}^d(0)}, \; s_1 \leq z_0 < \tilde{F}(\uz)\} $$ we have firstly $\id \big( \Phi(D_1)\big) \subseteq \hat{U}_<$ and $\id \big( \Phi(D_1)\big)$ has compact closure in $\hat{U}$ --  and secondly $\frac{\rd y^\nu}{\rd x^\mu}$ $\Big( = \frac{\rd \id^\nu}{\rd x^\mu} \Big)$, which is smooth on $\Phi(D_1)$ by our definition of an extension, extends continuously to $\Phi(\overline{D_1})$ for all $\nu, \mu \in \{0, \ldots, d\}$ and, moreover,  satisfies $\det \frac{\rd y^\nu}{\rd x^\mu} \neq 0$.

\textbf{\underline{Step 1.1:}} To start the proof of what is claimed in Step 1 we observe that there is $0 \leq s_0 < \tilde{F}(0)$ such that $\id \Big( \Phi \big( (s_0, 0)\big) \Big) \in \hat{U}_<$. By the continuity of $\tilde{F}$ and that of $\id \circ \Phi|_{D_<}$ there is $0 < \rho_0 < \rho$ such that 
\begin{equation*}
\{s_0\} \times \overline{B^d_{\rho_0}(0)} \subseteq \mathrm{int}(D) \quad \textnormal{ and } \quad \id \Big( \Phi\big(\{s_0\} \times \overline{B^d_{\rho_0}(0)} \big) \Big) \subseteq \hat{U}_< \;. 
\end{equation*}
For $\uz \in \overline{B^d_{\rho_0}(0)}$ the curves $\hat{\gamma}_{\uz} := \id \circ \tilde{\gamma}_{\uz}|_{[s_0, \tilde{F}(\uz))}$ start at parameter $s_0$ in $\hat{U}$ but, a priori, they might leave $\hat{U}$ later on, cf.\ Figure \ref{FigTwoExtPf}.
\begin{figure}[h] 
  \centering
  \begin{minipage}{0.8\textwidth}
  \centering
  \def\svgwidth{8cm}
   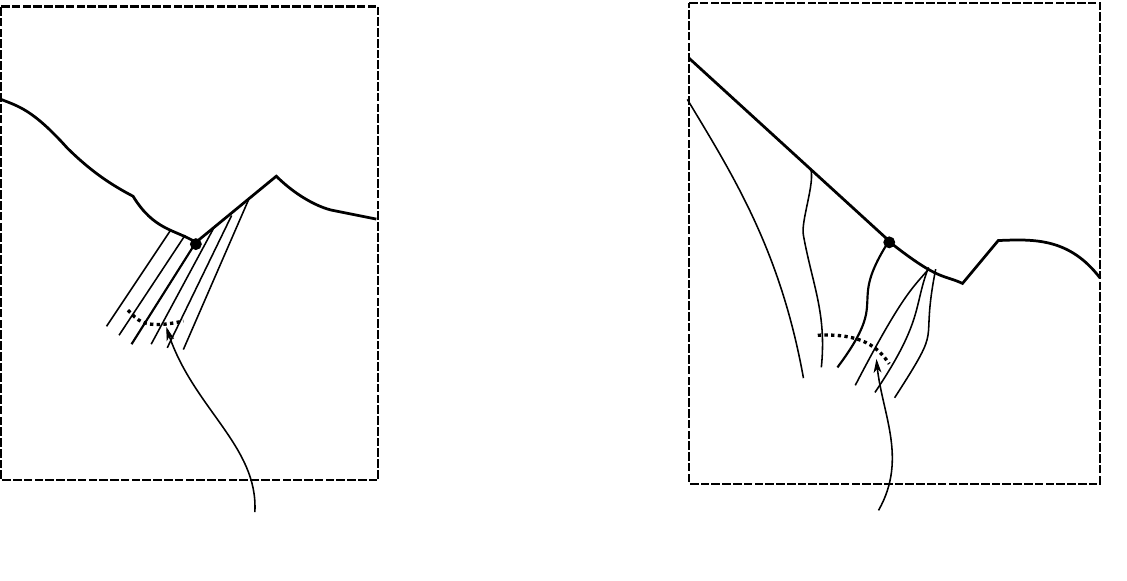 
      \caption{The set-up. Note that, a priori, the (asymptotic) behaviour of the curves $\hat{\gamma}_{\uz}$ can be wild and does not need to have the same qualitative features as those of $\tilde{\gamma}_{\uz}$.} \label{FigTwoExtPf}
      \end{minipage}
\end{figure}
Since $\hat{U}$ is open there is $\hat{F}(\uz) \in (s_0, \tilde{F}(\uz)]$ such that $\big[s_0, \hat{F}(\uz) \big)$  is the connected component of $\hat{\gamma}_{\uz}^{-1}(\hat{U})$ that contains $s_0$. Also note that since $\hat{\gamma}_{\uz}|_{[s_0, \hat{F}(\uz))}$ is causal and contained in $ \hat{U}$, its $y$-coordinate length is bounded from above uniformly in $\uz \in \overline{B^d_{\rho_0}(0)}$. Moreover, $\hat{e}_\alpha^{\; \, \nu}$ and $(\hat{e}^{-1})_\nu^{\; \, \alpha}$ are uniformly bounded in operator norm on $\id \Big( \Phi \big( \{s_0\} \times \overline{B_{\rho_0}^d(0)}\big) \Big)$ by continuity. It now follows from Lemma \ref{LemGapCoord} that
\begin{equation}
\label{EqPfPropTwoExtHat}
||\hat{e}||_{\R^{d+1}}, \; || \hat{e}^{-1} ||_{\R^{d+1}} \leq C \quad \textnormal{ on } \quad \bigcup_{\uz \in \overline{B^d_{\rho_0}(0)}} \hat{\gamma}_{\uz} \Big( \big[s_0, \hat{F}(\uz) \big) \Big) \;.
\end{equation}
Note that by definition we have
\begin{equation}
\label{EqPfPropTwoExtDef}
\id \Big( \bigcup_{\uz \in \overline{B^d_{\rho_0}(0)}} \tilde{\gamma}_{\uz} \big( [s_0, \hat{F}(\uz))\big)\Big) = \bigcup_{\uz \in \overline{B^d_{\rho_0}(0)}}  \hat{\gamma}_{\uz} \big( [s_0, \hat{F}(\uz))\big) 
\end{equation}
and that the continuity of $\tilde{e}_\alpha$ on $\Phi(D)$ implies that 
\begin{equation}
\label{EqPfPropTwoExtCont}
||\tilde{e}||_{\R^{d+1}}, \; ||\tilde{e}^{-1}||_{\R^{d+1}} \leq C \quad \textnormal{ on } \bigcup_{\uz \in \overline{B^d_{\rho_0}(0)}} \tilde{\gamma}_{\uz} \big( [s_0, \tilde{F}(\uz))\big) \;.
\end{equation}
Treating $y^\nu$ as another set of coordinates on $\tilde{\iota}(M) \cap \id^{-1}(\hat{U}_<) \subseteq \tilde{M}$ we obtain $\tilde{e}_\alpha^{\; \, \mu} \frac{\rd}{\rd x^\mu} = \hat{e}_\alpha^{\; \, \nu} \frac{\rd}{\rd y^\nu}$ and thus $\frac{\rd}{\rd x^\mu} = \big( \tilde{e}^{-1} \big)^\alpha_{\; \, \mu} \hat{e}_\alpha^{\; \, \nu} \frac{\rd}{\rd y^\nu}$. Comparison with  $\frac{\rd}{\rd x^\mu} = \frac{\rd y^\nu}{\rd x^\mu} \frac{\rd}{\rd y^\nu}$ gives
\begin{equation}
\label{EqPfPropTwoExtE}
\frac{\rd y^\nu}{\rd x^\mu} = \big( \tilde{e}^{-1} \big)^\alpha_{\; \, \mu} \hat{e}_\alpha^{\; \, \nu} \;.
\end{equation}
It now follows from \eqref{EqPfPropTwoExtHat}, \eqref{EqPfPropTwoExtDef}, \eqref{EqPfPropTwoExtCont}, and \eqref{EqPfPropTwoExtE} that there is $C_0 >0$ such that 
\begin{equation}
\label{EqDiffBounded}
||\frac{\rd y^\nu}{\rd x^\mu}||_{\R^{d+1}} \leq C_0 \quad \textnormal{ on } \quad \bigcup_{\uz \in \overline{B^d_{\rho_0}(0)}} \tilde{\gamma}_{\uz} \big( [s_0, \hat{F}(\uz))\big) \;.
\end{equation}

\textbf{\underline{Step 1.2:}} By assumption we also have 
\begin{equation}
\label{EqBoundDPhi}
||\partial_{z_0}\Phi^\mu||_{\R^{d+1}} \leq C_1 
\end{equation}
on $D_<$. Recall that $\hat{p}$ is at the coordinate origin in the $y^\nu$-coordinates for $\hat{U}$. Let $\mu >0$ be such that the coordinate ball of radius $\mu$ around the origin, $B_\mu(0) \subseteq \hat{U}$, is pre-compact in $\hat{U}$. We now choose $0 \leq s_0 \leq s_1 < \tilde{F}(0)$ such that
\begin{itemize}
\item $\hat{\gamma}_0 \big([s_1, \tilde{F}(0))\big) \subseteq B_{\nicefrac{\mu}{2}}(0)$
\item $\tilde{F}(0) - s_1 < \frac{\mu}{4 \cdot C_0 \cdot C_1}$
\end{itemize}
and $0 < \rho_1 \leq \rho_0 < \rho$ such that
\begin{itemize}
\item $0 < \tilde{F}(\uz) - s_1 < \frac{\mu}{2 \cdot C_0 \cdot C_1}$ for all $\uz \in \overline{B^d_{\rho_1}(0)}$
\item $\hat{F}(\uz) > s_1$ for all $\uz \in \overline{B^d_{\rho_1}(0)}$
\item $\id \Big( \Phi\big( \{s_1\} \times \overline{B^d_{\rho_1}(0)} \big) \Big) \subseteq B_{\nicefrac{\mu}{2}}(0) \;,$
\end{itemize}
where the first is by the continuity of $\tilde{F}$ and the last two by the continuity of $\id \circ \Phi|_{D_<}$.
\begin{figure}[h] 
  \centering
  \def\svgwidth{4.5cm}
\begingroup%
  \makeatletter%
  \providecommand\color[2][]{%
    \errmessage{(Inkscape) Color is used for the text in Inkscape, but the package 'color.sty' is not loaded}%
    \renewcommand\color[2][]{}%
  }%
  \providecommand\transparent[1]{%
    \errmessage{(Inkscape) Transparency is used (non-zero) for the text in Inkscape, but the package 'transparent.sty' is not loaded}%
    \renewcommand\transparent[1]{}%
  }%
  \providecommand\rotatebox[2]{#2}%
  \newcommand*\fsize{\dimexpr\f@size pt\relax}%
  \newcommand*\lineheight[1]{\fontsize{\fsize}{#1\fsize}\selectfont}%
  \ifx\svgwidth\undefined%
    \setlength{\unitlength}{349.10134623bp}%
    \ifx\svgscale\undefined%
      \relax%
    \else%
      \setlength{\unitlength}{\unitlength * \real{\svgscale}}%
    \fi%
  \else%
    \setlength{\unitlength}{\svgwidth}%
  \fi%
  \global\let\svgwidth\undefined%
  \global\let\svgscale\undefined%
  \makeatother%
  \begin{picture}(1,1.11368402)%
    \lineheight{1}%
    \setlength\tabcolsep{0pt}%
    \put(0,0){\includegraphics[width=\unitlength,page=1]{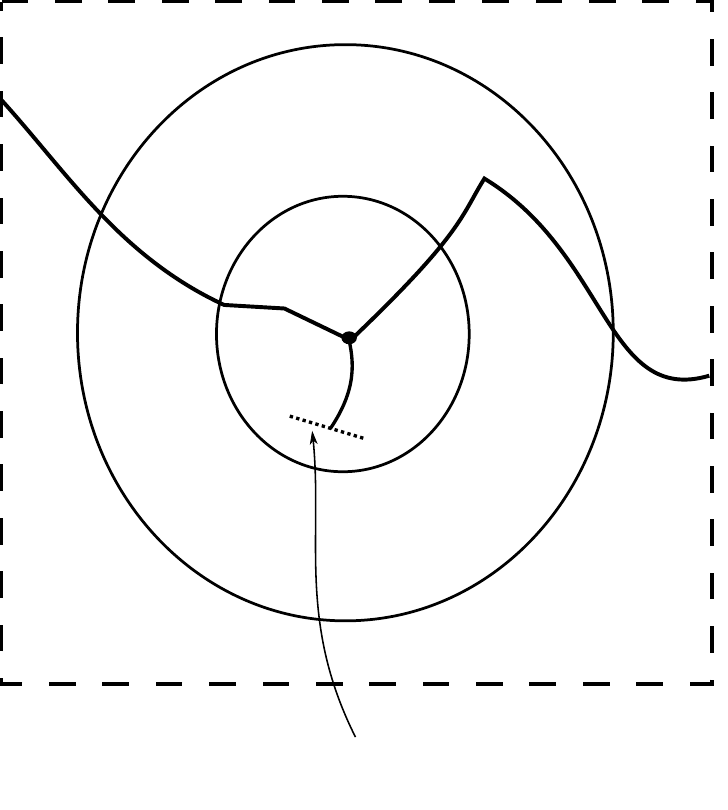}}%
    \put(0.84322326,0.97586841){\color[rgb]{0,0,0}\makebox(0,0)[lt]{\lineheight{1.25}\smash{\begin{tabular}[t]{l}$\hat{U}$\end{tabular}}}}%
    \put(0.06484228,1.00366785){\color[rgb]{0,0,0}\makebox(0,0)[lt]{\lineheight{1.25}\smash{\begin{tabular}[t]{l}$B_\mu(0)$\end{tabular}}}}%
    \put(0.29071183,0.87509589){\color[rgb]{0,0,0}\makebox(0,0)[lt]{\lineheight{1.25}\smash{\begin{tabular}[t]{l}$B_{\nicefrac{\mu}{2}}(0)$\end{tabular}}}}%
    \put(0.44708303,0.70482509){\color[rgb]{0,0,0}\makebox(0,0)[lt]{\lineheight{1.25}\smash{\begin{tabular}[t]{l}$\hat{p}$\end{tabular}}}}%
    \put(0.12391586,0.02026676){\color[rgb]{0,0,0}\makebox(0,0)[lt]{\lineheight{1.25}\smash{\begin{tabular}[t]{l}$\id \Big(\Phi\big(\{s_1\} \times \overline{B^d_{\rho_1}(0)}\big)\Big)$\end{tabular}}}}%
    \put(0.50386272,0.5646474){\color[rgb]{0,0,0}\makebox(0,0)[lt]{\lineheight{1.25}\smash{\begin{tabular}[t]{l}$\hat{\gamma}_0$\end{tabular}}}}%
  \end{picture}%
\endgroup%
 
      \caption{Controlling the length of the curves $\hat{\gamma}_{\uz}$ in $\hat{U}$.} \label{FigBalls}
\end{figure}

For all $s_1 \leq s < \hat{F}(\uz) \leq \tilde{F}(\uz)$ and $\uz \in \overline{B^d_{\rho_1}(0)}$ we have $$\dot{\hat{\gamma}}^\nu_{\uz}(s) = \frac{\rd y^\nu}{\rd x^\mu} \dot{\tilde{\gamma}}^\mu_{\uz}(s) = \frac{\rd y^\nu}{\rd x^\mu} \big(D\Phi(s,\uz) \big)^\mu_{\; \, 0}$$ and thus by \eqref{EqDiffBounded} and \eqref{EqBoundDPhi}
\begin{equation}
\label{EqHatVelBounded}
||\dot{\hat{\gamma}}(s)||_{\R^{d+1}} \leq C_0 \cdot C_1 \;.
\end{equation}
As a consequence we have $L_{\overline{g}} ( \hat{\gamma}_{\uz}|_{[s_1, \hat{F}(\uz))}) \leq C_0 \cdot C_1 \cdot \frac{\mu}{2 \cdot C_0 \cdot C_1} = \frac{\mu}{2}$ and thus $\hat{\gamma}_{\uz}|_{[s_1, \hat{F}(\uz))}$ must map into $B_\mu(0) \subset \subset \hat{U}$. This is however only compatible with the definition of $\hat{F}(\uz)$ if we have $\hat{F}(\uz) = \tilde{F}(\uz)$ for all $\uz \in \overline{B^d_{\rho_1}(0)}$. This shows the first part of Step 1, i.e., $\id \big( \Phi(D_1)\big) \subseteq B_\mu(0) \cap \hat{U}_<$.

\textbf{\underline{Step 1.3:}} We now show that the functions $\frac{\rd y^\nu}{\rd x^\mu}$ extend continuously to $\Phi(\overline{D}_1)$ with non-vanishing determinant.  This follows directly from \eqref{EqPfPropTwoExtE} if we can show that $\hat{e}_\alpha^{\; \, \nu} \circ \id$ extends continuously to $\Phi(\overline{D}_1)$ with non-vanishing determinant. Here, it is most convenient  to think of $\hat{e}_\alpha^{\;\, \nu}$ again as $\tilde{e}_\alpha$ in the $y^\nu$-coordinates on $\Phi(D_1) \subseteq \tilde{U}$ given by $\hat{\varphi} \circ \id$.

We first recall that $\hat{e}_\alpha$ satisfies $$0 = \frac{d}{ds} \hat{e}_\alpha^{\; \, \nu} \big( \hat{\gamma}_{\uz}(s) \big) + \hat{\Gamma}^\nu_{\kappa \rho} \big( \hat{\gamma}(s) \big) \dot{\hat{\gamma}}_{\uz}^\kappa(s) \hat{e}_\alpha^{\; \, \rho} \big( \hat{\gamma}_{\uz}(s) \big) \;. $$
Also recall \eqref{EqHatVelBounded}, $s \in [s_1, \tilde{F}(\uz))$, and $|\hat{\Gamma}^\nu_{\kappa \rho}| \lesssim 1$. Gronwall gives $||\hat{e}_\alpha^{\; \, \nu} (s)||_{\R^{d+1}} \lesssim 1$ for all $s_1 \leq s < \tilde{F}(\uz)$. It now follows from 
\begin{equation}
\label{EqIntODE}
\hat{e}_\alpha^{\; \, \nu} \big( \hat{\gamma}_{\uz}(s_3) \big) = \hat{e}_\alpha^{\; \, \nu} \big( \hat{\gamma}_{\uz}(s_2) \big) - \int\limits_{s_2}^{s_3} \hat{\Gamma}^\nu_{\kappa \rho} \big( \hat{\gamma}(s) \big) \dot{\hat{\gamma}}_{\uz}^\kappa(s) \hat{e}_\alpha^{\; \, \rho} \big( \hat{\gamma}_{\uz}(s) \big) \, ds
\end{equation}
for all $s_1 \leq s_2 < s_3 < \tilde{F}(\uz)$ that $\lim_{s \to \tilde{F}(\uz)} \hat{e}_\alpha^{\, \; \nu} \big( \hat{\gamma}_{\uz}(s) \big)$ exists. Moreover, we have $\det \hat{e}(s) = \det \hat{e}(s_1) \cdot \exp \big[- \int_{s_1}^s \hat{\Gamma}^\nu_{\kappa \nu} \big( \hat{\gamma}_{\uz}(s')\big) \dot{\hat{\gamma}}^\kappa_{\uz}(s') \, ds' \big]$. This gives us an extension with non-vanishing determinant of $\hat{e}_\alpha^{\, \; \nu} \circ \id$ to $\overline{D_1}$ using the $(z_0, \uz)$-coordinates. To check the continuity in the $(z_0, \uz)$-coordinates (which are continuously equivalent to the $x^\mu$-coordinates by the assumption that $\Phi$ is a homeomorphism on $D$) at $\overline{D_1} \cap \mathrm{graph} \tilde{F}$ we observe that \eqref{EqIntODE} together with the above mentioned bounds imply on $\overline{D_1}$
$$\Big| \hat{e}_\alpha^{\; \, \nu} \big( \tilde{F}(\uz), \uz \big) - \hat{e}_\alpha^{\; \, \nu} (z_0, \uz) \Big| \leq C \cdot |\tilde{F}(\uz) - z_0|\;,$$
where the constant $C$ is uniform on $\overline{D_1}$. Since $\hat{e}_\alpha^{\; \, \nu}$ is continuous in $D_1$ this implies also the continuity at $\overline{D_1} \cap \mathrm{graph} \tilde{F}$ in a straightforward manner. This concludes the proof of Step 1.

\textbf{\underline{Step 2:}} We choose a small cube $Q:= (-\varepsilon, \varepsilon)^{d+1}$ in the $x^\mu$-coordinates centred at $\tilde{p}$ such that $Q \cap \tilde{U}_< \subseteq \Phi(D_1)$. By Step 1, $\frac{\rd y^\nu}{\rd x^\mu}$ extends continuously to $Q \cap \tilde{U}_\leq$.  Since the graphing function $f$ is Lipschitz, the domain $Q \cap \tilde{U}_<$ is quasiconvex\footnote{Recall that a set $Q \subseteq \R^{n}$ is called \emph{quasiconvex} if there exist $\omega >0$ such that any two points $y,z \in Q$ can be joined by a curve $\alpha$ \emph{in $Q$} such that $L(\alpha) \leq \omega \cdot d_{\R^n}(y,z)$, where the length and the distance are with respect to the Euclidean metric on $\R^n$.}  (see also the proof of Proposition \ref{PropFindCurves}) and the result \cite{Whitney34} by Whitney shows that we can extend the functions $y^\nu(x^\mu)$ for each $\nu \in \{0, \ldots, d\}$ as $C^1$-functions to all of $Q$.\footnote{Since the domain is Lipschitz, one can also apply the extension procedure by Stein, see \cite{Stein70}, chapter VI, 3.2, proof of Theorem 5'.} It follows from the pre-compactness of $\id (Q \cap \tilde{U}_<) \subseteq \id\big(\Phi(D_1)\big)$ in $\hat{U}$ that after making $Q$ slightly smaller if necessary the extension of the functions $y^\nu(x^\mu)$ maps into the domain $(-\varepsilon_2, \varepsilon_2) \times (-\varepsilon_3, \varepsilon_3)^d$ of the $y$-coordinates and thus gives rise to a $C^1$-map $\mathfrak{I} : \tilde{U} \supseteq\tilde{\varphi}^{-1}(Q) \to \hat{U}$. Again by Step 1 the differential of $\mathfrak{I}$ at $\tilde{p}$ is non-vanishing and we can thus find neighbourhoods $\tilde{W} \subseteq \tilde{U}$ of $\tilde{p}$ and $\hat{W} \subseteq \hat{U}$ of $\hat{p}$ such that $\mathfrak{I} : \tilde{W} \to \hat{W}$ is a diffeomorphism. 

Without loss of generality we can choose $\hat{W}_<$ to be connected.  Since $\mathfrak{I}|_{\tilde{W}_<} = \id|_{\tilde{W}_<}$ it follows from Step 1 that $\mathfrak{I} (\tilde{W}_<) \subseteq \hat{W}_<$. We want to show that $\mathfrak{I} (\tilde{W}_<) = \hat{W}_<$. Assume there is a point $\tilde{q} \in \tilde{W} \setminus \tilde{W}_<$ with $\mathfrak{I}(\tilde{q}) \in \hat{W}_<$. Then choose a point $\tilde{r} \in \tilde{W}_<$ and connect the points $\mathfrak{I}(\tilde{q})$ and $\mathfrak{I}(\tilde{r})$ by a curve in $\hat{W}_<$. The pull-back of this curve in $\tilde{W}$ must intersect $\mathrm{graph} f$ which then gives us the contradiction that a point on the graph, which is a limit point of a future inextendible causal curve in $M$, is mapped into $\hat{W}_< \subseteq \hat{\iota}(M)$. 

Hence, $\mathfrak{I}|_{\tilde{W}_<} = \id|_{\tilde{W}_<} : \tilde{W}_< \to \hat{W}_<$ is a diffeomorphism. This concludes the proof.
\end{proof}

The next lemma shows that if the two extensions in Proposition \ref{PropC1Aux} have higher regularity then the identification map $\id$ also extends as a more regular map to the boundary.

\begin{lemma} \label{LemBoostReg}
Let $(M,g)$ be a smooth time-oriented and globally hyperbolic Lorentzian manifold and let $\iota_i : M \hookrightarrow \tilde{M}_i$ be two $C^{k_i, a_i}_{\loc}$-extensions, $i = 1,2$, with $k_i \in \N_0$ and $a_i \in \{0,1\}$.\footnote{It suffices if $(M,g)$ is as regular as the rougher of the two extensions. By $C^{k_i, 0}_{\loc}$ we mean here just $C^{k_i}$. The lemma also holds true for $a_i \in [0,1]$, but this will not be needed in this paper.}  Let $\tilde{\varphi}_i : \tilde{U}_i \to (-\varepsilon_{0,i}, \varepsilon_{0,i}) \times (-\varepsilon_{1,i}, \varepsilon_{1,i})^d$ be future boundary charts and assume there exist open neighbourhoods $\tilde{W}_i \subseteq \tilde{U}_i$ of the centres such that $$\id|_{\tilde{W}_{1,<}} : \tilde{W}_{1,<} \to \tilde{W}_{2,<}$$ is a diffeomorphism and extends as  a $C^1$-regular map to $$\id|_{\tilde{W}_{1, \leq}} : \tilde{W}_{1, \leq} \to \tilde{W}_{2,\leq } \;.$$ Then, $\id|_{\tilde{W}_{1, \leq}}$ is indeed a $C^{k+1, a}_{\loc}$-regular map, where $C^{k, a}_{\loc} = C^{k_1, a_1}_{\loc} \cap C^{k_2, a_2}_{\loc}$ is the minimal regularity of the two extensions.
\end{lemma}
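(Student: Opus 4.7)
The plan is to bootstrap the regularity of $\id$ by exploiting that it is an isometry, using a Koszul-type algebraic identity for its second derivatives. I work in local coordinates $\{x^\mu\}$ on the future boundary chart $\tilde{\varphi}_1$ around the centre of $\tilde{W}_1$ and $\{y^\alpha\}$ on the future boundary chart $\tilde{\varphi}_2$ around the centre of $\tilde{W}_2$. Writing $\phi := \id$ componentwise, the isometry condition reads
\begin{equation*}
(\tilde g_2)_{\alpha\beta}\bigl(\phi(x)\bigr)\,\partial_\mu \phi^\alpha(x)\,\partial_\nu \phi^\beta(x) \;=\; (\tilde g_1)_{\mu\nu}(x)
\end{equation*}
on $\tilde{W}_{1,<}$. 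Because $\phi$ extends $C^1$ to $\tilde{W}_{1,\leq}$ and both metrics are continuous on the closure, this identity extends by continuity to all of $\tilde{W}_{1,\leq}$.

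Next I would differentiate the equation in $x^\rho$ and take the signed cyclic sum over $(\rho,\mu,\nu)$ that one uses to extract Christoffel symbols from a metric. Since $\partial \phi$ is invertible on $\tilde{W}_{1,\leq}$ (as $\id$ is a $C^1$-diffeomorphism there), one can solve algebraically for $\partial^2 \phi$ and obtain, on $\tilde{W}_{1,<}$, the harmonic-map-type identity
\begin{equation*}
\partial_\mu \partial_\nu \phi^\alpha \;=\; -\,\Gamma[\tilde g_2]^\alpha_{\beta\gamma}\bigl(\phi\bigr)\,\partial_\mu \phi^\beta\,\partial_\nu \phi^\gamma \;+\; \Gamma[\tilde g_1]^\rho_{\mu\nu}\,\partial_\rho \phi^\alpha,
\end{equation*}
where $\Gamma[\tilde g_i]$ denotes the Christoffel symbols of $\tilde g_i$. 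The crucial feature is that the right-hand side involves the metrics \emph{only} through their first derivatives.

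I would then bootstrap on this identity. Since $\tilde g_i \in C^{k,a}_{\loc}$ we have $\Gamma[\tilde g_i] \in C^{k-1,a}_{\loc}$ (with the Lipschitz case $k=0,a=1$ treated in $L^\infty_{\loc}$). Starting from $\phi \in C^1$ on $\tilde{W}_{1,\leq}$, the right-hand side of the displayed identity is continuous (or bounded a.e.\ in the Lipschitz case) and extends to the closure, so the identity upgrades $\phi$ to $C^2_{\loc}$ (respectively $C^{1,1}_{\loc}$) on $\tilde{W}_{1,\leq}$. Differentiating the identity and re-applying this argument with the newly gained regularity of $\phi$ peels off one further derivative at each iteration, as permitted by the regularity of $\Gamma[\tilde g_i]$; iterating to exhaust the metric regularity yields $\phi \in C^{k+1,a}_{\loc}(\tilde{W}_{1,\leq})$.

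The main technical point to watch is the last step of each iteration: lifting the interior pointwise (or a.e.) identity for the second derivatives to genuine $C^2$-regularity up to the Lipschitz boundary. This is standard for Lipschitz domains: any function $C^1$ on the closure of $\tilde{W}_{1,<}$ whose interior Hessian admits a continuous (respectively $L^\infty$) extension to the closure is in fact $C^2$ (respectively $C^{1,1}$) on $\tilde{W}_{1,\leq}$, as is verified by integrating the candidate derivative along short paths inside the quasiconvex domain $\tilde{W}_{1,\leq}$.
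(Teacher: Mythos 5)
Your proposal is correct and follows essentially the same route as the paper: differentiate the isometry identity $\id^*\tilde g_2=\tilde g_1$ in coordinates, solve algebraically for the second derivatives of $\id$ in terms of the Christoffel symbols of the two metrics and the first derivatives of $\id$, and bootstrap one derivative at a time up to the Lipschitz boundary. The only cosmetic difference is at the final step, where the paper invokes the Whitney (case $a=0$) and Stein (case $a=1$) extension theorems on the quasiconvex domain, whereas you verify the requisite boundary regularity directly by integrating along short interior paths --- the same mechanism in slightly different packaging.
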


Recall that if $A \subseteq \R^n$ is a closed set then we say that a function $f : A \to \R^m$ is of regularity $C^{k,a}_{\loc}$ if, and only if, it can be extended to a function $F : \R^n \to \R^m$ of the given regularity.

\begin{proof}
We resort to a well-known argument.\footnote{See for example \cite{Chrus11}, proof of Proposition 2.4.} Recall that $\id^* \tilde{g}_2 = \tilde{g}_1$ in $\tilde{W}_{1,<}$ and that $\id|_{\tilde{W}_{1,<}} = \iota_2 \circ \iota_1|_{\tilde{W}_{1,<}}^{-1}$ is smooth. In the local coordinates $x^\mu$ for $\tilde{U}_1$ and $y^\alpha$ for $\tilde{U}_2$ we thus have 
\begin{equation}
\label{EqIso}
(\tilde{g}_1)_{\mu \nu} = (\tilde{g}_2)_{\alpha \beta} \circ \id \, \frac{\partial \id^\alpha}{\partial x^\mu} \frac{\partial \id^\beta}{\partial x^\nu} \;.
\end{equation} 
Differentiating with respect to $x^\kappa$ in $\tilde{W}_{1,<}$ and using the resulting expression to build the Christoffel symbols $\big(\tilde{\Gamma}_1\big)^\lambda_{\mu \kappa}$ we obtain
\begin{equation}\label{EqBoostRegIso}
\frac{\partial^2 \id^\alpha}{\partial x^\mu \partial x^\kappa} = \big(\tilde{\Gamma}_1\big)^\lambda_{\mu \kappa} \frac{\partial \id^\alpha}{\partial x^\lambda} - \big(\tilde{\Gamma}_2\big)^\alpha_{\delta \gamma} \circ \id \cdot \frac{\partial \id^\gamma}{\partial x^\kappa} \frac{\partial \id^\delta}{\partial x^\mu} \;.
\end{equation}
Consider first the case $k \in \N_0$ and $a = 0$. Without loss of generality let $k \geq 1$. Since $\id|_{\tilde{W}_{1,<}}$ extends as a $C^1$-map to $\tilde{W}_{1,\leq}$, it follows that the right hand side of  \eqref{EqBoostRegIso} extends continuously to $\tilde{W}_{1, \leq}$. Thus, the second derivatives of $\id$ extend continuously. By further differentiation of \eqref{EqBoostRegIso} and an inductive argument we find that $k+1$-derivatives of $\id$ extend continuously to the boundary. The theorem in \cite{Whitney34} shows that $\id|_{W_{1, <}}$ can be extended as a $C^{k+1}$-regular map.

Now consider the case $a = 1$. Again, we proceed inductively over $k$. For $k = 0$ we obtain that the right hand side of \eqref{EqBoostRegIso}, and thus also the left hand side, is bounded in $\tilde{W}_{1,<}$.\footnote{The sets $\tilde{W}_{i,<}$ can, without loss of generality, be assumed to be pre-compact in $\tilde{U}_i$.} Differentiating \eqref{EqBoostRegIso} further we obtain that up to and including $k+2$-derivatives of $\id$ are bounded in $\tilde{W}_{i,<}$. The extension is now guaranteed by Theorem 5' in chapter VI of \cite{Stein70}.\footnote{There is no particular reason, apart from the convenience of citing an explicit statement available in the literature, to use Stein's extension operator in the case $a=1$ and Whitney's family of extension operators in the case $a=0$.}
\end{proof}

We now come to the main uniqueness result of this paper, which shows that anchored Lipschitz extensions are locally unique.

\begin{theorem}
\label{ThmAnchUni}
Let $(M,g)$ be a time-oriented and globally hyperbolic Lorentzian manifold with $g \in C^1$ and let $\tilde{\iota} : M \hookrightarrow \tilde{M}$ and $\hat{\iota} : M \hookrightarrow \hat{M}$ be two $C^{0,1}_{\loc}$-extensions. Let $\gamma : [-1,0) \to M$ be a future directed and future inextendible causal $C^1$-curve in $M$ such that $\lim_{s \to 0} ( \tilde{\iota} \circ \gamma)(s) =: \tilde{p}$ and $\lim_{s \to 0}(\hat{\iota} \circ \gamma)(s) =: \hat{p}$ exist in $\tilde{M}$, $\hat{M}$, respectively. 

Then there exist future boundary charts $\tilde{\varphi} : \tilde{U} \to (-\tilde{\varepsilon}_0, \tilde{\varepsilon}_0) \times (-\tilde{\varepsilon}_1, \tilde{\varepsilon}_1)^d$ around $\tilde{p}$ and $\hat{\varphi} : \hat{U} \to (-\hat{\varepsilon}_0, \hat{\varepsilon}_0) \times (-\hat{\varepsilon}_1, \hat{\varepsilon}_1)^d$ around $\hat{p}$, with $\tilde{\iota} \circ \gamma$ being ultimately contained in $\tilde{U}_<$ and $\hat{\iota} \circ \gamma$ being ultimately contained in $\hat{U}_<$, and neighbourhoods $\tilde{W} \subseteq \tilde{U}$ of $\tilde{p}$ and $\hat{W} \subseteq \hat{U}$ of $\hat{p}$ such that $$\id|_{\tilde{W}_<} : \tilde{W}_< \to \hat{W}_<$$ is a diffeomorphism and extends as a $C^{1,1}_{\loc}$-regular isometric diffeomorphism to $$\id|_{\tilde{W}_{\leq}} : \tilde{W}_\leq \to \hat{W}_\leq \;.$$
\end{theorem}

Let us remark that if one assumes additional regularity on the two extensions, then, in line with Lemma \ref{LemBoostReg}, one also obtains additional regularity on the extension $\id|_{\tilde{W}_{\leq}}$: the extension of $\id$ is one degree more regular than the minimal regularity of the spacetime extensions. However, as the example in Section \ref{SecCounter} shows, an analogue of Theorem \ref{ThmAnchUni} breaks down below Lipschitz regularity; it is not a matter of just sacrificing the same degree of regularity of the extension $\id|_{\tilde{W}_{\leq}}$.

\begin{proof}
By Proposition \ref{PropLeavingLipschitz} we have $\tilde{p} \in \rd^+ \tilde{\iota}(M)$ and $\hat{p} \in \rd^+\hat{\iota}(M)$ and the future boundary charts exist with $\tilde{\iota} \circ \gamma$ and $\hat{\iota} \circ \gamma$ lying eventually below the respective graphs. Consider the curve $\tilde{\tau}$ defined in the $x^\mu$-coordinates on $\tilde{U}$ by $$(-\tilde{\varepsilon}_0, 0) \ni s \mapsto \tilde{\tau}(s) = (s,0, \ldots, 0) \;.$$ We obviously have $\lim_{s \to 0} \tilde{\tau}(s) = \tilde{p}$. The curve $\tilde{\tau}$ corresponds to a future directed and future inextendible timelike curve $\tau := \tilde{\iota}^{-1} \circ \tilde{\tau}$ in $M$. We apply Proposition \ref{PropFindCurves} with $\tilde{\gamma}_1 = \tilde{\gamma} := \iota \circ \gamma$ and $\tilde{\gamma}_2 = \tilde{\tau}$ to obtain a sequence of curves $\tilde{\sigma}_n$ connecting $\tilde{\gamma}$ and $\tilde{\tau}$ which has the properties stated in Proposition \ref{PropFindCurves}. By Proposition \ref{PropBBound} applied to the extension $\hat{\iota} : M \hookrightarrow \hat{M}$ we infer that $\lim_{s \to 0} (\hat{\iota} \circ \tau)(s) = \hat{p}$. The second part of Remark \ref{RemIfInBoundary} shows that $\hat{\iota} \circ \tau$ must eventually lie below the graph of the graphing function.

We now define a map $\Phi : D \to \tilde{U}$ as in Proposition \ref{PropC1Aux}. Choose a $\rho >0$ such that $B_\rho^d(0) \subset \subset (-\tilde{\varepsilon}_1, \tilde{\varepsilon}_1)^d$ and let $-\tilde{\varepsilon}_0 < - \alpha_0 < \min_{\ux \in \overline{B_\rho^d(0)}} f(\ux)$, where $f$ is the graphing function of the boundary chart $\tilde{U}$. We set $$D = \{(z_0, \uz) \in \R^{d+1} \;  | \; \uz \in B_\rho^d(0), \; 0 \leq z_0 \leq f(\uz) + \alpha_0 \} $$ and define the map $\Phi$ with respect to $z$ and $x$-coordinates by $$D \ni (z_0, \uz) \overset{\Phi}{\mapsto} (z_0 - \alpha_0, \uz) = (x_0, \ux) \;.$$
The curves $z_0 \overset{\tilde{\gamma}_{\uz}}{\mapsto} \Phi(z_0, \uz)$ are clearly future directed and future inextendible timelike curves in $M$ and we can construct a parallel frame field by making a continuous choice on $\Phi\big(\{0\} \times B_\rho^d(0) \big)$ and parallely propagating it along the curves $\tilde{\gamma}_{\uz}$. It follows as in  Step 1.3 of the proof of Proposition \ref{PropC1Aux} that this frame extends continuously to $\Phi(D) \cap \mathrm{graph} f$.

Proposition \ref{PropC1Aux} now yields the existence of neighbourhoods $\tilde{W} \subseteq \tilde{U}$ of $\tilde{p}$ and $\hat{W} \subseteq \hat{U}$ of $\hat{p}$ such that $\id|_{\tilde{W}_<} : \tilde{W}_< \to \hat{W}_<$ is a diffeomorphism and extends as a $C^1$-regular isometric diffeomorphism to $\id|_{\tilde{W}_\leq} : \tilde{W}_{\leq} \to \hat{W}_\leq$. The additional regularity finally follows from Lemma \ref{LemBoostReg}.
\end{proof}

\section{Non-uniqueness of anchored spacetime extensions below Lipschitz regularity} \label{SecCounter}

We consider the $1+1$-dimensional cosmological model $(M,g)$, where $M = (0, \infty) \times \R$  with $(t,x)$ coordinates and Lorentzian metric 
\begin{equation}
\label{MetricCos}
g = - dt^2 + a(t)^2 \; dx^2 \;.
\end{equation}  
Here, $ a : (0,\infty) \to (0,\infty)$ is a smooth function with $a(t) \to 0$ for $t \to 0$ and $\int_0^1 \frac{1}{a(t')} \; dt' < \infty$.  The latter condition ensures that we have particle horizons, i.e. along the backward light cones (the future direction being given by $\partial_t$) the $x$-coordinates tend to finite limits when $t \to 0$. Clearly, $(M,g)$ is globally hyperbolic.
\vspace*{3mm}

\underline{\textbf{Constructing the first extension:}}
We first define the null coordinate $v(t,x) = \int_0^t \frac{1}{a(t')} \; dt' + x$. In $(t,v)$ coordinates the metric \eqref{MetricCos} takes the form $g = a(t)^2 \, dv^2 - a(t) [ dv \otimes dt + dt \otimes dv]$. Moreover, defining $\tau (t) = \int_0^t a(t') \; dt'$, the metric in $(\tau,v)$ coordinates reads $g  = a(t)^2 \,dv^2 - [dv \otimes d\tau + d\tau \otimes dv]$. We have thus constructed a $C^0$-extension 
\begin{align*}
\tilde{M}_1 &= \R \times \R \qquad \textnormal{ with $(\tau,v)$ coordinates} \\
\tilde{g}_1 &=  \tilde{a}(\tau)^2 \, dv^2 - [dv \otimes d\tau + d\tau \otimes dv]   \\
\iota_1 (t,x) &= (\int_0^t a(t') \; dt', \int_0^t \frac{1}{a(t')} \; dt' + x) \;,
\end{align*}
where $$\tilde{a}(\tau) = \begin{cases} a(t(\tau)) \qquad &\textnormal{ for } \tau >0 \\
a(0) = 0 &\textnormal{ for } \tau \leq 0 \end{cases} \;.$$
In the above $(\tau, v)$-coordinates the vector fields $
\partial_\tau $ and $ \partial_v + \frac{a^2(t)}{2} \partial_\tau $ are null and  the boundary hypersurface $\{\tau = 0\}$ is a null hypersurface from which the null geodesics bifurcate off.
\begin{figure}[h]
  \centering
  \def\svgwidth{6cm}
    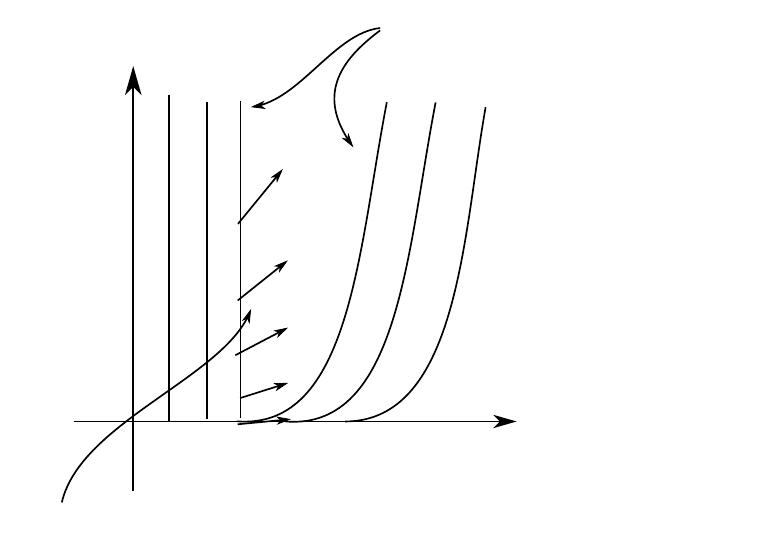
      \caption{The first $C^0$-extension}
\end{figure}
The spacetime $(\tilde{M}_1, \tilde{g}_1)$ is causally bubbling (see \cite{ChrusGra12}, \cite{GraKuSaSt19}): for $\tilde{p} \in \{\tau =0\}$ we have that  $J^+(\tilde{p}, \tilde{M}_1) \setminus I^+(\tilde{p}, \tilde{M}_1)$ has non-empty interior. As such the metric $\tilde{g}_1$ cannot be Lipschitz continuous. However, choosing $a(t) = t^p$ with $0<p<1$ (which meets the conditions assumed on $a(t)$) we find that the extension $(\tilde{M}_1, \tilde{g}_1)$ is  H\"older continuous with exponent $\alpha = \frac{2p}{p+1} \in (0,1)$. To see this note that $\tau(t) = \int_0^t s^p \, ds = \frac{t^{p+1}}{p+1}$. Hence, we have for $\tau \geq 0$
\begin{equation*}
\tilde{a}^2(\tau) = \big(t(\tau)\big)^{2p} = \big[(p+1) \tau\big]^{\frac{2p}{p+1}}\;.
\end{equation*}

We also note that the only non-vanishing partial weak derivative of $\tilde{g}_1$ in $\tau, v$ coordinates is $$\partial_\tau \big(\tilde{g}_1\big)_{vv} = \partial_\tau \big(\tilde{a}(\tau)\big)^2 = \begin{cases} \sim \partial_\tau \tau^{\frac{2p}{p+1}} \sim \tau^{\frac{p-1}{p+1}} \qquad \textnormal{ for } \tau > 0 \\
0 \qquad \textnormal{ for } \tau < 0 \end{cases}  \;.$$
We thus obtain that the weak derivatives of $\tilde{g}_1$ in the $(\tau, v)$ coordinates are locally in $L^2$ (and thus also the Christoffel symbols, since the metric is continuous) if $-1 < 2\frac{p-1}{p+1} $, which is the case for $ \frac{1}{3} < p <1$.
\vspace*{3mm}

\underline{\textbf{Constructing the second extension:}} The second extension is constructed analogously by straightening out the other set of null geodesics. We define the null coordinate $u = \int_0^t \frac{1}{a(t')} \; dt' - x$ and keep the same definition of $\tau$. In these coordinates the metric \eqref{MetricCos} takes the form $g = a(t)^2 \,du^2 - \big[ du \otimes d\tau + d\tau \otimes du\big]$. We have thus constructed another $C^0$-extension
\begin{align*}
\tilde{M}_2 &= \R \times \R \qquad \textnormal{ with $(\tau,u)$ coordinates} \\
\tilde{g}_2 &=  \tilde{a}(\tau)^2 \, du^2 - [du \otimes d\tau + d\tau \otimes du]   \\
\iota_2 (t,x) &= (\int_0^t a(t') \; dt', \int_0^t \frac{1}{a(t')} \; dt' - x) \;,
\end{align*}
where $\tilde{a}(\tau)$ is as before. Clearly, the first and the second extension are isometric and have the same regularity properties.
\vspace*{3mm}

\underline{\textbf{The two extensions are anchored but give different $C^1$-boundary extensions:}} Consider the future directed and past inextendible timelike curve $\gamma : (0,1]  \to M$ given, in $(t,x)$ coordinates, by $\gamma(s) = (s,x_0)$, where $x_0 \in \R$. In $(\tau, v)$ coordinates it is given by 
$$s \mapsto \big( \int_0^s a(t') \; dt' , \int_0^s \frac{1}{a(t')} \; dt' + x_0\big) $$ with tangent vector$$
 \frac{\partial}{\partial t} = a(t) \frac{\partial}{\partial \tau} + \frac{1}{a(t)} \frac{\partial}{\partial v} \;,$$
while in $(\tau, u)$ coordinates it is given by
$$s \mapsto \big( \int_0^s a(t') \; dt' , \int_0^s \frac{1}{a(t')} \; dt' - x_0\big) $$ with tangent vector$$
 \frac{\partial}{\partial t} = a(t) \frac{\partial}{\partial \tau} + \frac{1}{a(t)} \frac{\partial}{\partial u} \;.$$
It follows that $\gamma$ extends continuously in both extensions, and thus serves as an anchoring. Also note that if one reparametrises $\gamma$ such that the tangent vector is given by $a(t) \partial_t$, then the curve extends as a causal $C^1$ curve in both extensions. However, in the first extension $(\tilde{M}_1, \tilde{g}_1)$ it becomes tangent to the null geodesics given by $u= \mathrm{const}$ on the boundary $\{\tau =0\}$ while in the second extension it becomes tangent to the null geodesics given by $v = \mathrm{const}$ on the boundary $\{\tau = 0\}$. This shows that the two extensions differ by an infinite boost towards $\{\tau = 0\}$. Another manifestation of this infinite boost is that the future directed and past inextendible null curve $\sigma : (0,1] \to M$, given in $(\tau, v)$ coordinates by $\sigma(s) = (s,v_0)$ extends to the boundary in the first extension as a $C^1$ curve with velocity $\dot{\sigma} = \frac{\partial}{\partial \tau}\Big|_v$, while it follows from
\begin{equation*}
\begin{split}
\frac{\partial}{\partial \tau}\Big|_v &= \frac{1}{a(t)} \frac{\partial}{\partial t}\Big|_v \\
&= \frac{1}{a(t)} \Big[ \frac{\partial t}{\partial t}\Big|_v \frac{\partial}{\partial t}\Big|_u + \frac{\partial u }{\partial t}\Big|_v \frac{\partial}{\partial u} \Big|_t\Big] \\
&= \frac{1}{a(t)} \frac{\partial}{\partial t}\Big|_u + 2 \frac{1}{a^2(t)} \frac{\partial}{\partial u } \Big|_t \\
&= \frac{\partial}{\partial \tau}\Big|_u + 2 \frac{1}{a^2(t)} \frac{\partial}{\partial u} \Big|_\tau \;.
\end{split}
\end{equation*}
that it does not extend as a $C^1$ curve in the extension $(\tilde{M}_2, \tilde{g}_2)$. Hence, while $\tilde{\id}$ extends as a continuous map to the boundary, it does not extend as a $C^1$-regular map to the boundary\footnote{This obviously follows from the above discussion. But one can also write down $\id$ directly with respect to the $(\tau, v)$ and $(\tau, u)$ coordinates:
\begin{equation*}
\begin{split}
\tau & = \tau \\
u &= - v + 2 \int_0^{t(\tau)} \frac{1}{a(t')} \, dt' \;.
\end{split}
\end{equation*}
This shows that $\id$ extends continuously to $\tau = 0$ in $\tilde{M}_1$, but it follows from $\frac{\rd u }{\rd \tau}\Big|_v = \frac{\rd u}{\rd t}\Big|_v \frac{\rd t}{\rd \tau} = \frac{2}{a(t)} \cdot \frac{1}{a(t)} \to \infty$ as $\tau \to 0$ that is does not extend as a $C^1$-map.}. This shows that the analogue of Theorem \ref{ThmAnchUni} for merely H\"older-regular extensions is wrong.

\appendix

\section{A simpler method for proving uniqueness properties of sufficiently smooth extensions}

The purpose of the appendix is to demonstrate how normal coordinates can be used in a straightforward way to address questions of uniqueness of extensions which are at least $C^2$.
In Appendix \ref{SecA1} we show how local uniqueness of anchored extensions which are at least $C^2$ can be proved in a short and direct way. Although the result obtained is completely contained in Theorem \ref{ThmAnchUni} (together with Lemma \ref{LemBoostReg}), the method of proof is of interest in its own right. In Appendix \ref{SecA2} we show how it can be used to prove the statement that if one removes a single point from an inextendible spacetime (not necessarily globally hyperbolic), then the only possible extension is to reinstate the point removed. This question was raised to the author by JB Manchak in private communication \cite{JBM19}.

\subsection{Local uniqueness of anchored extensions} \label{SecA1}

For convenience we will formulate the result for smooth extensions and outline the marginal differences for limited regularity after the proof.
We begin with a corollary of Lemma \ref{LemLeavingCurveSmooth}.

\begin{corollary} \label{CorGeodSmooth}
Let $(M,g)$ be a smooth time-oriented and globally hyperbolic Lorentzian manifold and let $\tilde{\iota} : M \hookrightarrow \tilde{M}$ and $\hat{\iota} : M \hookrightarrow \hat{M}$ be two smooth extensions. Let $\gamma : [-1,0) \to M$ be a future directed and future inextendible causal $C^1$-curve in $M$ such that $\lim_{s \to 0} ( \tilde{\iota} \circ \gamma)(s) =: \tilde{p}$ and $\lim_{s \to 0}(\hat{\iota} \circ \gamma)(s) =: \hat{p}$ exist in $\tilde{M}$, $\hat{M}$, respectively.

Then there is a smooth future directed timelike geodesic $\sigma :[-1,0) \to M$ with $\lim_{s \to 0} ( \tilde{\iota} \circ \sigma)(s) =: \tilde{p}$ and $\lim_{s \to 0}(\hat{\iota} \circ \sigma)(s) =: \hat{p}$. Moreover in boundary charts $\tilde{U}$ around $\tilde{p}$ and $\hat{U}$ around $\hat{p}$ we have that $\tilde{\iota} \circ \gamma$ and $\tilde{\iota} \circ \sigma$ are ultimately contained in $\tilde{U}_<$ and $\hat{\iota} \circ \gamma$ and $\hat{\iota} \circ \sigma$ are ultimately contained in $\hat{U}_<$.\footnote{This guarantees that in going over from the curve $\gamma$ to the geodesic $\sigma$ one stays on the same side of the boundary, cf.\ Figure \ref{FigCounter}.}
\end{corollary}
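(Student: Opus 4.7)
The plan is first to produce the geodesic $\sigma$ inside $M$ using the $\tilde{\iota}$-extension via Lemma \ref{LemLeavingCurveSmooth}, and then to transport the endpoint behaviour of $\gamma$ in $\hat{M}$ onto $\sigma$ by an isometry-invariant b-boundary style argument based on Propositions \ref{PropFindCurves} and \ref{PropBBound}.

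First, I apply Lemma \ref{LemLeavingCurveSmooth} to the causal $C^1$-curve $\gamma$ in the smooth extension $\tilde{\iota} : M \hookrightarrow \tilde{M}$. This yields a smooth future directed timelike geodesic $\tilde{\sigma}_\star$ in $\tilde{M}$ ending at $\tilde{p}$, whose past portion maps into $\tilde{\iota}(M)$, together with a future boundary chart $\tilde{U}$ around $\tilde{p}$ such that both $\tilde{\iota} \circ \gamma$ and $\tilde{\sigma}_\star$ are ultimately contained in $\tilde{U}_<$. Pulling back by $\tilde{\iota}^{-1}$ and reparameterising produces a smooth, future directed, future inextendible timelike geodesic $\sigma : [-1, 0) \to M$ with $\lim_{s \to 0}(\tilde{\iota} \circ \sigma)(s) = \tilde{p}$.

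The main step is to verify $\lim_{s \to 0}(\hat{\iota} \circ \sigma)(s) = \hat{p}$. In the chart $\tilde{U}$, both $\tilde{\iota} \circ \gamma$ and $\tilde{\iota} \circ \sigma$ are future directed causal $C^1$-curves in $\tilde{U}_<$ sharing the common limit $\tilde{p} \in \mathrm{graph}\, f$, so Proposition \ref{PropFindCurves} supplies sequences $s_n, t_n \to 0$ together with piecewise $C^1$ connecting curves $\tilde{\sigma}_n : [0,1] \to \tilde{U}_<$ from $(\tilde{\iota} \circ \gamma)(s_n)$ to $(\tilde{\iota} \circ \sigma)(t_n)$ whose gap-lengths (with respect to a parallel orthonormal frame along $\tilde{\iota} \circ \gamma$) tend to zero and whose endpoint Lorentz transformations (relating the parallel frames along $\tilde{\iota} \circ \gamma$ and $\tilde{\iota} \circ \sigma$) remain uniformly bounded. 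Since $\tilde{U}_< \subseteq \tilde{\iota}(M)$, and because both the gap-length and the Lorentz transformation between two parallel orthonormal frames are intrinsic to $(M,g)$ and hence preserved by the isometric embedding, pulling the frames and the $\tilde{\sigma}_n$ back to $M$ yields connecting curves $\sigma_n$ satisfying the precise hypotheses of Proposition \ref{PropBBound}. Applying that proposition in the smooth (hence $C^{0,1}_{\loc}$) extension $\hat{\iota}$, with $\gamma_1 = \gamma$ (causal, with given limit $\hat{p}$) and $\gamma_2 = \sigma$ (of finite gap-length, since $\dot{\sigma}^\alpha$ is constant in a parallel frame along a geodesic and the parameter interval is bounded), I obtain $\lim_{s \to 0}(\hat{\iota} \circ \sigma)(s) = \hat{p}$.

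Finally, to produce the future boundary chart $\hat{U}$ around $\hat{p}$ with the required containment property, I apply Proposition \ref{PropLeavingLipschitz} to $\gamma$ in the extension $\hat{\iota}$, which supplies a future boundary chart with $\hat{\iota} \circ \gamma$ ultimately in $\hat{U}_<$. Invoking Remark \ref{RemIfInBoundary}(2) applied to the connecting curves $\sigma_n$ now viewed in $\hat{M}$ (whose coordinate lengths in $\hat{U}$ also tend to zero by Corollary \ref{CorGapCoord}) then forces $\hat{\iota} \circ \sigma$ to lie ultimately in the same connected component as $\hat{\iota} \circ \gamma$, namely $\hat{U}_<$. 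The only real obstacle in the argument is the bookkeeping in the second step: confirming that the data produced in $\tilde{M}$ by Proposition \ref{PropFindCurves} satisfies the hypotheses of Proposition \ref{PropBBound} in $\hat{M}$. This transfer works precisely because gap-length and the Lorentz transformation between parallel orthonormal frames are quantities defined intrinsically on $(M,g)$ and are therefore insensitive to the choice of ambient extension.
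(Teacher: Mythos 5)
Your proof is correct, and in fact it is precisely the route the paper acknowledges in a single sentence before its own proof of Corollary \ref{CorGeodSmooth}: Lemma \ref{LemLeavingCurveSmooth} to produce $\sigma$ and the chart $\tilde{U}$, Proposition \ref{PropFindCurves} applied in $\tilde{U}_<$ to $\tilde{\iota}\circ\gamma$ and $\tilde{\iota}\circ\sigma$ to manufacture the connecting curves, Proposition \ref{PropBBound} applied to the other extension $\hat{\iota}$ to transfer the limit point, and Proposition \ref{PropLeavingLipschitz} together with Remark \ref{RemIfInBoundary} to handle the containment in $\hat{U}_<$. Your observation that the gap-length and the relating Lorentz transformations are intrinsic to $(M,g)$ and hence insensitive to which ambient extension one works in is exactly the point that makes the transfer legitimate. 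However, the proof the paper actually writes out is different and deliberately lighter: since both extensions are smooth, it constructs the geodesics $\tilde{\sigma}_s(t) = \exp_{\tilde{\gamma}(-1)}\big(t\exp^{-1}_{\tilde{\gamma}(-1)}[\tilde{\gamma}(s)]\big)$ and $\hat{\sigma}_s(t) = \exp_{\hat{\gamma}(-1)}\big(t\exp^{-1}_{\hat{\gamma}(-1)}[\hat{\gamma}(s)]\big)$ simultaneously in convex neighbourhoods of $\tilde{p}$ and $\hat{p}$, and identifies $\id\circ\tilde{\sigma}_0$ with $\hat{\sigma}_0$ by noting that the isometry $\id$ intertwines the exponential maps for every $s<0$ and then passing to the limit $s\to 0$. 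The exponential-map argument buys simplicity and self-containedness (it never invokes the parallel-transport and gap-length machinery of Sections \ref{SecCoordAffine}--\ref{SecC0}), which is the stated purpose of the appendix; your argument buys nothing extra here but has the merit of showing that the general low-regularity toolkit already covers the smooth case, and it degrades more gracefully if one were to weaken the smoothness of the extensions toward the Lipschitz threshold where convex neighbourhoods are unavailable.
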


Note that if a geodesic in a smooth spacetime extends continuously, then it in fact extends smoothly. Thus the implications $\tilde{p} \in \rd^+ \tilde{\iota}(M)$ and $\hat{p} \in \rd^+\hat{\iota}(M)$ are immediate.

Corollary \ref{CorGeodSmooth} follows directly from Lemma \ref{LemLeavingCurveSmooth} applied to $\tilde{\iota} : M \hookrightarrow \tilde{M}$, then applying Proposition \ref{PropFindCurves} to $\tilde{\iota} : M \hookrightarrow \tilde{M}$ and the two curves $\gamma$ and $\sigma$, and finally using Proposition \ref{PropBBound} for the other extension $\hat{\iota} : M \hookrightarrow \hat{M}$ to infer that $\hat{\iota} \circ \gamma$ and $\hat{\iota} \circ \sigma$ have the same limit points. However, since the purpose of this section is to demonstrate simpler methods in higher regularity, let us show how the proof of Lemma \ref{LemLeavingCurveSmooth} essentially gives Corollary \ref{CorGeodSmooth}:

\begin{proof}
We follow the proof of Lemma \ref{LemLeavingCurveSmooth}. We choose convex neighbourhoods $\tilde{V} \subseteq \tilde{M}$ of $\tilde{p}$ and $\hat{V} \subseteq \hat{M}$ of $\hat{p}$ such that, after making $\gamma$ shorter if necessary, we have $\tilde{\gamma}(-1) \in \tilde{V}$ and $\hat{\gamma}(-1) \in \hat{V}$, where, as usual, we have set $\tilde{\gamma} := \tilde{\iota} \circ \gamma$ and $\hat{\gamma} := \hat{\iota} \circ \gamma$. The reader can easily check that the construction that follows can be simultaneously carried out in $\tilde{V}$ and in $\hat{V}$ to yield the timelike geodesics $\tilde{\sigma}_s(t) = \exp_{\tilde{\gamma}(-1)} \big( t \exp^{-1}_{\tilde{\gamma}(-1)} \big[ \tilde{\gamma}(s)\big]\big)$ and $\hat{\sigma}_s(t) = \exp_{\hat{\gamma}(-1)} \big( t \exp^{-1}_{\hat{\gamma}(-1)} \big[ \hat{\gamma}(s)\big]\big)$, which, for $s=0$ and $t \to 1$, have the limit points $\tilde{p}$ and $\hat{p}$, respectively. It remains to show that we indeed have $\id \circ \tilde{\sigma}_0 = \hat{\sigma}_0$, where as usual $\id = \hat{\iota} \circ \tilde{\iota}^{-1}|_{\tilde{\iota}(M)}$, so that we can set $\sigma := \tilde{\iota}^{-1} \circ \tilde{\sigma}_0|_{[0,1)}$.

This follows if we can show
\begin{equation}
\label{EqLimitExp}
\id_* \big( \underbrace{\exp^{-1}_{\tilde{\gamma}(-1)} \big[ \tilde{\gamma}(0)\big]}_{\in T_{\tilde{\gamma}(-1)} \tilde{\iota}(M)}\big) = \underbrace{\exp^{-1}_{\hat{\gamma}(-1)} \big[\hat{\gamma}(0)\big]}_{\in T_{\hat{\gamma}(-1)}\hat{\iota}(M)} \;.
\end{equation}
But since we have $\id_* \big( \exp^{-1}_{\tilde{\gamma}(-1)} \big[ \tilde{\gamma}(s)\big] \big) = \exp^{-1}_{\hat{\gamma}(-1)}\big[ \hat{\gamma}(s)\big] \big)$ for all $ s \in (-1,0)$ by virtue of $\id \big( \tilde{\gamma}(s)\big) = \hat{\gamma}(s)$ and $\id$ being an isometry and thus mapping geodesics to geodesics, \eqref{EqLimitExp} follows by continuity.
\end{proof}

\begin{proposition} \label{PropSmoothExt}
Let $(M,g)$ be a smooth time-oriented and globally hyperbolic Lorentzian manifold  and let $\tilde{\iota} : M \hookrightarrow \tilde{M}$ and $\hat{\iota} : M \hookrightarrow \hat{M}$ be two smooth extensions. Let $\gamma : [-1,0) \to M$ be a future directed and future inextendible causal $C^1$-curve in $M$ such that $\lim_{s \to 0} ( \tilde{\iota} \circ \gamma)(s) =: \tilde{p}$ and $\lim_{s \to 0}(\hat{\iota} \circ \gamma)(s) =: \hat{p}$ exist in $\tilde{M}$, $\hat{M}$, respectively. 

Then there exist future boundary charts $\tilde{\varphi} : \tilde{U} \to (-\tilde{\varepsilon}_0, \tilde{\varepsilon}_0) \times (-\tilde{\varepsilon}_1, \tilde{\varepsilon}_1)^d$ around $\tilde{p}$ and $\hat{\varphi} : \hat{U} \to (-\hat{\varepsilon}_0, \hat{\varepsilon}_0) \times (-\hat{\varepsilon}_1, \hat{\varepsilon}_1)^d$ around $\hat{p}$ and neighbourhoods $\tilde{W} \subseteq \tilde{U}$ of $\tilde{p}$ and $\hat{W} \subseteq \hat{U}$ of $\hat{p}$ such that $$\id|_{\tilde{W}_<} : \tilde{W}_< \to \hat{W}_<$$ is a diffeomorphism and extends as a smooth  diffeomorphism to $$\id|_{\tilde{W}_{\leq}} : \tilde{W}_\leq \to \hat{W}_\leq \;.$$
\end{proposition}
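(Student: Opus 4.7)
The plan is to first reduce to an anchoring smooth timelike geodesic using Corollary \ref{CorGeodSmooth}, and then construct a candidate smooth extension of $\id$ on a neighborhood of $\tilde{p}$ via normal coordinates based at an interior point of this geodesic, exploiting that isometries intertwine exponential maps: if $\id$ admits a smooth isometric extension, it must agree with $\exp_{\hat{p}_0} \circ (\id_*)_{\tilde{p}_0} \circ \exp_{\tilde{p}_0}^{-1}$ for any interior base point $\tilde{p}_0$.

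Applying Corollary \ref{CorGeodSmooth}, we replace $\gamma$ by a smooth future directed timelike geodesic $\sigma : [-1, 0) \to M$ still anchoring the two extensions; smoothness of $\tilde{g}$ and $\hat{g}$ then guarantees that $\tilde{\sigma} := \tilde{\iota} \circ \sigma$ and $\hat{\sigma} := \hat{\iota} \circ \sigma$ extend as smooth timelike geodesics up to $s = 0$ with endpoints $\tilde{p}$ and $\hat{p}$. Proposition \ref{PropBoundaryChart} and Remark \ref{RemPropBoundaryChart} supply future boundary charts $\tilde{\varphi} : \tilde{U} \to (-\tilde{\varepsilon}_0, \tilde{\varepsilon}_0) \times (-\tilde{\varepsilon}_1, \tilde{\varepsilon}_1)^d$ around $\tilde{p}$ and $\hat{\varphi} : \hat{U} \to (-\hat{\varepsilon}_0, \hat{\varepsilon}_0) \times (-\hat{\varepsilon}_1, \hat{\varepsilon}_1)^d$ around $\hat{p}$, chosen so that (after possible reparametrization) $\tilde{\sigma}(s) = (s, 0, \ldots, 0)$ in $\tilde{U}$ for $s \in (-\tilde{\varepsilon}_0, 0]$, and analogously for $\hat{\sigma}$ in $\hat{U}$. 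Fix $s_* \in (-\tilde{\varepsilon}_0, 0)$ close enough to $0$ that $\tilde{\sigma}|_{[s_*, 0]}$ contains no conjugate points of $\tilde{p}_0 := \tilde{\sigma}(s_*)$, set $\hat{p}_0 := \hat{\sigma}(s_*) = \id(\tilde{p}_0)$, and let $\tilde{v} := -s_* \dot{\tilde{\sigma}}(s_*) \in T_{\tilde{p}_0}\tilde{M}$ and $\hat{v} := (\id_*)_{\tilde{p}_0}\tilde{v} = -s_* \dot{\hat{\sigma}}(s_*) \in T_{\hat{p}_0}\hat{M}$, so that $\exp_{\tilde{p}_0}(\tilde{v}) = \tilde{p}$ and $\exp_{\hat{p}_0}(\hat{v}) = \hat{p}$. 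Then $\exp_{\tilde{p}_0}$ is a smooth diffeomorphism from an open neighborhood of $\tilde{v}$ onto an open neighborhood $\tilde{W} \subseteq \tilde{U}$ of $\tilde{p}$, and likewise $\exp_{\hat{p}_0}$ maps diffeomorphically onto an open neighborhood $\hat{W} \subseteq \hat{U}$ of $\hat{p}$. Define
\[
\Phi \;:=\; \exp_{\hat{p}_0} \circ \, (\id_*)_{\tilde{p}_0} \circ \exp_{\tilde{p}_0}^{-1} \;\colon\; \tilde{W} \longrightarrow \hat{W},
\]
which is a smooth diffeomorphism sending $\tilde{p}$ to $\hat{p}$.

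The crux is to show that $\Phi = \id$ on $\tilde{W}_< := \tilde{W} \cap \tilde{\iota}(M)$, after possibly shrinking $\tilde{W}$. For $\tilde{r} \in \tilde{W}_<$, consider the radial geodesic $\tilde{c}_{\tilde{r}}(t) := \exp_{\tilde{p}_0}(t\,\exp_{\tilde{p}_0}^{-1}(\tilde{r}))$, $t \in [0, 1]$, from $\tilde{p}_0$ to $\tilde{r}$. Provided $\tilde{c}_{\tilde{r}}$ stays in $\tilde{\iota}(M)$, the pushforward $\hat{\iota} \circ \tilde{\iota}^{-1} \circ \tilde{c}_{\tilde{r}}$ is a geodesic in $\hat{M}$ starting at $\hat{p}_0$ with initial velocity $(\id_*)_{\tilde{p}_0}\exp_{\tilde{p}_0}^{-1}(\tilde{r})$, so its endpoint equals both $\id(\tilde{r})$ and $\Phi(\tilde{r})$, yielding $\id(\tilde{r}) = \Phi(\tilde{r})$. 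The main obstacle is therefore to justify that $\tilde{c}_{\tilde{r}}$ stays within $\tilde{\iota}(M)$ for $\tilde{r}$ sufficiently close to $\tilde{p}$. By smooth dependence of the geodesic flow on initial data, $\tilde{c}_{\tilde{r}}$ converges uniformly on $[0, 1]$ to the reparametrization $t \mapsto \tilde{\sigma}(s_*(1-t))$ of $\tilde{\sigma}|_{[s_*, 0]}$ as $\tilde{r} \to \tilde{p}$; this limit is timelike and lies in $\tilde{U}_\leq$ with only its endpoint $\tilde{p}$ on $\mathrm{graph}\,f$. Hence for $\tilde{r}$ in a sufficiently small neighborhood of $\tilde{p}$, the curve $\tilde{c}_{\tilde{r}}$ is itself timelike with both endpoints strictly below $\mathrm{graph}\,f$; the achronality of $\mathrm{graph}\,f$ (Proposition \ref{PropBoundaryChart}) then rules out any crossing, so $\tilde{c}_{\tilde{r}}([0, 1]) \subseteq \tilde{U}_< \subseteq \tilde{\iota}(M)$, as required. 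A final connectedness argument as at the end of the proof of Proposition \ref{PropC1Aux} yields $\Phi(\tilde{W}_\leq) = \hat{W}_\leq$, completing the proof.
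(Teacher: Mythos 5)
Your proposal is correct and follows essentially the same route as the paper: reduce to an anchoring timelike geodesic via Corollary \ref{CorGeodSmooth} and then realise the extension of $\id$ explicitly as $\exp_{\hat{p}_0} \circ \id_* \circ \exp_{\tilde{p}_0}^{-1}$ based at an interior point of that geodesic (the paper bases at $\tilde{\sigma}(-1)$ inside a convex neighbourhood and restricts to a cone of radial directions, while you base at a point $\tilde{\sigma}(s_*)$ near $\tilde{p}$ and take a full neighbourhood of $\tilde{v}$). Your explicit justification that the radial geodesics $\tilde{c}_{\tilde{r}}$ remain in $\tilde{\iota}(M)$ — a point the paper's proof leaves largely implicit — is sound, though to fully rule out the degenerate case where $\tilde{c}_{\tilde{r}}$ merely touches $\mathrm{graph}\, f$ tangentially (rather than crossing it twice) you should add the short concatenation-with-a-vertical-$\rd_0$-segment argument used in the proof of Corollary \ref{CorFindCurves}, which again contradicts achronality of $\mathrm{graph}\, f$.
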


As before we have set $\tilde{W}_< = \tilde{W} \cap \tilde{U}_< = \tilde{W} \cap \{(x_0, \ux) \in \tilde{U} \; | \; x_0 < f(\ux)\}$, where $f$ is the graphing function of the boundary chart, and similarly for $\tilde{W}_\leq$, etc.

\begin{proof}
By Corollary \ref{CorGeodSmooth} there is a future directed timelike geodesic $\sigma : [-1,0) \to M$ such that $\lim_{s \to 0} \tilde{\sigma}(s) = \tilde{p}$ and $\lim_{s \to 0} \hat{\sigma}(s) = \hat{p}$, where $\tilde{\sigma} := \tilde{\iota} \circ \sigma$ and $\hat{\sigma} := \hat{\iota} \circ \sigma$. Given boundary charts $\tilde{\varphi} : \tilde{U} \to (-\tilde{\varepsilon}_0, \tilde{\varepsilon}_0) \times (- \tilde{\varepsilon}_1, \tilde{\varepsilon}_1)^d$ around $\tilde{p}$ and $\hat{\varphi} : \hat{U} \to (-\hat{\varepsilon}_0, \hat{\varepsilon}_0) \times (- \hat{\varepsilon}_1, \hat{\varepsilon}_1)^d$ around $\hat{p}$ we choose convex neighbourhoods $\tilde{V} \subseteq \tilde{U}$ of $\tilde{p}$ and $\hat{V} \subseteq \hat{U}$ of $\hat{p}$. After making the curves slightly shorter we can without loss of generality assume that $\tilde{\gamma}$, $ \tilde{\sigma}$ lie in $\tilde{V} \cap \tilde{U}_<$ and $\hat{\gamma}$, $\hat{\sigma}$ lie in $\hat{V} \cap \hat{U}_<$. 

Let $\tilde{X} := \exp^{-1}_{\tilde{\sigma}(-1)} (\tilde{p}) \in T_{\tilde{\sigma}(-1)}\tilde{V}$ and $\hat{X} := \exp^{-1}_{\hat{\sigma}(-1)}(\hat{p}) \in T_{\hat{\sigma}(-1)}\hat{V}$. Since $\id_* \tilde{X} = \hat{X}$ and since the domains of the exponential maps are open there is a small neighbourhood $\tilde{A} \subseteq \exp^{-1}_{\tilde{\sigma}(-1)}(\tilde{V}) \subseteq T_{\tilde{\sigma}(-1)}\tilde{V}$ of $\tilde{X}$ such that $\id_*(\tilde{A}) \subseteq \exp^{-1}_{\hat{\sigma}(-1)}(\hat{V}) \subseteq T_{\hat{\sigma}(-1)}\hat{V}$.  We define the open cone $$\tilde{C} := \{ \lambda \tilde{Y} \in \exp^{-1}_{\tilde{\sigma}(-1)}(\tilde{V}) \; | \; \tilde{Y} \in \tilde{A}, \; \lambda \in (0,1] \} \;.$$
Since the domains are star-shaped we moreover have $$\hat{C} := \id_*(\tilde{C}) \subseteq \exp^{-1}_{\hat{\sigma}(-1)}(\hat{V}) \;.$$
We set $\tilde{W} := \exp_{\tilde{\sigma}(-1)}(\tilde{C}) \subseteq \tilde{V}$ and $\hat{W} := \exp_{\hat{\sigma}(-1)}(\hat{C}) \subseteq \hat{V}$, which are open neighbourhoods of $\tilde{p}$ and $\hat{p}$, respectively --- see also Figure \ref{FigBoundaryExp}. Since $\id$ is an isometry we have
\begin{equation}
\label{EqExtSm}
\id|_{\tilde{W}_<} = \exp_{\hat{\sigma}(-1)} \circ \id_* \circ \exp^{-1}_{\tilde{\sigma}(-1)}|_{\tilde{W}_<} \;.
\end{equation}
\begin{figure}[h]
\centering
 \def\svgwidth{10cm}
\begingroup%
  \makeatletter%
  \providecommand\color[2][]{%
    \errmessage{(Inkscape) Color is used for the text in Inkscape, but the package 'color.sty' is not loaded}%
    \renewcommand\color[2][]{}%
  }%
  \providecommand\transparent[1]{%
    \errmessage{(Inkscape) Transparency is used (non-zero) for the text in Inkscape, but the package 'transparent.sty' is not loaded}%
    \renewcommand\transparent[1]{}%
  }%
  \providecommand\rotatebox[2]{#2}%
  \newcommand*\fsize{\dimexpr\f@size pt\relax}%
  \newcommand*\lineheight[1]{\fontsize{\fsize}{#1\fsize}\selectfont}%
  \ifx\svgwidth\undefined%
    \setlength{\unitlength}{577.86598025bp}%
    \ifx\svgscale\undefined%
      \relax%
    \else%
      \setlength{\unitlength}{\unitlength * \real{\svgscale}}%
    \fi%
  \else%
    \setlength{\unitlength}{\svgwidth}%
  \fi%
  \global\let\svgwidth\undefined%
  \global\let\svgscale\undefined%
  \makeatother%
  \begin{picture}(1,0.33068327)%
    \lineheight{1}%
    \setlength\tabcolsep{0pt}%
    \put(0,0){\includegraphics[width=\unitlength,page=1]{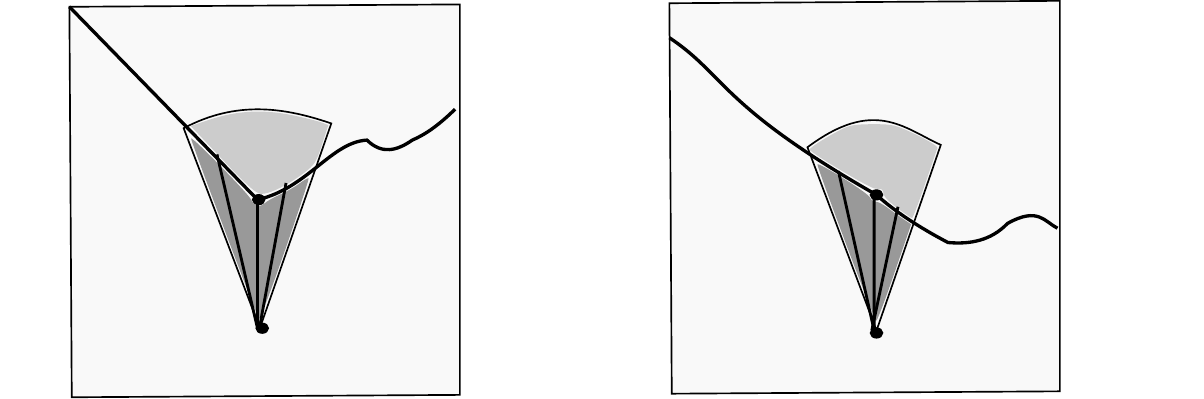}}%
    \put(-0.00367732,0.16880267){\color[rgb]{0,0,0}\makebox(0,0)[lt]{\lineheight{0}\smash{\begin{tabular}[t]{l}$\tilde{U}$\end{tabular}}}}%
    \put(0.90409618,0.17078037){\color[rgb]{0,0,0}\makebox(0,0)[lt]{\lineheight{0}\smash{\begin{tabular}[t]{l}$\hat{U}$\end{tabular}}}}%
    \put(0.23958224,0.25977776){\color[rgb]{0,0,0}\makebox(0,0)[lt]{\lineheight{0}\smash{\begin{tabular}[t]{l}$\tilde{W}$\end{tabular}}}}%
    \put(0.7379677,0.25384461){\color[rgb]{0,0,0}\makebox(0,0)[lt]{\lineheight{0}\smash{\begin{tabular}[t]{l}$\hat{W}$\end{tabular}}}}%
    \put(0.25144858,0.10353787){\color[rgb]{0,0,0}\makebox(0,0)[lt]{\lineheight{0}\smash{\begin{tabular}[t]{l}$\tilde{W}_{<}$\end{tabular}}}}%
    \put(0.75378948,0.07782753){\color[rgb]{0,0,0}\makebox(0,0)[lt]{\lineheight{0}\smash{\begin{tabular}[t]{l}$\hat{W}_{<}$\end{tabular}}}}%
    \put(0,0){\includegraphics[width=\unitlength,page=2]{LocalExp.pdf}}%
    \put(0.42153251,0.21231249){\color[rgb]{0,0,0}\makebox(0,0)[lt]{\lineheight{0}\smash{\begin{tabular}[t]{l}$\id|_{\tilde{W}_{<}}$\end{tabular}}}}%
  \end{picture}%
\endgroup%

      \caption{Constructing the local boundary identification} \label{FigBoundaryExp}
\end{figure}
And since $\exp_{\tilde{\sigma}(-1)}|_{\tilde{C}}$ and $\exp_{\hat{\sigma}(-1)}|_{\hat{C}}$ are smooth diffeomorphisms, it is immediate that $\id|_{\tilde{W}_<}$ extends to a smooth diffeomorphism $\id|_{\tilde{W}_\leq} : \tilde{W}_\leq \to \hat{W}_\leq$.
\end{proof}

Note that here in the smooth category we constructed the extension map explicitly in terms of exponential maps.

If the extensions (and possible also $(M,g)$ itself) are only $C^k$-regular with $k \geq 2$, then we first point the reader to Remark \ref{RemLemRougher} to see that Lemma \ref{LemLeavingCurveSmooth}, and by extension also Corollary \ref{CorGeodSmooth}, remain valid (and produce a timelike geodesic of regularity $C^{k+1}$). The proof of Proposition \ref{PropSmoothExt} also goes through up to \eqref{EqExtSm}. Here we note that the exponential maps are only of regularity $C^{k-1}$. Thus, since $k \geq 2$ we still obtain a $C^1$-extension -- and additional regularity, if present in the extensions, can again be gained from Lemma \ref{LemBoostReg}.

\subsection{Unique extensions of holes in inextendible spacetimes} \label{SecA2}

We now apply this method of probing sufficiently smooth extensions with geodesics to a problem of different flavour. The question, posed to the author by JB Manchak, is the following: Say we have a Lorentzian manifold which is $C^{\infty}$-inextendible. If we remove a point from this manifold, is the only possible extension the one that reinserts this point? -- or are new ways of extending the punctured Lorentzian manifold opening up?
We show the following

\begin{theorem} \label{ThmExtHoles}
Let $(M,g)$ be a connected time-oriented Lorentzian manifold with $g \in C^\infty$ that is $C^\infty$-inextendible. Let $p \in M$. Then every $C^\infty$-extension of $M \setminus \{p\}$ is smoothly isometric to $(M,g)$.
\end{theorem}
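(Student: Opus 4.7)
The plan is to extend $\iota$ to a smooth isometric embedding $\bar{\iota}: M \hookrightarrow \tilde{M}$ sending $p$ to a uniquely determined boundary point $\tilde p \in \partial\iota(M\setminus\{p\})$, and then to deduce from the $C^\infty$-inextendibility of $(M,g)$ that $\bar\iota$ is surjective. Once $\bar\iota$ is in hand, surjectivity is automatic: $\bar\iota(M)$ is open in $\tilde M$ (being the image of a smooth equidimensional isometric embedding), and any proper inclusion would exhibit $\bar\iota : M \hookrightarrow \tilde M$ as a nontrivial $C^\infty$-extension of $M$, contradicting the hypothesis. Hence $\bar\iota(M) = \tilde M$, and $\bar\iota$ is a smooth isometry.

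The local construction of $\bar\iota$ mirrors the exponential-map method of Proposition \ref{PropSmoothExt} applied at the single removed point. Fix $\tilde p \in \partial\iota(M\setminus\{p\})$, which exists by non-triviality of the extension. The key preliminary step is to show that $\tilde p$ is the limit of some sequence $\iota(q_n)$ with $q_n \to p$ in $M$. If no such sequence exists, let $A \subseteq \partial\iota(M\setminus\{p\})$ be the set of boundary points that \emph{are} such limits (a straightforward diagonal argument shows $A$ is closed in $\tilde M$); then one can pick neighbourhoods $U \ni p$ in $M$ and $\tilde U \ni \tilde p$ in $\tilde M$ with $\tilde U \cap A = \emptyset$ and $\iota(U\setminus\{p\}) \cap \tilde U = \emptyset$. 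The quotient $N := (M \sqcup \tilde U)/\!\!\sim$ identifying $q \sim \iota(q)$ for $q \in \iota^{-1}(\tilde U)$ is then a connected Hausdorff smooth Lorentzian manifold strictly extending $M$ --- a nontrivial $C^\infty$-extension --- contradicting inextendibility.

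Given such a sequence, pick $n$ large enough that $q := q_n$ lies in a fixed convex normal neighbourhood $V$ of $p$ in $M$, and set $\tilde q := \iota(q)$. Choose a totally normal neighbourhood $\tilde V \subseteq \tilde M$ of $\tilde p$ containing $\tilde q$ and small enough that the open set $\iota_*^{-1}\!\bigl(\exp_{\tilde q}^{-1}(\tilde V)\bigr) \subseteq T_qM$ lies in a ball on which $\exp_q$ is a diffeomorphism; this is possible since the injectivity radius at $q$ stays bounded below for $q$ near $p$, while $\exp_{\tilde q}^{-1}(\tilde V)$ can be made arbitrarily small by shrinking $\tilde V$. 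Define
\[
\phi : \tilde V \to M, \qquad \phi\!\bigl(\exp_{\tilde q}(X)\bigr) := \exp_q\!\bigl(\iota_*^{-1} X\bigr).
\]
Since $\iota$ is a local isometry sending $g$-geodesics through $q$ to $\tilde g$-geodesics through $\tilde q$ with $\iota_*$-transformed initial data, $\phi$ agrees with $\iota^{-1}$ on the dense open set $\tilde V \cap \iota(M\setminus\{p\})$; hence $\phi$ is a smooth diffeomorphism of $\tilde V$ onto an open neighbourhood $W \subseteq M$, and $\phi^*g = \tilde g$ throughout $\tilde V$ by continuity. The identification $\phi(\tilde p) = p$ follows by the usual argument: if $r := \phi(\tilde p) \in M\setminus\{p\}$, then for a sequence $\tilde y_n \to \tilde p$ in $\tilde V \cap \iota(M\setminus\{p\})$ one would have $\tilde y_n = \iota(\phi(\tilde y_n)) \to \iota(r) \in \iota(M\setminus\{p\})$, contradicting $\tilde p \in \partial$. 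Thus $\bar\iota := \phi^{-1} : W \to \tilde V$ extends $\iota|_{W\setminus\{p\}}$ to a smooth isometric embedding on a neighbourhood of $p$ with $\bar\iota(p) = \tilde p$; uniqueness of this extension forces $\partial\iota(M\setminus\{p\})$ to be the single point $\tilde p$, since two distinct boundary points $\tilde p \neq \tilde p'$ would yield $\iota(q) \in \tilde V \cap \tilde V'$ for all $q$ in a common punctured neighbourhood of $p$ in $M$, contradicting Hausdorffness of $\tilde M$ upon choosing $\tilde V, \tilde V'$ disjoint.

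The main obstacle is the preliminary step of accessing $\tilde p$ by a sequence converging to $p$ in $M$, ruling out boundary points that would arise from sequences escaping to infinity in $M$; this is precisely the place at which the $C^\infty$-inextendibility hypothesis on $M$ genuinely enters, via the gluing construction of the auxiliary extension $N$. The remaining exponential-map and normal-coordinate arguments are entirely in the spirit of Appendix \ref{SecA1}.
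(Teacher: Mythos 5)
Your proposal is correct, and its overall architecture coincides with the paper's: use the $C^\infty$-inextendibility of $(M,g)$ together with an auxiliary gluing construction to show that the boundary of $\iota(M\setminus\{p\})$ is reached ``through $p$'', then extend $\iota$ across $p$ via exponential maps, and invoke inextendibility once more for surjectivity. The difference lies in how the first step is implemented. The paper probes the boundary with a timelike geodesic (Lemma \ref{LemSmoothG}), shows that this geodesic must converge to $p$ in $M$, and obtains the contradiction by excising a compact annulus around $p$ from $\tilde{M}$ and gluing in a fresh copy of $B_\varepsilon(p)$. You instead work with arbitrary sequences, introduce the closed set $A$ of boundary points accessible from $p$, and glue a neighbourhood $\tilde{U}$ of a putative inaccessible boundary point directly onto $M$; this bypasses Lemma \ref{LemSmoothG} and yields the stronger intermediate statement that \emph{every} boundary point lies in $A$, from which you extract the (correct, and not recorded in the paper) byproduct that $\partial\iota(M\setminus\{p\})$ is a single point. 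Your Hausdorffness claim for $N$ does hold --- the only potentially non-separable pairs are $(p,\tilde{r})$ with $\tilde{r}\in\tilde{U}\cap A$, which are excluded by construction, and the extra condition $\iota(U\setminus\{p\})\cap\tilde{U}=\emptyset$ is in fact redundant --- but it deserves a sentence of justification rather than a parenthesis. The exponential-map step is the paper's Step 2 with the direction of the map reversed; the one point you should make explicit there is that the identity $\phi=\iota^{-1}$ on $\tilde{V}\cap\iota(M\setminus\{p\})$, and the density of that set in $\tilde{V}$, are only immediate for points whose radial geodesic from $\tilde{q}$ corresponds to a vector off the ray $\{\lambda\,\exp_q^{-1}(p)\;|\;\lambda\geq 1\}\subseteq T_qM$, and on the image of that ray one must argue by continuity from its dense complement --- exactly the step the paper carries out at \eqref{Commute}. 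With that supplement the argument is complete.
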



We need the following elementary lemma which is a strengthening of Lemma 2.17 in \cite{Sbie15} for sufficiently smooth extensions.

\begin{lemma} \label{LemSmoothG}
Let $(M,g)$ be a connected Lorentzian manifold with $g \in C^\infty$  and $\iota : M \hookrightarrow \tilde{M}$ be a $C^\infty$-extension of $M$. Then there exists a timelike geodesic $\gamma : [-1,0) \to M$ with $\lim_{s \to 0} (\iota \circ \gamma)(s) \in \partial \iota(M)$. 
\end{lemma}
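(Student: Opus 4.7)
The plan is to pick any $\tilde{p} \in \rd \iota(M)$ (which is non-empty by the definition of an extension), work inside a convex normal neighbourhood $\tilde{U}$ of $\tilde{p}$ in $\tilde{M}$, and produce a timelike geodesic segment $\tilde{\beta} : [0,1] \to \tilde{U}$ with $\tilde{\beta}(0) \in \iota(M)$ and $\tilde{\beta}(1) \notin \iota(M)$. Once such a $\tilde{\beta}$ is at hand, openness of $\iota(M)$ in $\tilde{M}$ implies that
$$
s^\ast := \inf\{s \in [0,1] : \tilde{\beta}(s) \notin \iota(M)\} \in (0,1]
$$
satisfies $\tilde{\beta}(s^\ast) \in \rd \iota(M)$, while $\tilde{\beta}|_{[0, s^\ast)}$ is a timelike geodesic whose image lies in $\iota(M)$. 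Pulling this segment back via $\iota^{-1}$ and reparametrising by, e.g., $\gamma(s) := \iota^{-1}\big(\tilde{\beta}(s^\ast(1+s))\big)$ for $s \in [-1, 0)$ then produces the required timelike geodesic in $M$ (the geodesic property being inherited through the smooth isometric embedding $\iota$).

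To construct $\tilde{\beta}$ I would distinguish two cases according to whether $\iota(M)\cap \tilde{U}$ meets the chronological double cone of $\tilde{p}$ inside $\tilde{U}$.

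\emph{Case 1}: there exists $\tilde{q} \in \iota(M) \cap \big( I^+(\tilde{p}, \tilde{U}) \cup I^-(\tilde{p}, \tilde{U}) \big)$. Since $\tilde{U}$ is a convex normal neighbourhood, the unique geodesic in $\tilde{U}$ from $\tilde{q}$ to $\tilde{p}$ has tangent $\exp_{\tilde{q}}^{-1}(\tilde{p})$ at $\tilde{q}$, which is timelike; hence the whole geodesic is timelike, starts at $\tilde{q} \in \iota(M)$ and ends at $\tilde{p} \notin \iota(M)$, so serves as $\tilde{\beta}$.

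\emph{Case 2}: $\iota(M) \cap \big(I^+(\tilde{p}, \tilde{U}) \cup I^-(\tilde{p}, \tilde{U})\big) = \emptyset$. Fix some $\tilde{r} \in I^+(\tilde{p}, \tilde{U})$; by the case hypothesis, $\tilde{r} \notin \iota(M)$. Since $\exp_{\tilde{r}}^{-1}(\tilde{p})$ is timelike, continuity of $\exp_{\tilde{r}}^{-1}$ yields a neighbourhood $\tilde{V}\subseteq \tilde{U}$ of $\tilde{p}$ on which $\exp_{\tilde{r}}^{-1}$ takes timelike values. Because $\tilde{p}\in\rd\iota(M)$, one can pick $\tilde{q}\in\iota(M)\cap \tilde{V}$; the geodesic in $\tilde{U}$ from $\tilde{q}$ to $\tilde{r}$ is then timelike, starts in $\iota(M)$ and ends outside, so again provides the desired $\tilde{\beta}$.

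The main obstacle is Case 2. One cannot rule out a priori that $\iota(M)\cap\tilde{U}$ sits entirely in the spacelike wedge of $\tilde{p}$ (this happens, for example, when the boundary near $\tilde{p}$ is a timelike hypersurface), so no short timelike ray issued from $\tilde{p}$ itself will radially re-enter $\iota(M)$. The resolution sketched above is to use a point $\tilde{r}$ of the timelike cone of $\tilde{p}$ as the \emph{target} of the connecting geodesic rather than as its starting point, and then to appeal to continuity of $\exp_{\tilde{r}}^{-1}$ near $\tilde{p}$ to ensure that the connecting geodesic from a nearby $\tilde{q}\in\iota(M)$ to $\tilde{r}$ is still timelike. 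Everything else---the first-exit argument identifying $s^\ast$, the verification that $\tilde{\beta}(s^\ast)\in\rd\iota(M)$, and the smooth reparametrisation back to $M$---is routine.
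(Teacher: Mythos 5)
Your proposal is correct and follows essentially the same route as the paper's proof: both work in a convex neighbourhood of a boundary point, split into cases according to whether the chronological cone there meets $\iota(M)$, and in the harder case use openness of the chronological relation (your continuity of $\exp_{\tilde{r}}^{-1}$ near $\tilde{p}$ is the paper's openness of $I^{\pm}(\tilde{r},\tilde{U})$) to produce a timelike geodesic segment entering $\iota(M)$ and leaving it, whose first exit point yields the desired boundary limit. The first-exit and reparametrisation steps you describe are exactly what the paper leaves implicit.
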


\begin{proof}
By definition of a $C^\infty$-extension there exists a $\tilde{q} \in \partial \iota(M) \subseteq \tilde{M}$. Since $\tilde{g} \in C^\infty$ we can choose a convex neighbourhood $\tilde{U} \subseteq \tilde{M}$ of $\tilde{q}$. Consider a point $\tilde{r} \in I^-(\tilde{q}, \tilde{U})$. If $\tilde{r} \in \iota(M)$, then consider the timelike geodesic starting at $\tilde{r}$ and ending at $\tilde{q} \notin \iota(M)$. This gives rise to a timelike geodesic as in the lemma. If $\tilde{r} \notin \iota(M)$, then consider $I^+(\tilde{r}, \tilde{U})$ which is an open neighbourhood of $\tilde{q}$ and thus contains a point $\tilde{u} \in \iota(M)$. Now apply the same argument as before.
\end{proof}

Let us note that Lemma \ref{LemSmoothG} also holds if the extension is only continuous (\cite{GalLinSbi17}, \cite{MinSuhr19}). The proof, however, is more difficult.

\begin{proof}[Proof of Theorem \ref{ThmExtHoles}:]
Let $\iota : M \setminus \{p\} \hookrightarrow \tilde{M}$ be a $C^\infty$-extension of $M \setminus \{p\}$ and let $\gamma : [-1, 0) \to M \setminus \{p\}$ be a timelike geodesic with $\lim_{s \to 0} (\iota \circ \gamma)(s) = \tilde{q} \in \partial \iota(M)$ as ensured by the above lemma. Without loss of generality assume that $\gamma$ is future directed.
\vspace*{3mm} 

\textbf{Step 1:} We show that $\lim_{s \to 0} \gamma (s) = p$, where we have made the trivial identification of $\gamma$ as a curve in $M \setminus \{p\}$ with $\gamma$ as a curve in $M$.
\vspace*{3mm}

We prove this by contradiction. So assume $\lim_{s \to 0} \gamma(s) \neq p$.
\vspace*{2mm}

\textbf{Step 1.1:} We show that $p$ is then also not an accumulation point of $\gamma : [-1,0) \to M \setminus \{p\}$ considered as a curve in $M$.
\vspace*{2mm}

If it were an accumulation point then there exists a sequence $s_n \in [-1,0)$ with $s_n \to 0$ for $n \to \infty$ with $\lim_{n \to \infty} \gamma(s_n) = p$. We now choose a globally hyperbolic near-Minkowskian neighbourhood $U \subseteq M$ of $p$ and let $V \subseteq U$ be a neighbourhood of $p$ with compact closure in $U$. Since $\gamma$ is future inextendible in $M \setminus \{p\}$ and $\lim_{s \to 0} \gamma(s) \neq p$, for every $s_n$ there must exist a $0> \tilde{s}_n > s_n$ with $\gamma(\tilde{s}_n) \in U \setminus V$. Moreover, it is easy to arrange for all $\gamma(\tilde{s}_n)$ to lie in a compact set by choosing $\tilde{s}_n$ such that $\gamma(\tilde{s}_n)$ lies uniformly away from the boundary of $U$. Thus there exists a point $r \in U \setminus V$ with $\gamma(\tilde{s}_n) \to r$ for $n \to \infty$, after possibly choosing a subsequence. 
Hence, we have shown that there exists another accumulation point $r \in M\setminus \{p\}$ of $\gamma$.  Since $\tilde{\gamma} = \iota \circ \gamma$ has a future limit point $\tilde{q} \in \partial\iota(M \setminus \{p\}) \subseteq \tilde{M}$, any other accumulation point in $\tilde{M}$ must coincide with $\tilde{q}$. However, we have $\iota(r) \in \iota(M \setminus \{p\}) \subseteq \tilde{M}$, which is a contradiction. Thus we conclude that $p$ is not an accumulation point of $\gamma$.
\vspace*{2mm}

\textbf{Step 1.2:} We show that we can then construct a $C^\infty$-extension of $M$, thus obtaining a contradiction.
\vspace*{2mm}

Since $p$ is not an accumulation point of $\gamma$, we can choose a small coordinate ball $B_\varepsilon(p) \subseteq M$ centred at $p$ such that $\gamma$ does not contain any points in $B_\varepsilon(p)$. Consider now the compact annulus region $K:= \overline{B_{\frac{2}{3}\varepsilon}(p) \setminus B_{\frac{1}{3}\varepsilon}(p)}$ and set $\tilde{K} := \iota(K)$. Then $\tilde{M}_1 := \tilde{M} \setminus \tilde{K}$ is a possibly disconnected time-oriented Lorentzian manifold.\footnote{If it is connected, then the extension $\tilde{M}$ would have crept into the space that we opened up by removing $p$. Otherwise it is disconnected. This construction is to deal with the first case, since in the second case it is easy to construct an extension of $M$ just by filling in $p$ again.} In case it is disconnected we only keep the connected component containing $\iota \circ \gamma$ and call it again $\tilde{M}_1$. We now take a \emph{new} copy of $(B_\varepsilon(p), g|_{B_\varepsilon(p)})$, for example $B_\varepsilon(p) \times \{\emptyset\}$, and glue it back in along $B_\varepsilon(p) \setminus \overline{B_{\frac{2}{3}\varepsilon}(p)}$ to obtain a connected time-oriented Lorentzian manifold $\tilde{M}_2$. More precisely, we take the set-theoretic disjoint union $\tilde{M}_1 \sqcup (B_\varepsilon(p) \times \{\emptyset\})$ and consider the equivalence relation generated by $$\tilde{M}_1 \supseteq B_\varepsilon(p) \setminus \overline{B_{\frac{2}{3}\varepsilon}(p)} \ni x \sim (x , \emptyset) \in  (B_\varepsilon(p) \setminus \overline{B_{\frac{2}{3}\varepsilon}(p)}) \times \{\emptyset\} \subseteq B_\varepsilon(p) \times \{\emptyset\} \;.$$
Then topologically we have $\tilde{M}_2 := [\tilde{M}_1 \sqcup (B_\varepsilon(p) \times \{\emptyset\})]/_{ \sim}$ and it is straightforward to equip $\tilde{M}_2$ canonically with a smooth differentiable structure and smooth Lorentzian metric.\footnote{For more details on such gluing constructions we refer the interested reader for example to Section 3.3 in \cite{Sbie13a} or to Chapter 16 in \cite{Ring}.}
Clearly we can construct an isometric embedding $\iota_2 : M \hookrightarrow \tilde{M}_2$ from the old one and our canonical glueing construction. It remains to show that $\tilde{M}_2$ contains a point that does not lie in $\iota_2(M)$, i.e., that $\tilde{M}_2$ is not isometric to $M$. However, this follows easily since $\tilde{q} = \lim_{s \to 0} (\iota \circ \gamma)(s) \in \tilde{M}_2 $ and $\tilde{q} \notin \iota(M \setminus \{p\})$. And in particular the point $p$ was newly filled in. Thus $\tilde{q} \notin \iota_2(M)$. This contradicts the $C^\infty$-inextendibility of $M$. 

Thus, the claim in Step 1 is proven. It shows that any $C^\infty$-extension of $M \setminus \{p\}$ has to extend through the region that was previously close to $p$.
\vspace*{3mm}

\textbf{Step 2:} We show that $\tilde{M}$ is isometric to $M$.
\vspace*{3mm}

Let $\tilde{U} \subseteq \tilde{M}$ be a convex neighbourhood of $\tilde{q}$ and $U \subseteq M $ be a convex neighbourhood of $p$. Consider $\delta <0$ such that $\gamma(\delta) =: r \in U$ and $\iota(r) =: \tilde{r} \in \tilde{U}$. Let $w_0 \in T_rU$ be such that $p = \exp_r(w_0)$. Since $\iota$ is an isometry, it follows that $\tilde{q} = \exp_{\tilde{r}}(d\iota|_r (w_0))$. We can now choose a star-shaped (with respect to $0$) open set $W \subseteq T_rU$ containing $w_0$ such that $\exp_r(W)$ is a normal neighbourhood of $r$ and such that $\exp_{\tilde{r}}(d\iota|_r(W))$ is a normal neighbourhood of $\tilde{r}$. 

Now note that for all $w \in W \setminus \{\lambda \cdot w_0 \; | \; \lambda \geq 1\}$, the geodesics $[0,1] \ni s \mapsto \exp_r(s \cdot w)$ are contained in $M \setminus \{p\}$. Moreover, since  $\iota$ is an isometry, it commutes with the exponential map, thus obtaining
\begin{equation*}
\iota(\exp_r(w)) = \exp_{\tilde{r}}(d\iota|_r (w)) \qquad \textnormal{ for all } w \in W \setminus \{\lambda \cdot w_0 \; | \; \lambda \geq 1\} \;.
\end{equation*}
By the continuity of $\iota : M \setminus \{p\} \hookrightarrow \tilde{M}$, we thus obtain
\begin{equation} \label{Commute}
\iota(\exp_r(w)) = \exp_{\tilde{r}}(d\iota|_r (w)) \qquad \textnormal{ for all } w \in W \setminus \{ w_0\} \;.
\end{equation}
We can now define an embedding $\iota_1 : M \hookrightarrow \tilde{M}$ by
\begin{equation*}
\iota_1(q) = \begin{cases} \iota(q) \qquad &\textnormal{ for } q \in M \setminus \{p\} \\
\exp_{\tilde{r}}(d\iota|_r(w)) \qquad &\textnormal{ for } q \in \exp_r(W), q = \exp_r(w) \;. \end{cases}
\end{equation*}
By \eqref{Commute} this is well-defined and since $g$, $\tilde{g}$ are smooth, $\iota_1$ is also smooth. By continuity, it is also an isometry. Thus we have shown that $\iota_1 : M \hookrightarrow \tilde{M}$ is an isometric embedding. Since $M$ is inextendible, they have to be isometric.
This concludes the proof.


\end{proof}

Again, one can generalise the theorem to regularities $C^k$ with $k \geq 2$. We leave the details to the reader.

\bibliographystyle{acm}
\bibliography{Bibly}

\begin{thebibliography}{10}

\bibitem{BerSa07}
{\sc Bernal, A.~N., and S{\'{a}}nchez, M.}
\newblock {Globally hyperbolic spacetimes can be defined as `causal' instead of
  `strongly causal'}.
\newblock {\em Classical and Quantum Gravity 24}, 3 (jan 2007), 745--749.

\bibitem{Chrus11}
{\sc Chru\'sciel, P.}
\newblock {On maximal globally hyperbolic vacuum space-times}.
\newblock {\em J. Fixed Point Theory Appl. 14\/} (2013), 325--353.

\bibitem{ChrusGra12}
{\sc Chru\'sciel, P., and Grant, J.}
\newblock {On Lorentzian causality with continuous metrics}.
\newblock {\em Class. Quantum Grav. 29\/} (2012).

\bibitem{Chrus10}
{\sc Chrusciel, P.~T.}
\newblock {Conformal boundary extensions of Lorentzian manifolds}.
\newblock {\em Journal of Differential Geometry 84}, 1 (2010), 19--44.

\bibitem{Chrus11a}
{\sc Chru{\'s}ciel, P.~T.}
\newblock Elements of causality theory.
\newblock {\em arXiv:1110.6706\/} (2011).

\bibitem{Cla93}
{\sc Clarke, C. J.~S.}
\newblock {\em {The analysis of space-time singularities}}.
\newblock No.~1. Cambridge University Press, 1993.

\bibitem{DafLuk17}
{\sc Dafermos, M., and Luk, J.}
\newblock {The interior of dynamical vacuum black holes I: The $C^0$-stability
  of the Kerr Cauchy horizon}.
\newblock {\em arXiv:1710.01772\/} (2017).

\bibitem{Fried74}
{\sc Friedrich, H.}
\newblock {Construction and properties of space-time b-boundaries}.
\newblock {\em General relativity and gravitation 5}, 6 (1974), 681--697.

\bibitem{GalLinSbi17}
{\sc Galloway, G., Ling, E., and Sbierski, J.}
\newblock {Timelike completeness as an obstruction to $C^0$-extensions}.
\newblock {\em Comm. Math. Phys. 359}, 3 (2018), 937--949.

\bibitem{GrafLing18}
{\sc Graf, M., and Ling, E.}
\newblock {Maximizers in Lipschitz spacetimes are either timelike or null}.
\newblock {\em Class. Quantum Grav. 35}, 8 (2018).

\bibitem{GraKuSaSt19}
{\sc Grant, J., Kunzinger, M., S\"amann, C., and Steinbauer, R.}
\newblock {The future is not always open}.
\newblock {\em Lett. Math. Phys. 110\/} (2020), 83--103.

\bibitem{HawkEllis}
{\sc Hawking, S., and Ellis, G.}
\newblock {\em The large scale structure of space-time}.
\newblock Cambridge University Press, 1973.

\bibitem{JBM19}
{\sc Manchak, J.}
\newblock private communication, 2019.

\bibitem{MinSuhr19}
{\sc Minguzzi, E., and Suhr, S.}
\newblock {Some regularity results for Lorentz-Finsler spaces}.
\newblock {\em Ann. Glob. Anal. Geom. 56\/} (2019), 597--611.

\bibitem{Mis67}
{\sc Misner, C.~W.}
\newblock {Taub-NUT space as a counterexample to almost anything}.
\newblock {\em Relativity theory and astrophysics 1\/} (1967), 160.

\bibitem{Ring}
{\sc Ringstr\"om, H.}
\newblock {\em The Cauchy Problem in General Relativity}.
\newblock European Mathematical Society, 2009.

\bibitem{SaeSt18}
{\sc S{\"a}mann, C., and Steinbauer, R.}
\newblock On geodesics in low regularity.
\newblock {\em Journal of Physics: Conference Series 968\/} (feb 2018), 012010.

\bibitem{Sbie13a}
{\sc Sbierski, J.}
\newblock {On the Existence of a Maximal Cauchy Development for the Einstein
  Equations: a Dezornification}.
\newblock {\em Ann. Henri Poincar\'e 17\/} (2016), 301--329.

\bibitem{Sbie18}
{\sc Sbierski, J.}
\newblock {On the proof of the $C^0$-inextendibility of the Schwarzschild
  spacetime}.
\newblock {\em J. Phys. Conf. Ser. 968\/} (2018).

\bibitem{Sbie15}
{\sc Sbierski, J.}
\newblock {The $C^0$-inextendibility of the Schwarzschild spacetime and the
  spacelike diameter in Lorentzian geometry}.
\newblock {\em J. Diff. Geom. 108}, 2 (2018), 319--378.

\bibitem{Sbie22a}
{\sc Sbierski, J.}
\newblock {On holonomy singularities in general relativity and the
  ${C_{\mathrm{loc}}^{0,1}}$-inextendibility of space-times}.
\newblock {\em Duke Mathematical Journal 171}, 14 (2022), 2881 -- 2942.

\bibitem{Schm71}
{\sc Schmidt, B.}
\newblock A new definition of singular points in general relativity.
\newblock {\em General relativity and gravitation 1}, 3 (1971), 269--280.

\bibitem{Schm73}
{\sc Schmidt, B.}
\newblock {The local $ b $-completeness of space-times}.
\newblock {\em Communications in Mathematical Physics 29}, 1 (1973), 49--54.

\bibitem{Stein70}
{\sc Stein, E.~M.}
\newblock {\em {Singular integrals and differentiability properties of
  functions}}, vol.~2.
\newblock Princeton University Press, 1970.

\bibitem{Whitney34}
{\sc Whitney, H.}
\newblock Functions differentiable on the boundaries of regions.
\newblock {\em Annals of Mathematics 35}, 3 (1934), 482--485.

\end{thebibliography}

\end{document}